%% file: 05-08-2026.tex
\documentclass[12pt,a4paper]{article}
\input{preamble}
\begin{document}

\title{A characterization of the von Neumann-Morgenstern stable set in matching markets\thanks{Thanks \ldots. We acknowledge financial support
from UNSL through grants 032016 and 030320, from Consejo Nacional
de Investigaciones Cient\'{\i}ficas y T\'{e}cnicas (CONICET) through grant
PIP 112-200801-00655, and from Agencia Nacional de Promoción Cient\'ifica y Tecnológica through grant PICT 2017-2355. Mauricio Lucero Quevedo also acknowledges financial support from the predoctoral contract PREP2023-001667, funded by the Spanish Ministry of Science, Innovation and Universities and co-funded by the European Social Fund Plus (ESF+), within the framework of research project PID2023-150472NB-I00.}}


\author{Mauricio Lucero Quevedo\thanks{Departament de Matemàtica Econòmica, Financera i Actuarial, Facultat d’Economia i Empresa, Universitat de Barcelona, Barcelona, Spain. Emails: \texttt{luceroqam@ub.edu} (M. Lucero Quevedo).}   \and Paola Manasero\thanks{Instituto de Matem\'{a}tica Aplicada San Luis (UNSL-CONICET) and Departamento de Matemática, Universidad Nacional de San
Luis, San Luis, Argentina. Emails: \texttt{pbmanasero@unsl.edu.ar} (P. Manasero), \texttt{paneme@unsl.edu.ar} (P. Neme) and \texttt{joviedo12@gmail.com} (J. Oviedo).} \and Pablo Neme\samethanks[3] \and Jorge Oviedo\samethanks[3]}

\date{\today}
\maketitle

\begin{abstract}

This paper characterizes the von Neumann--Morgenstern (vNM) stable set in one-to-one matching markets with strict preferences. We prove that the vNM stable set is unique and coincides with the core of an explicitly constructed reduced preference profile.

The reduction combines two complementary sources of information: the lattice structure of core matchings and the efficiency-adjusted deferred acceptance mechanism. A key ingredient is a generalization of the classical Decomposition Lemma, which extends the cycle decomposition of core matchings to arbitrary internally stable sets. These structural results identify the pairs that remain relevant for dominance-based stability and lead to the reduced profile.

Our characterization shows that a cooperative solution concept defined through dominance relations can be computed by solving a classical core problem on an explicitly constructed reduced preference profile. As a consequence, it yields a constructive procedure for computing the unique vNM stable set and a bounded path-to-stability result, showing that every individually rational matching reaches the core in at most three dominance steps.

\bigskip

\noindent \emph{JEL classification:} C78, D47.\bigskip

\noindent \emph{Keywords: von Neumann–Morgenstern stable sets, Decomposition Lemma, sub-preference profile, core} 

\end{abstract}

\section{Introduction}
In this paper, we study von Neumann--Morgenstern stability (hereafter, vNM stability) in one-to-one matching markets. We ask whether a solution concept defined through dominance relations among sets of outcomes admits a tractable and economically meaningful characterization in the classical one-to-one matching model. Our analysis answers this question affirmatively by showing that, in this setting, the vNM stable set approach admits a sharp structural description based on standard tools from matching theory. More generally, we show that a cooperative solution concept defined through dominance relations can be computed by solving a classical core problem on an explicitly constructed reduced preference profile.

We consider a standard one-to-one matching market with strict preferences on both sides of the market. Such environments provide a natural benchmark for our analysis: they combine a simple coalition structure, in which deviations arise through firms and workers rematching, with one of the best understood notions of stability in matching theory.

Unlike the core or pairwise stability, which identify individual matchings, a vNM stable set is a collection of matchings satisfying two complementary requirements. Internal stability requires that no matching in the set dominates another matching in the same set, while external stability requires that every matching outside the set be dominated by some matching in the set. Thus, a vNM stable set can be viewed as a solution set consisting of mutually non-comparable admissible outcomes, whereas every excluded outcome is dominated by an outcome that belongs to the set.

Since their introduction by \cite{von1947theory}, stable sets have been regarded as one of the fundamental solution concepts in cooperative game theory. Their appeal stems from defining stability through dominance relations among feasible outcomes rather than solely through the absence of blocking coalitions. At the same time, however, vNM stable sets have proved notoriously difficult to analyze. As emphasized by \citet{aumann1987game}, the recursive interaction between internal and external stability makes them considerably harder to characterize than most classical cooperative solution concepts. Consequently, questions concerning existence, uniqueness, structure, and computation remain highly dependent on the particular environment under consideration. Understanding the structure of vNM stable sets in matching markets therefore becomes a natural and important question.

Our main contribution is a complete characterization of the vNM stable set in one-to-one matching markets. We prove that the vNM stable set is unique and coincides with the core of an explicitly constructed reduced preference profile. Thus, a cooperative solution concept defined through dominance relations among sets of matchings admits a representation in terms of the familiar core of a reduced matching market.

The reduced preference profile is constructed through two complementary reduction procedures. The first exploits the lattice structure of the core. Starting from the firm-optimal and worker-optimal stable matchings, we recursively apply the reduction procedure of \cite{irving1986complexity}, recording the pairs that lose mutual acceptability during the construction of the lattice. The second reduction is based on the Efficiency-Adjusted Deferred Acceptance mechanism of \cite{kesten2010school}, which identifies the interrupter pairs responsible for preventing efficiency-improving reallocations. Combining these two reductions yields the reduced preference profile underlying our characterization.

A central technical contribution of the paper is a generalization of the classical Decomposition Lemma. While the original result describes the relationship between core matchings, we show that an analogous decomposition extends to any pair of matchings belonging to the same internally stable set. In particular, the differences between such matchings can be decomposed into disjoint preference cycles connecting firms and workers. Besides its role in proving the main theorem, this decomposition provides a new structural characterization of internally stable sets that may be of independent interest.

Beyond the characterization itself, our results have implications for the dynamics of stability. Using the structural properties of the reduced preference profile, we study the classical path-to-stability problem and show that every matching can be connected to a core matching through a remarkably short sequence of dominance steps. In particular, every matching outside the core reaches a core matching in at most three steps. This provides a constructive description of how stability can be attained from arbitrary initial outcomes and illustrates the dynamic content of our structural characterization.

Our results therefore establish a direct connection between dominance-based cooperative solution concepts and classical matching theory. Besides yielding existence, uniqueness, and a constructive representation of the vNM stable set, they also provide new insight into the dynamics of matching markets by establishing a short path from arbitrary matchings to core outcomes.

\subsection*{Related literature}

These contributions place our paper at the intersection of several strands of the matching literature. We begin with papers that study the existence, structure, and computation of the classical vNM (vNM) stable set in one-to-one matching markets, which are the closest to our contribution.

One of the earliest contributions in this direction is \cite{ehlers2007neumann}, who analyzes the properties of vNM stable sets in one-to-one matching markets and shows that they may differ substantially from the core. In particular, vNM stable sets may contain matchings outside the core and need not admit a simple characterization in terms of standard stability concepts.

Building on this line of research, \cite{faenza2022legal} study the legal set solution concept introduced by \cite{morrill2016making} and prove that the legal set coincides with the core of a suitably reduced instance. In one-to-one matching markets, they further show that, under an appropriate dominance relation, the legal set coincides with the vNM stable set. Outside this setting, however, the two solution concepts are no longer equivalent. Both papers establish representation results through reduced matching instances. However, the underlying solution concepts, the reduction procedures, and the mathematical arguments are fundamentally different. While their reduction is driven by legality, ours is derived directly from the dominance structure underlying vNM stability through two complementary reduction procedures.

More recently, \cite{faenza2025neumann} introduce the notion of internally closed sets of matchings and investigate their relationship with vNM stability in both marriage and roommate markets. They provide new structural characterizations based on generalized rotations and prove polynomial-time algorithms for internally closed sets in the marriage model together with hardness results for the roommate model. Their work considerably broadens the structural understanding of vNM-type solution concepts. In contrast, our paper focuses on the classical vNM stable set itself and establishes a representation theorem showing that the unique vNM stable set coincides with the core of a suitably reduced preference profile.

A second strand of the literature develops algorithmic approaches to vNM stability. \cite{wako2010polynomial} proves existence and uniqueness of the vNM stable set in the marriage model and develops a polynomial-time algorithm that iteratively constructs a modified preference profile whose core coincides with the vNM stable set. In contrast, our characterization is not algorithmic in nature. Rather than repeatedly modifying the preference profile according to successive sets of undominated matchings, we identify two explicit structural reductions that directly yield a reduced preference profile whose core is precisely the unique vNM stable set.

Our analysis is also closely related to the literature on rotations, lattice structures, and preference cycles. Classical contributions by \cite{irving1986complexity} and \cite{gusfield1987three}, together with the exposition of \cite{roth1992two}, use cyclic structures to characterize the lattice of stable matchings and to develop efficient algorithms. These ideas were later extended to many-to-one and many-to-many markets with responsive and substitutable preferences \citep{cheng2008unified,bansal2007polynomial,bonifacio2022cycles}. Preference cycles have also proved useful in the study of strongly stable random matchings \citep{neme2019characterization,neme2021many} and, more recently, in the analysis of vacancy chains and re-stabilization processes \citep{bonifacio2026counting}. Our Decomposition Lemma extends this line of research by showing that the classical decomposition of differences between stable matchings naturally extends to arbitrary internally stable sets. The resulting IS-cycles constitute one of the key structural ingredients of our characterization of the unique vNM stable set.

A second ingredient of our reduction is closely related to the literature on efficiency-adjusted matching mechanisms. The Efficiency-Adjusted Deferred Acceptance (EADA) mechanism introduced by \cite{kesten2010school} eliminates interrupter pairs while preserving Pareto improvements over stable outcomes. More generally, efficiency-motivated reductions have also been studied through alternative notions of stability such as essential stability \citep{troyan2020essentially} and weak stability \citep{tang2021weak}. Unlike these papers, we do not weaken the stability concept. Instead, we exploit the structural properties of EADA mechanism to identify one of the two reductions leading to the reduced preference profile while preserving the classical notion of vNM stability.

Our analysis is also related to the literature on convergence and paths to stability in matching markets. Beginning with \cite{roth1990random}, several papers study whether arbitrary initial matchings can be transformed into stable matchings through sequences of blocking improvements. Related convergence results have subsequently been established for one-sided and many-to-one matching environments, among others by \cite{chung2000}, \cite{klaus2005}, \cite{kojima2008}, \cite{eriksson2008}, and \cite{diamantoudi2004}. More generally, structural characterizations have also been used to study convergence to stability in coalition formation games \citep{bonifacio2024absorbing}. Our structural characterization of the vNM stable set yields a bounded path-to-stability result: every individually rational matching reaches a core matching through at most three coalitional dominance steps.

The rest of the paper is organized as follows. Section~\ref{seccion Preliminares} introduces the matching model, the dominance relation, the notion of vNM stability, and the basic concepts used throughout the paper. Section~\ref{seccion Construccion del VNM stable set} develops the main theoretical results. In particular, it establishes the Decomposition Lemma for internally stable sets, introduces the reduced preference profiles that underlie our analysis, constructs the candidate vNM stable set through the core of the reduced profile, and provides the characterization of the unique vNM stable set. Section~\ref{seccion path to core} studies the path-to-stability problem. Using the structural results established for the vNM stable set, we show how an arbitrary matching can be connected to a core matching through a sequence of dominance steps. Section~\ref{conclusiones} concludes and discusses directions for future research. Appendix~\ref{apendice pruebas} contains all omitted proofs, while Appendix~\ref{apendice descripcion EADAM} presents the adaptation of the EADA mechanism to our environment and summarizes the properties used in the paper.

\section{Model and preliminaries}\label{seccion Preliminares}
 We consider one-to-one matching markets where there are two disjoint sets of agents: the set of \textit{firms} $F$ and the set of \textit{workers} $W$. Each firm \(f\in F\) has a strict preference ordering \(P_f\) over
\(W\cup\{\emptyset\}\), and each worker \(w\in W\) has a strict preference
ordering \(P_w\) over \(F\cup\{\emptyset\}\).
A preference profile is denoted by $P=(P_f)_{f\in F}\times(P_w)_{w\in W}.$
  Since the sets $F$ and $W$ are kept fixed throughout the paper, we often identify the market $P$ with the preference profile $P$.  Given an agent $a \in F\cup W$, an agent
 $b$ in the opposite side of the market is \textbf{acceptable for  $\boldsymbol{a}$ under $\boldsymbol{P}$} if $bP_a\emptyset$. A pair $(w, f) \in W \times F$ is an \textbf{acceptable pair under $\boldsymbol{P}$ } if $f$ is acceptable for $w$ under $P$ and
 $w$ is acceptable for $f$ under $P$. The preference relation $P_a$ is represented by the ordered list of its acceptable agents (from most to least preferred).\footnote {For instance, $P_f:w_1,w_4,w_2$ indicates that $w_1P_fw_4P_fw_2P_f\emptyset$ and $P_w:f_1,f_2,f_3$ indicates that $f_1P_wf_2P_wf_3P_w\emptyset$.}

A \textbf{matching} $\mu$ is a function from $F\cup W$ into
$F\cup W\cup\{\emptyset\}$ such that:
(i) $\mu(w)\in F\cup\{\emptyset\}$ for every $w\in W$;
(ii) $\mu(f)\in W\cup\{\emptyset\}$ for every $f\in F$; and
(iii) $\mu(f)=w$ if and only if $\mu(w)=f$ for every $(f,w)\in F\times W$. The \textbf{set of matching} is denoted by $ \boldsymbol{\mathcal{M}}.$

Agent $a\in F\cup W$ is \textbf{matched} if $\mu(a) \neq \emptyset$, otherwise she is \textbf{unmatched}. For the following definitions, fix a preference profile $P$.
A matching $\mu$ is \textbf{blocked by agent $\boldsymbol{a}$} if $\emptyset P_a \mu(a)$. A matching is  \textbf{individually rational} if it is not blocked by any individual agent. The \textbf{set of individually rational matching} is denoted by $\boldsymbol{\mathcal{IR}}.$

Given a preference profile $P$ and two  matchings $\mu$ and  $\mu'$,  we write  $\boldsymbol{\mu \succ_W \mu'}$ whenever $\mu(w) P_w \mu'(w)$ for each $w\in W$, and we write  $\boldsymbol{\mu \succeq_W \mu'}$ whenever $\mu(w) R_w \mu'(w)$ for each $w\in W$, and there is $w'\in W$ such that $\mu(w') P_{w'}\mu'(w')$. Similarly, we write $\boldsymbol{\mu \succ_F\mu'}$ and $\boldsymbol{\mu \succeq_F\mu'}$. Note that $\succeq_W$ and $\succeq_F$, which represent the common preferences of the firms and workers, respectively, are partial orders over the set of matchings.


\begin{definition}[Dominance relation]
Let $\mu, \mu' \in \mathcal{M}$ and $S \subseteq F \cup W$. We say that \textbf{\boldmath{$\mu$} dominates \boldmath{$\mu'$} via \boldmath{$S$}}, denoted by $\mu \succ_{S} \mu'$, if:
\begin{enumerate}
    \item $\mu(S) = S$;
    \item $\mu(a) P_{a} \mu'(a)$ for all $a \in S$.
\end{enumerate}
In the case where there is a coalition \(S\) such that \(\mu\) dominates \(\mu'\) via \(S\), 
we refer to \(S\) as a blocking coalition.
\end{definition}
The dominance relation introduced above provides a natural way to define the 
core of the market. In particular, the core consists precisely of those 
matchings that cannot be dominated through a deviation by any coalition.
\begin{definition}\label{definicion core}

The \textbf{core} of a market $P$ is the set of undominated matchings, 
denoted by $C(P)$. That is,
    
\begin{center}
$C(P) = \{\mu \in \mathcal{M} \mid \text{there is no coalition } S \subseteq F \cup W 
\text{ and no matching } \mu' \in \mathcal{M} \text{ such that } \mu' \succ_S \mu \}$.
\end{center}

\end{definition}

It is well known from \cite{gale1962college} that the core of every one-to-one matching market is nonempty. Therefore, $C(P)\neq\emptyset$ throughout the paper.

A pair of agents, one from each side of the market $P$, is said to be a (corewise) \textbf{stable pair of $\boldsymbol{P}$} if there is a core matching that matches them.

After introducing the core, it is useful to recall a key structural property of this set.  
\cite{knuth1976marriages} shows that the collection of core matchings forms a dual lattice under the partial orders $\succeq_F$ and $\succeq_W$.  
In particular, the lattice admits two extremal elements: the \textbf{firm-optimal core matching} with respect to $\succeq_F$, denoted by $\boldsymbol{\mu_F}$, and the \textbf{worker-optimal core matching} with respect to $\succeq_W$, denoted by $\boldsymbol{\mu_W}$.  
Moreover, these extremal matchings are also worst for the opposite side: $\mu_F$ is the \textbf{worker-pessimal core matching} under $\succeq_W$, while $\mu_W$ is the \textbf{firm-pessimal core matching} under $\succeq_F$ (see \citealp{roth1992two} for further details).
This lattice structure highlights the rich geometry of the core and clarifies how core outcomes can be partially ordered according to the preferences of each side of the market.  
Building on this dominance perspective, we now turn to a broader solution concept that evaluates sets of matchings rather than individual ones.

The dominance relation introduced above naturally leads to the two requirements that define the solution concept studied in this paper: the vNM stability notion. 
Accordingly, we consider subsets of individually rational matchings $V$, viewed as candidates for vNM stable sets.

The first requirement is \textbf{internal stability}: no matching in $V$ should
dominate another matching in the same set. Formally, $V$ satisfies internal
stability if there exist no matchings $\mu,\mu' \in V$ and no coalition
$S \subseteq F \cup W$ such that $\mu \succ_S \mu'$.
Internal stability rules out internal conflicts within the set: no matching in
$V$ can be justified as an improvement over another matching in $V$ from the
perspective of some coalition. We denote by $\boldsymbol{V_{IS}(P)}$ the collection of all sets that satisfy internal stability.

The second requirement is \textbf{external stability}: every matching outside
the set must be dominated by at least one matching in $V$. Formally, $V$
satisfies external stability if, for every $\mu' \in \mathcal{M}\setminus V$,
there exist a coalition $S \subseteq F \cup W$ and a matching $\mu \in V$ such that
$\mu \succ_S \mu'$. Hence, external stability ensures that whenever a matching
lies outside the set, the set already contains an alternative that some coalition
strictly prefers to it. We denote by $\boldsymbol{V_{ES}(P)}$ the collection of all sets that satisfy external stability.

\begin{definition}\label{def:vNM}
A set $V \subseteq \mathcal{IR}$ is a \textbf{von Neumann--Morgenstern stable
set (\textbf{vNM stable set})} if it satisfies both internal and external stability.
\end{definition}

Let $\mathrm{vNM}(P)$ denote the collection of all vNM stable sets of market $P$. A fundamental observation established by \cite{ehlers2007neumann} is that every vNM stable set contains the core. Formally,

\begin{remark}\label{remark de core en von neuman}
    $C(P) \subseteq V$ for every stable set $V \in \mathrm{vNM}(P)$.
\end{remark}

By the definition of internal stability, it is immediate that the core of a market satisfies this property. 
This is not the case for external stability; however, there are markets in which the core satisfies it as well. 
The following example illustrates that the core is sometimes a vNM stable set. 

\begin{example}
    
Consider the matching market $(F,W,P)$, with $F=\{f_1,f_2,f_3\}$ and $W=\{w_1,w_2,w_3\}$, and suppose that preferences are given by

\begin{center}
\begin{tabular}{l l @{\hspace{1cm}} l l}
$P_{f_{1}}:$ & $w_{2}$, $w_{1}$, $w_{3}$ & 
$P_{w_{1}}:$ & $f_{1}$, $f_{3}$, $f_{2}$ \\
$P_{f_{2}}:$ & $w_{1}$, $w_{3}$, $w_{2}$ & 
$P_{w_{2}}:$ & $f_{3}$, $f_{1}$, $f_{2}$ \\
$P_{f_{3}}:$ & $w_{1}$, $w_{2}$, $w_{3}$ & 
$P_{w_{3}}:$ & $f_{1}$, $f_{3}$, $f_{2}$ \\
\end{tabular}
\end{center}
Since all pairs are mutually acceptable, every matching is individually rational. The set of all matchings is
\[
\mu_F=\begin{pmatrix}
f_1 & f_2 & f_3\\
w_2 & w_3 & w_1
\end{pmatrix},\qquad
\mu_W=\begin{pmatrix}
f_1 & f_2 & f_3\\
w_1 & w_3 & w_2
\end{pmatrix},\qquad
\mu_1=\begin{pmatrix}
f_1 & f_2 & f_3\\
w_1 & w_2 & w_3
\end{pmatrix},
\]
\[
\mu_2=\begin{pmatrix}
f_1 & f_2 & f_3\\
w_2 & w_1 & w_3
\end{pmatrix},\qquad
\mu_3=\begin{pmatrix}
f_1 & f_2 & f_3\\
w_3 & w_1 & w_2
\end{pmatrix},\qquad
\mu_4=\begin{pmatrix}
f_1 & f_2 & f_3\\
w_3 & w_2 & w_1
\end{pmatrix}.
\]
Then, the core is given by
$C(P)=\{\mu_F,\mu_W\}$%
\footnote{The firm-optimal and worker-optimal core matchings are obtained by the Deferred Acceptance algorithm of \cite{gale1962college}, with firms and workers proposing, respectively. The complete set of core matchings is then obtained using the algorithm of \cite{gusfield1989stable}.}

Internal stability follows immediately, since neither $\mu_F$ nor $\mu_W$ dominates the other. Moreover, the remaining matchings are dominated by elements of the core:
\[
\mu_F \succ_{\{f_1,w_2\}} \mu_1,\qquad
\mu_W \succ_{\{f_3,w_2\}} \mu_2,\qquad
\mu_W \succ_{\{f_1,w_1\}} \mu_3,\qquad
\mu_W \succ_{\{f_1,w_1\}} \mu_4.
\]
Therefore, external stability also holds. We conclude that $C(P) \in vNM(P)$.

\end{example}

Before proceeding with the general analysis, it is useful to illustrate that the core of a matching market does not necessarily constitute a vNM stable set. Although the core is internally stable by definition---since no core matching can dominate another core matching---it may fail to satisfy external stability. More precisely, there may exist matchings outside the core that are not dominated by any matching in the core. In such cases, the core cannot itself be a vNM stable set. The following example, adapted from \cite{wako2008note}, illustrates this possibility.
\begin{example}[\citealp{wako2008note}]\label{ex:wako}

Let $F=\{f_1,f_2,f_3,f_4\}$ be the set of firms and $W=\{w_1,w_2,w_3,w_4\}$ the set of workers. The preference profile $P$ is given by

\begin{center}
\begin{tabular}{l l @{\hspace{1cm}} l l}
$P_{f_1}:$ & $w_2, w_4, w_1, w_3$ &
$P_{w_1}:$ & $f_1, f_2, f_3, f_4$ \\
$P_{f_2}:$ & $w_3, w_1, w_2, w_4$ &
$P_{w_2}:$ & $f_2, f_4, f_3, f_1$ \\
$P_{f_3}:$ & $w_3, w_2, w_1, w_4$ &
$P_{w_3}:$ & $f_4, f_2, f_3, f_1$ \\
$P_{f_4}:$ & $w_4, w_2, w_3, w_1$ &
$P_{w_4}:$ & $f_3, f_1, f_4, f_2$ \\
\end{tabular}
\end{center}

The core of this market is

$$
C(P)=\{\mu_F,\mu_1,\mu_W\},
$$

where

$$
\mu_F=
\begin{pmatrix}
f_1 & f_2 & f_3 & f_4\\
w_4 & w_3 & w_1 & w_2
\end{pmatrix},
\qquad
\mu_1=
\begin{pmatrix}
f_1 & f_2 & f_3 & f_4\\
w_1 & w_3 & w_4 & w_2
\end{pmatrix},
\qquad
\mu_W=
\begin{pmatrix}
f_1 & f_2 & f_3 & f_4\\
w_1 & w_2 & w_4 & w_3
\end{pmatrix}.
$$

Now consider the matching

$$
\widetilde{\mu}=
\begin{pmatrix}
f_1 & f_2 & f_3 & f_4\\
w_4 & w_2 & w_1 & w_3
\end{pmatrix}.
$$

Let $\mu$ be the matching obtained from $\widetilde{\mu}$ by matching $f_2$ with $w_1$, leaving their former partners unmatched, and keeping all other pairs unchanged. Thus,
$$
\mu=
\begin{pmatrix}
f_1 & f_2 & f_3 & f_4\\
w_4 & w_1 & \emptyset & w_3
\end{pmatrix}.
$$
Then, $\mu(f_2)=w_1 \,P_{f_2}\, w_2=\widetilde{\mu}(f_2),$ and $\mu(w_1)=f_2 \,P_{w_1}\, f_3=\widetilde{\mu}(w_1).$
Therefore, $\mu \succ_{\{f_2,w_1\}} \widetilde{\mu},$
so $\widetilde{\mu}\notin C(P)$.
Moreover, $(f_2,w_1)$ is an unstable pair. Hence, no core matching can match $f_2$ with $w_1$, and therefore no core matching can dominate $\widetilde{\mu}$ via the coalition $\{f_2,w_1 \}$. In fact, no core matching dominates $\widetilde{\mu}$. It follows that $C(P)$ is not externally stable, and thus $C(P)$ is not a vNM stable set.

\end{example}

The previous example illustrates that the core need not satisfy external stability and therefore may fail to constitute a vNM stable set. The remainder of the paper is devoted to characterizing the unique enlargement of the core that restores external stability while preserving internal stability.

\section{Constructing the von Neumann--Morgenstern stable set}\label{seccion Construccion del VNM stable set}

The purpose of this section is to construct the vNM stable set. We begin by introducing a generalized notion of preference cycles, building on the classical construction of \cite{irving1986complexity}. These cycles allow us to define suitable reduced preference profiles whose cores will be used in the characterization established in Subsection~\ref{seccion caracterizacion de VNM}.

We first establish a generalization of the classical Decomposition Lemma from matching theory \citep[see][for its formulation in the context of core matchings]{roth1992two}.\footnote{Related generalizations of the Decomposition Lemma for many-to-one matching environments are provided by \cite{ehlers2020legal}.} While the original result describes the structure of differences between core matchings, our version applies to any pair of matchings belonging to the same internally stable set. Although this result is fundamental for the construction and characterization developed in this paper, we believe that the generalization is of independent interest, as it provides a structural description of internally stable sets that extends beyond the analysis of the core.

The intuition is simple. When two matchings belong to the same internally stable set, differences between them must be balanced across the two sides of the market. In particular, if some agents are strictly better off under one matching, then agents on the opposite side must be strictly worse off, since otherwise one matching would dominate the other. To formalize this result, let
\[
F(\mu',\mu)=\{f\in F:\mu'(f)P_f\mu(f)\}
\quad\text{and}\quad
W(\mu',\mu)=\{w\in W:\mu'(w)P_w\mu(w)\}
\]
denote the sets of firms and workers, respectively, that strictly prefer their match under \(\mu\) to their match under \(\mu'\). The lemma below formalizes this observation for internally stable sets.

\begin{lemma}[Decomposition Lemma]\label{lema descomposicion para estables}
    Let $P$ be a one-to-one matching model with strict preferences, and let $V_{IS} \in V_{IS}(P)$. If $\mu, \mu' \in V_{IS}$, then $\mu$ and $\mu'$ map $F(\mu',\mu)$ onto $W(\mu,\mu')$ and $F(\mu,\mu')$ onto $W(\mu',\mu)$.
\end{lemma}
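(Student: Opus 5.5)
The plan is to run the classical counting argument for the Decomposition Lemma on core matchings. The only place stability was used there is to rule out a blocking pair. Here that step is replaced by internal stability of $V_{IS}$ applied to two-agent coalitions. I will also use that $V_{IS}\subseteq\mathcal{IR}$, which holds because vNM candidate sets are taken inside $\mathcal{IR}$, and that $F$ and $W$ are finite.

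\emph{Step 1: $\mu'(F(\mu',\mu))\subseteq W(\mu,\mu')$.} Take $f\in F(\mu',\mu)$, so $\mu'(f)P_f\mu(f)$. Since $\mu$ is individually rational, $\mu(f)R_f\emptyset$, hence $\mu'(f)=w$ is a worker and $w\neq\mu(f)$. In particular $\mu(w)\neq f$, so strictness of $P_w$ gives either $\mu(w)P_w f$ or $fP_w\mu(w)$. In the second case, $S=\{f,w\}$ satisfies $\mu'(S)=S$, and both agents strictly prefer $\mu'$ to $\mu$. Then $\mu'\succ_S\mu$ with $\mu,\mu'\in V_{IS}$, which contradicts internal stability. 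Hence $\mu(w)P_w\mu'(w)$, that is, $w\in W(\mu,\mu')$.

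\emph{Step 2: $\mu(W(\mu,\mu'))\subseteq F(\mu',\mu)$.} This is symmetric. For $w\in W(\mu,\mu')$, individual rationality of $\mu'$ forces $\mu(w)=f$ to be a firm with $f\neq\mu'(w)$. If $\mu(f)P_f\mu'(f)$, then $\mu\succ_{\{f,w\}}\mu'$, again contradicting internal stability. Therefore $\mu'(f)P_f\mu(f)$, that is, $f\in F(\mu',\mu)$.

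\emph{Step 3: counting.} Matchings are injective on matched agents, so Steps 1 and 2 give $|F(\mu',\mu)|\le|W(\mu,\mu')|\le|F(\mu',\mu)|$. Both inclusions are therefore bijections. Thus $\mu'$ maps $F(\mu',\mu)$ onto $W(\mu,\mu')$, and $\mu$ maps $W(\mu,\mu')$ onto $F(\mu',\mu)$. Because $\mu$ is an involution on matched agents, the latter is equivalent to $\mu$ mapping $F(\mu',\mu)$ onto $W(\mu,\mu')$. The second assertion, about $F(\mu,\mu')$ and $W(\mu',\mu)$, follows by exchanging the roles of $\mu$ and $\mu'$, which is legitimate because the hypothesis is symmetric in the two matchings.

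The main point requiring care is Step 1: verifying that the partner is a genuine agent rather than $\emptyset$, and that the comparison on the other side is strict. Both rely on individual rationality of the elements of $V_{IS}$, so I would make explicit that internally stable sets are taken inside $\mathcal{IR}$. The rest is bookkeeping.
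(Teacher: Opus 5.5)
Your proof is correct and follows the paper's argument: the two inclusions $\mu'(F(\mu',\mu))\subseteq W(\mu,\mu')$ and $\mu(W(\mu,\mu'))\subseteq F(\mu',\mu)$ are obtained from internal stability against two-agent coalitions, and injectivity plus counting then gives the bijections. You also spell out three details the paper leaves implicit: individual rationality rules out $\emptyset$ partners, strictness yields the opposite strict comparison, and the involution and role-swap arguments complete both ``onto'' claims.
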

\begin{proof}
The proof is relegated to Appendix \ref{apendice pruebas}.
\end{proof}

The Decomposition Lemma reveals that differences between two matchings in an internally stable set are organized through a precise correspondence between agents on the two sides of the market. 
This structure suggests that transitions from one matching to another can be understood as sequences of mutually linked deviations. 
Motivated by this observation, we introduce a notion of cycles that captures how agents are reallocated when moving between matchings in $V_{IS}(P)$.

\begin{definition}
Let $V_{IS} \in V_{IS}(P)$ and let $\mu', \mu \in V_{IS}$. We say that $O^{\mu',\mu}$ is an \textbf{IS-cycle between $\boldsymbol{\mu'$ and $\mu}$} if there is a set of firms $\{f_1,\ldots,f_k\}\subseteq F(\mu',\mu)$ and a set of workers $\{w_0,\ldots,w_{k-1}\}$ such that $\mu(f_i)=w_{i-1}$ for all $i=1,\ldots,k$, and $\mu'(f_j) = w_j$ for $j=1,\ldots,k-1$, with $\mu'(f_k) = w_0$.\footnote{The phrase ``between $\mu'$ and $\mu$'' is omitted when the context is clear, as the notation $O^{\mu',\mu}$ already indicates the pair of matchings involved.}
\end{definition}
The IS-cycle $O^{\mu',\mu}$ is represented by the ordered sequence of workers and firms $$O^{\mu',\mu} = ( w_0, f_1, w_1, f_2, \ldots, w_{k-1}, f_k).$$
An IS-cycle describes a closed chain of reallocations connecting $\mu$ and $\mu'$. 
Starting from a worker who changes partner when moving from $\mu$ to $\mu'$, each firm in the cycle is matched under $\mu$ with the previous worker in the sequence and under $\mu'$ with the next one, until the sequence closes. 
Moreover, the workers involved in the cycle necessarily belong to $W(\mu,\mu')$, so the cycle isolates a reallocation in which firms that gain under $\mu'$ are paired with workers who lose relative to $\mu$. In other words,  $O^{\mu',\mu} \cap F \subseteq F(\mu', \mu)$ and $O^{\mu',\mu} \cap W \subseteq W(\mu, \mu')$.

The next result shows that such cycles always arise whenever the two matchings differ on the firm side. 
In particular, any nonempty set $F(\mu',\mu)$ generates at least one IS-cycle, which identifies a fundamental building block of the differences between the two matchings.

\begin{proposition}\label{existencia de un IS-cycle}
Let $V_{IS} \in V_{IS}(P)$ and let $\mu', \mu \in V_{IS}$. Whenever $F(\mu', \mu) \neq \emptyset$, there is an IS-cycle $O^{\mu',\mu}.$ 
\end{proposition}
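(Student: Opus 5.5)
The plan is to build the cycle by following the bijections supplied by the Decomposition Lemma (Lemma~\ref{lema descomposicion para estables}) and to use finiteness to force it to close. Since $\mu,\mu'\in V_{IS}$, the lemma gives that $\mu'$ maps $F(\mu',\mu)$ onto $W(\mu,\mu')$ and $\mu$ maps $W(\mu,\mu')$ onto $F(\mu',\mu)$. Matchings are injective on matched agents, and every $f\in F(\mu',\mu)$ is matched under $\mu'$: if $\mu'(f)=\emptyset$ then $\emptyset P_f\mu(f)$, contradicting individual rationality of $\mu\in V_{IS}\subseteq\mathcal{IR}$. Hence $\mu'$ restricts to a bijection $F(\mu',\mu)\to W(\mu,\mu')$, and $\mu$ restricts to a bijection $W(\mu,\mu')\to F(\mu',\mu)$. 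In particular every worker in $W(\mu,\mu')$ is matched under $\mu$ to a firm in $F(\mu',\mu)$.

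Consider the composite map $\sigma:F(\mu',\mu)\to F(\mu',\mu)$ given by $\sigma(f)=\mu(\mu'(f))$. As a composition of two bijections it is a permutation of the finite nonempty set $F(\mu',\mu)$. Pick any $f_1\in F(\mu',\mu)$ and take its orbit $f_1,\sigma(f_1),\dots$. By finiteness it returns to $f_1$ after some $k\ge 1$ steps, giving distinct firms $f_1,\dots,f_k$ with $\sigma(f_k)=f_1$.

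To match the indexing of the definition, set $w_j=\mu'(f_j)$ for $j=1,\dots,k-1$ and $w_0=\mu'(f_k)$, and define the next firm along the cycle so that $\mu(f_{j+1})=w_j$. It is cleaner to run the orbit of $\sigma^{-1}(f)=\mu'(\mu(f))$, or equivalently to relabel the orbit, so that $\mu(f_i)=w_{i-1}$ and $\mu'(f_i)=w_i$ with indices mod $k$. The workers are distinct because $\mu'$ is injective, and they lie in $W(\mu,\mu')$. This yields $O^{\mu',\mu}=(w_0,f_1,\dots,w_{k-1},f_k)$.

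The only delicate point is the one handled first: the ``onto'' statement of the lemma must really yield bijections with all relevant agents matched, so that $\sigma$ is well defined and never hits $\emptyset$. This needs individual rationality for the firms, and the lemma's surjectivity onto $F(\mu',\mu)$ for the workers' $\mu$-partners. The remaining step is index bookkeeping, checking that the orientation of the orbit agrees with $\mu(f_i)=w_{i-1}$ and $\mu'(f_i)=w_i$.
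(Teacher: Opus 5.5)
Your argument is correct and is essentially the paper's: the paper also follows arcs $f\mapsto\mu'(f)$ and $w\mapsto\mu(w)$, using the Decomposition Lemma to guarantee that every node has an outgoing arc and finiteness to force closure, and your permutation $\sigma=\mu\circ\mu'$ on $F(\mu',\mu)$ is a tidier (and slightly stronger, since the cycle passes through any prescribed firm) version of that digraph argument. On the bookkeeping you flagged, the forward orbit of $\sigma$ already has the required orientation, because $\sigma(f_i)=f_{i+1}$ gives $\mu(f_{i+1})=\mu'(f_i)=w_i$ (and $\mu(f_1)=\mu'(f_k)=w_0$), whereas passing to $\sigma^{-1}$ would swap the roles of $\mu$ and $\mu'$ unless you also reverse the labels.
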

\begin{proof}
 The proof is relegated to Appendix \ref{apendice pruebas}.
\end{proof}

We now strengthen the previous existence result by showing that the differences between $\mu$ and $\mu'$ can be fully decomposed into disjoint IS-cycles. 
Rather than appearing in an arbitrary way, firms that prefer $\mu'$ to $\mu$ are organized into independent cyclic structures, each capturing a self-contained reallocation between the two matchings. 
Consequently, the collection of IS-cycles provides a complete description of the set $F(\mu',\mu)$ (and, symmetrically, of $W(\mu,\mu')$).

Returning to Example \ref{ex:wako}, we observe that there are two IS-cycles between  \(\mu_F\) and \(\mu_W\):
\[
O_1^{\mu_F,\mu_W}=\{w_3,f_2,w_2,f_4\}
\]
and
\[
O_2^{\mu_F,\mu_W}=\{w_4,f_1,w_1,f_3\}.
\]
The next theorem formalizes this decomposition.

\begin{theorem}\label{Teorema descomposicion ciclos disjuntos}
 Let $V_{IS}\in V_{IS}(P)$ and let $\mu',\mu\in V_{IS}$.
Then, there is a collection of pairwise disjoint IS-cycles
$O_1^{\mu',\mu},\ldots,O_k^{\mu',\mu}$ such that
$$
\left(
\bigcup_{i=1}^{k}O_i^{\mu',\mu}
\right)\cap F
=
F(\mu',\mu).
$$
\end{theorem}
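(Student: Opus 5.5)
The plan is to use the Decomposition Lemma (Lemma~\ref{lema descomposicion para estables}) to build a permutation on $F(\mu',\mu)$, and then read the IS-cycles off as the orbits of that permutation.

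\emph{Step 1: the map $\sigma$.} By Lemma~\ref{lema descomposicion para estables}, $\mu$ and $\mu'$ both map $F(\mu',\mu)$ onto $W(\mu,\mu')$. Since matchings are injective on matched agents, both restrictions are bijections $F(\mu',\mu)\to W(\mu,\mu')$. In particular, every $f\in F(\mu',\mu)$ is matched under both $\mu$ and $\mu'$. This also follows directly: $\mu'(f)P_f\mu(f)$ forces $\mu'(f)\neq\emptyset$, and the onto statement forces $\mu(f)\in W$. Define
\[
\sigma:F(\mu',\mu)\to F(\mu',\mu),\qquad \sigma(f)=\mu\bigl(\mu'(f)\bigr).
\]
This is well defined because $\mu'(f)\in W(\mu,\mu')=\mu(F(\mu',\mu))$. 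As a composition of two bijections, $\sigma$ is a bijection of the finite set $F(\mu',\mu)$, that is, a permutation.

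\emph{Step 2: orbits give IS-cycles.} Decompose $\sigma$ into disjoint orbits $C_1,\dots,C_k$, which partition $F(\mu',\mu)$. For an orbit $(g,\sigma(g),\dots,\sigma^{m-1}(g))$, relabel the firms so that $f_1=\sigma^{m-1}(g)$, $f_{i+1}=\sigma(f_i)$, and $f_m=g$. Then set $w_0=\mu(f_1)$ and $w_i=\mu'(f_i)=\mu(f_{i+1})$ for $i=1,\dots,m-1$. The closing condition is $\mu'(f_m)=\mu(\sigma(f_m))=\mu(f_1)=w_0$. These are exactly the conditions in the definition of an IS-cycle $O^{\mu',\mu}=(w_0,f_1,\dots,w_{m-1},f_m)$. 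Indexing is the only fiddly point: I would fix $w_{i}$ as the $\mu'$-partner of $f_i$ and check that it equals the $\mu$-partner of $f_{i+1}$.

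\emph{Step 3: disjointness and union.} The firm sets of the cycles are the orbits, so they are pairwise disjoint and their union is $F(\mu',\mu)$. The worker sets are disjoint as well, because each worker in a cycle is $\mu'(f)$ for a firm $f$ in that orbit, and $\mu'$ is injective. Together these give $\bigl(\bigcup_i O_i^{\mu',\mu}\bigr)\cap F=F(\mu',\mu)$. If $F(\mu',\mu)=\emptyset$, the statement holds with the empty collection.

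The main obstacle is Step 1: verifying that $\sigma$ really maps into $F(\mu',\mu)$ and is a bijection. This depends entirely on the two onto statements of the lemma together with injectivity of matchings. Once that is secured, the rest is the standard cycle decomposition of a permutation. Proposition~\ref{existencia de un IS-cycle} is not needed, although an alternative route would apply it repeatedly, removing one cycle at a time.
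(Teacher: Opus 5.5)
Your argument is correct. It rests on the same chain the paper follows, $f\mapsto\mu'(f)\mapsto\mu(\mu'(f))$, but you organize it differently.

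\textbf{The paper's route.} It proceeds one cycle at a time. It takes a first IS-cycle from Proposition~\ref{existencia de un IS-cycle} and then picks a firm in $F(\mu',\mu)$ not yet covered. It follows the alternating chain from that firm, using injectivity of $\mu'$ and of $\mu$ to show the chain never enters previously built cycles. It invokes finiteness to say the chain closes, and repeats until $F(\mu',\mu)$ is exhausted.

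\textbf{Your route.} You package the same chain as the permutation $\sigma=\mu\circ\mu'$ of $F(\mu',\mu)$ and read off all cycles at once as its orbits. This buys several things:
\begin{itemize}
\item disjointness and the covering property come for free, because orbits partition the set;
\item the closing step becomes rigorous, because bijectivity forces the chain back to its starting firm, whereas finiteness alone only gives an eventual repetition;
\item you no longer depend on Proposition~\ref{existencia de un IS-cycle};
\item the case $F(\mu',\mu)=\emptyset$ is handled explicitly.
\end{itemize}
The paper's version is more constructive in flavor and mirrors the classical one-cycle-at-a-time proofs of the Decomposition Lemma. Both proofs rest on the same two ingredients: the onto statements of Lemma~\ref{lema descomposicion para estables} and injectivity of matchings.

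\textbf{A small indexing slip.} With $f_1=\sigma^{m-1}(g)$ and $f_{i+1}=\sigma(f_i)$ you get $f_2=\sigma^{m}(g)=g$, not $f_m=g$. Take $f_1=g$, or take $f_1=\sigma(g)$ if you want $f_m=g$. Your verification only uses $f_{i+1}=\sigma(f_i)$ and $\sigma(f_m)=f_1$, so nothing else changes.
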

\begin{proof}
The proof is relegated to Appendix \ref{apendice pruebas}.
\end{proof}

As a direct implication of the Decomposition Lemma together with 
Theorem~\ref{Teorema descomposicion ciclos disjuntos}, we obtain the following corollary, 
which provides the analogous decomposition on the worker side.

\begin{corollary}
Let $V_{IS}\in V_{IS}(P)$ and let $\mu',\mu\in V_{IS}$. There is a
collection of pairwise disjoint IS-cycles
$O_1^{\mu',\mu},\ldots,O_k^{\mu',\mu}$ such that
$$
\left(
\bigcup_{i=1}^{k}O_i^{\mu',\mu}
\right)\cap W
=
W(\mu,\mu').
$$
\end{corollary}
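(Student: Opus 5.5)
The corollary is obtained by pushing the firm-side decomposition of Theorem~\ref{Teorema descomposicion ciclos disjuntos} through the bijection supplied by the Decomposition Lemma. Fix $\mu',\mu\in V_{IS}$ and take the pairwise disjoint IS-cycles $O_1^{\mu',\mu},\ldots,O_k^{\mu',\mu}$ given by Theorem~\ref{Teorema descomposicion ciclos disjuntos}, so that their firms exhaust $F(\mu',\mu)$. We will show that the same collection also satisfies $\bigl(\bigcup_i O_i^{\mu',\mu}\bigr)\cap W=W(\mu,\mu')$, by proving both inclusions.

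\emph{Inclusion $\subseteq$.} This follows from the definition of an IS-cycle. Every worker $w_{i-1}$ in a cycle equals $\mu(f_i)$ for some firm $f_i\in F(\mu',\mu)$. By Lemma~\ref{lema descomposicion para estables}, $\mu$ maps $F(\mu',\mu)$ onto $W(\mu,\mu')$, so $w_{i-1}\in W(\mu,\mu')$. This is the observation $O^{\mu',\mu}\cap W\subseteq W(\mu,\mu')$ already recorded after the definition.

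\emph{Inclusion $\supseteq$.} Take $w\in W(\mu,\mu')$. By the Decomposition Lemma, $\mu$ maps $F(\mu',\mu)$ onto $W(\mu,\mu')$, so there is $f\in F(\mu',\mu)$ with $\mu(f)=w$. By the choice of the collection, $f$ lies in some cycle $O_j^{\mu',\mu}=(w_0,f_1,\ldots,w_{k-1},f_k)$, say $f=f_i$. The cycle structure forces $\mu(f_i)=w_{i-1}$, hence $w=w_{i-1}$ belongs to $O_j^{\mu',\mu}$.

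\emph{Disjointness.} The cycles are already pairwise disjoint by Theorem~\ref{Teorema descomposicion ciclos disjuntos}. If one reads disjointness as applying only to firms, the worker parts are still disjoint: $\mu$ is injective on matched firms, so two cycles sharing a worker $\mu(f)=\mu(g)$ would force $f=g$, contradicting firm-disjointness.

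\emph{Expected difficulty.} The only point needing care is the surjectivity used in the $\supseteq$ step. Without it, some worker in $W(\mu,\mu')$ might be unmatched under $\mu$ or matched to a firm outside $F(\mu',\mu)$, and would then escape every cycle. That surjectivity is exactly what the Decomposition Lemma provides, so the corollary is immediate once the lemma is invoked.
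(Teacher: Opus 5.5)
Your proof is correct and follows the paper's route. The paper gives no separate proof; it presents the corollary as a direct consequence of the Decomposition Lemma and Theorem~\ref{Teorema descomposicion ciclos disjuntos}, which is exactly your argument: take the firm-exhausting disjoint IS-cycles and transport them to the worker side via the bijection $\mu\colon F(\mu',\mu)\to W(\mu,\mu')$.
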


The next results focus on non-core matchings with specific structural properties. 
The following proposition identifies situations in which a matching $\mu$ cannot belong to an internally stable set together with one of the two extremal core matchings.
Intuitively, if $\mu$ is strictly worse for one side relative to the corresponding extremal matching and no allocation offsets this loss, then internal stability cannot be maintained.
\begin{proposition}\label{proposition mu no estable fuera de VI_s}
Let  $\mu \in \mathcal{IR}$. Assume that $F(\mu_W, \mu) \neq \emptyset$, and that no matching $\mu_1 \in \mathcal{IR}$ satisfies $\mu_1 \succeq_W \mu_W$. Then, there is no set $V_{IS} \in V_{IS}(P)$ such that $\mu$ and $\mu_W$ both belong to $V_{IS}$.
\end{proposition}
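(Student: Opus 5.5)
The plan is to argue by contradiction. Suppose there is $V_{IS}\in V_{IS}(P)$ with $\mu,\mu_W\in V_{IS}$. From $\mu$ and $\mu_W$ I will build an individually rational matching $\mu_1$ with $\mu_1\succeq_W\mu_W$, which the hypothesis rules out. The main tool is the Decomposition Lemma (Lemma~\ref{lema descomposicion para estables}), applied to the pair $\mu,\mu_W\in V_{IS}$.

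\emph{Step 1.} With $\mu'=\mu_W$, the lemma says that both $\mu$ and $\mu_W$ map $F(\mu_W,\mu)$ onto $W(\mu,\mu_W)$. Since $F(\mu_W,\mu)\neq\emptyset$, it follows that $W(\mu,\mu_W)\neq\emptyset$. So some worker strictly prefers her partner under $\mu$ to her partner under $\mu_W$.

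\emph{Step 2.} Write $B=F(\mu_W,\mu)\cup W(\mu,\mu_W)$. Define $\mu_1$ to agree with $\mu$ on $B$ and with $\mu_W$ on $(F\cup W)\setminus B$.

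\emph{Step 3.} Check that $\mu_1$ is a matching; this is the only delicate step. By Step 1, $\mu$ restricted to $B$ is a bijection between $F(\mu_W,\mu)$ and $W(\mu,\mu_W)$, so the pairs inside $B$ are consistent. The same holds for $\mu_W$ restricted to $B$. Hence no agent outside $B$ is matched by $\mu_W$ to an agent in $B$: every agent in $B$ is already $\mu_W$-matched inside $B$, and matchings are injective. So $\mu_W$ maps the complement of $B$ into itself together with $\emptyset$, and splicing the two restrictions gives a well-defined matching. I expect the main obstacle to be this "closedness" of $B$ under both matchings, and it is supplied exactly by the word "onto" in the Decomposition Lemma.

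\emph{Step 4.} Check individual rationality and the contradiction. Every pair of $\mu_1$, and every unmatched agent in $\mu_1$, comes either from $\mu\in\mathcal{IR}$ or from $\mu_W\in C(P)\subseteq\mathcal{IR}$. Hence $\mu_1\in\mathcal{IR}$. Workers in $W(\mu,\mu_W)$ satisfy $\mu_1(w)=\mu(w)\,P_w\,\mu_W(w)$. Every other worker has $\mu_1(w)=\mu_W(w)$. Since $W(\mu,\mu_W)\neq\emptyset$ by Step 1, we get $\mu_1\succeq_W\mu_W$, contradicting the hypothesis. Therefore no internally stable set contains both $\mu$ and $\mu_W$.
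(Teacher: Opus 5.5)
Your proof is correct and follows the paper's argument: the paper splices $\mu$ onto $\mu_W$ on the union of the disjoint IS-cycles given by Theorem~\ref{Teorema descomposicion ciclos disjuntos}, which is exactly your set $B=F(\mu_W,\mu)\cup W(\mu,\mu_W)$, and then reaches the same contradiction with a matching $\overline{\mu}\in\mathcal{IR}$ satisfying $\overline{\mu}\succeq_W\mu_W$. The only difference is that you get the closedness of $B$ under both matchings directly from the bijections in the Decomposition Lemma rather than through the cycle decomposition, which is a slightly more economical route to the same construction.
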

\begin{proof}
The proof is relegated to Appendix \ref{apendice pruebas}. 
\end{proof}

Note that, by symmetry of the model, if $W(\mu_F,\mu) \neq \emptyset$ and no matching 
$\mu_2 \in \mathcal{IR}$ satisfies $\mu_2 \succeq_F \mu_F$, then there is no set 
$V_{IS} \in V_{IS}(P)$ such that $\mu$ and $\mu_F$ both belong to $V_{IS}$.

The next proposition identifies a local condition on a matching that prevents it from being part of an internally stable set together with the core allocations. In particular, it shows that certain deviations at the level of a single pair are sufficient to violate internal stability.
The result highlights that incompatibility with internal stability may arise from the behavior of a single matched pair.

\begin{proposition}\label{Proposition Resultado 2}
Let $\mu \in \mathcal{IR}$ and let $(f,w)$ be a pair such that $\mu(f) = w$. 
If $(f,w)$ is not a stable pair of $P$ and 
$\mu_F(f) P_f \mu(f)  P_f  \mu_W(f)$, 
then  there is no set $V_{IS} \in V_{IS}(P)$ such that $\{\mu\} \cup C(P) \subseteq V_{IS}$.
\end{proposition}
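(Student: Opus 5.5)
The plan is to argue by contradiction. Suppose some $V_{IS}\in V_{IS}(P)$ contains $\mu$ together with all of $C(P)$. Use the Decomposition Lemma (Lemma~\ref{lema descomposicion para estables}) between $\mu$ and suitably chosen core matchings to pin down how $w$ ranks $f$ against its core partners. Then use the rotation structure of the core lattice \citep{irving1986complexity} to show that these rankings force $(f,w)$ to be a stable pair, which contradicts the hypothesis.

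\emph{Step 1: the basic transfer.} Let $\nu\in C(P)\subseteq V_{IS}$. If $\nu(f)P_f w$, then $f\in F(\nu,\mu)$. The Decomposition Lemma says $\mu$ maps $F(\nu,\mu)$ onto $W(\mu,\nu)$, so $w=\mu(f)\in W(\mu,\nu)$, that is, $f\,P_w\,\nu(w)$. Symmetrically, if $w P_f \nu(f)$, then $f\in F(\mu,\nu)$ and $w\in W(\nu,\mu)$, that is, $\nu(w)\,P_w\,f$. So for every core matching $\nu$, the comparison $f$ makes between $w$ and $\nu(f)$ is reversed in $w$'s comparison between $f$ and $\nu(w)$. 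Since $(f,w)$ is not a stable pair, $\nu(f)\neq w$ for every core $\nu$, so exactly one of the two cases always applies.

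\emph{Step 2: locate the threshold in the lattice.} Let $\mathcal{A}=\{\nu\in C(P):\nu(f)P_f w\}$. It contains $\mu_F$ by hypothesis and excludes $\mu_W$, since $wP_f\mu_W(f)$. The set $\mathcal{A}$ is closed under the firm-pessimal lattice operation, because that operation gives $f$ the worse of its two partners, and both are above $w$. Hence $\mathcal{A}$ has a firm-pessimal element $\nu$, and $\nu\neq\mu_W$. By the rotation theory of \cite{irving1986complexity}, some rotation exposed in $\nu$ must contain $f$. Otherwise every rotation eliminable from $\nu$ leaves $f$'s partner unchanged, so $\nu(f)=\mu_W(f)$, contradicting $\nu(f)P_f w P_f\mu_W(f)$. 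Eliminating this rotation gives a core matching $\nu'$ that is firm-worse than $\nu$, with $\nu'(f)$ equal to the standard successor $s_\nu(f)$. By minimality of $\nu$ in $\mathcal{A}$, $\nu'\notin\mathcal{A}$, so $w\,P_f\,\nu'(f)$.

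\emph{Step 3: contradiction.} We now have $\nu(f)\,P_f\,w\,P_f\,s_\nu(f)$. Step 1 applied to $\nu$ gives $f\,P_w\,\nu(w)$. But $s_\nu(f)$ is by definition the first worker below $\nu(f)$ on $f$'s list who prefers $f$ to her $\nu$-partner. The worker $w$ lies strictly between $\nu(f)$ and $s_\nu(f)$ on $f$'s list and prefers $f$ to $\nu(w)$, which contradicts this definition.

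The main obstacle is Step 2. There I must import the rotation machinery correctly: the claim that $f$ lies in some rotation exposed at $\nu$ whenever $\nu(f)\neq\mu_W(f)$, and the characterization of $\nu'(f)$ as $s_\nu(f)$. If that route proves cumbersome, a fallback is to show directly that $\nu$ with $f$ rematched to $w$ admits no blocking pair other than those absorbed by the rotation. That would exhibit a core matching containing $(f,w)$, giving the same contradiction.
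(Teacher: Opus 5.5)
Your argument is sound overall, but the justification you give in Step 2 for ``some rotation exposed in $\nu$ must contain $f$'' is invalid. In general, a firm with $\nu(f)\neq\mu_W(f)$ need not belong to any rotation exposed in $\nu$. It may sit on a tail of the successor graph $f'\mapsto\nu(s_\nu(f'))$, and the rotation that moves it becomes exposed only after other rotations have been eliminated. Example~\ref{ex:wako} shows this. There $\mu_F(f_2)=w_3\neq w_2=\mu_W(f_2)$, yet the core is the chain $\mu_F,\mu_1,\mu_W$, so the only rotation exposed in $\mu_F$ moves $f_1$ and $f_3$. Your sentence ``otherwise every rotation eliminable from $\nu$ leaves $f$'s partner unchanged'' conflates the rotations exposed in $\nu$ with all rotations lying below $\nu$.

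The claim is nevertheless true for your particular $\nu$, by the minimality you already invoke. Since $\nu\neq\mu_W$, some rotation $\rho$ is exposed in $\nu$. If $f\notin\rho$, then $(\nu/\rho)(f)=\nu(f)\,P_f\,w$, so $\nu/\rho\in\mathcal{A}$. But every firm of $\rho$ is strictly worse off under $\nu/\rho$ than under $\nu$, which contradicts $\nu$ being the firm-pessimal element of $\mathcal{A}$. With this substitution, Steps 2 and 3 go through and the proof is complete.

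Your route differs from the paper's in organization rather than in ingredients. The paper first splits on where $f$ sits in $w$'s list relative to $\mu_W(w)$ and $\mu_F(w)$; its two outer cases are exactly your Step 1 with $\nu=\mu_W$ and $\nu=\mu_F$. In the middle case it takes a core matching $\widehat\mu$ at which $(f,w)$ drops out of the Irving--Leather reduced lists. It argues that $\widehat\mu(f)P_f w$, because no acceptable worker lies strictly between two consecutive partners of $f$. It then deduces that the deletion happened on $w$'s side, so $\widehat\mu(w)P_w f$, and this contradicts the Decomposition Lemma. You combine the same two facts in the opposite order: the Decomposition Lemma gives $f\,P_w\,\nu(w)$ at the threshold, and the definition of $s_\nu(f)$ yields the contradiction.

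Your version needs no case analysis on $w$'s side. It also states precisely, through $s_\nu(f)$ and the lattice minimum of $\mathcal{A}$, the rotation property that the paper invokes only as ``the construction of the Irving--Leather reduction.'' The paper's version has the advantage of staying inside the reduction procedure it already uses to define $\mathcal E^{IL}$.
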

\begin{proof}
The proof is relegated to Appendix \ref{apendice pruebas}.
\end{proof}

\subsection{Subprofiles and their structural properties}\label{subseccion de construcion y resultados de subperfiles}

We begin by introducing the formal definition of subprofiles. As anticipated earlier, this notion plays a central role in the construction of the vNM stable set.

\begin{definition}
Let $f \in F$ and let $P_f$ be its preference list. We say that \textbf{\boldmath{$P'_f$} is a subpreference} of $P_f$ if the following conditions hold:
\begin{enumerate}
    \item If $w P'_f \emptyset$, then $w P_f \emptyset$.
    \item If $w P'_f w' P'_f \emptyset$, then $w P_f w'$.
\end{enumerate}
\end{definition}

Analogously, we define a subpreference for each worker $w \in W$, denoted by $P'_w$. Given an agent $a \in F\cup W$ and two preference lists $P_a$ and $P'_a$, we write $\boldsymbol{P'_a \sqsubseteq P_a}$ whenever $P'_a$ is a subpreference of $P_a$.
We write $\boldsymbol{P' \sqsubseteq P}$ whenever $P'_a \sqsubseteq P_a$ for every agent $a \in F\cup W$. In this case, we say that $P'$ is a \textbf{subprofile} of $P$.
We restrict attention to subprofiles in which acceptability remains mutual. That is, for every $(f,w)\in F\times W$,
\[
wP'_f\emptyset
\quad\Longleftrightarrow\quad
fP'_w\emptyset.
\]

Given $f\in F$, if any $w \in W$ is such that $w P_f \emptyset$ and $\emptyset P'_f w$, we say that $w$ is eliminated from the preference list $P_f$. Analogously, we define when $f$ is eliminated from the preference list $P_w$.

A subprofile $P'$ of $P$ \textbf{eliminates a pair $(f,w)$} whenever $w$ is eliminated from the preference list $P_f$ and $f$ is eliminated from the preference list $P_w$.
We denote by $\boldsymbol{\mathcal{E}^{P\to P'}}$ the set of all pairs eliminated from $P$ by $P'$.

The next three lemmas examine how the core behaves under restrictions of the preference profile. These results describe how stability properties are preserved when moving from the original profile to specific subprofiles, and will serve as key tools in the analysis that follows.

The first lemma shows that removing no pair that is stable under the original profile preserves every matching in the original core. Hence, restricting preferences in this way can only enlarge the core.

\begin{lemma}\label{core es subconjunto de core de subperfil}
  
    Let $\widetilde{P}$ be a preference profile such that $\widetilde{P} \sqsubseteq P$. If the set $\mathcal{E}^{P\to \widetilde{P}}$ contains no stable pair of $P$ , then $C(P) \subseteq C(\widetilde{P})$.
\end{lemma}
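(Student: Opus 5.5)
The plan is to work directly with the coalitional definition of the core, Definition~\ref{definicion core}, and transfer any dominance under $\widetilde{P}$ back to a dominance under $P$. I fix $\mu\in C(P)$ and show $\mu\in C(\widetilde{P})$. Two things need checking. First, $\mu$ remains individually rational under $\widetilde{P}$. Second, no coalition $S$ and matching $\mu'$ satisfy $\mu'\succ_S\mu$ with respect to $\widetilde{P}$.

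For the first point, take any matched pair $(f,w)$ with $\mu(f)=w$. Since $\mu\in C(P)$, $(f,w)$ is a stable pair of $P$, so by hypothesis $(f,w)\notin\mathcal{E}^{P\to\widetilde{P}}$. Subprofiles are restricted to keep acceptability mutual. Hence if $w$ were dropped from $\widetilde{P}_f$, then $f$ would also be dropped from $\widetilde{P}_w$, and the pair would be eliminated, which is a contradiction. So $w\,\widetilde{P}_f\,\emptyset$ and $f\,\widetilde{P}_w\,\emptyset$. Thus every agent matched under $\mu$ finds her partner acceptable under $\widetilde{P}$. Unmatched agents cannot block by themselves. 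Hence $\mu$ is individually rational under $\widetilde{P}$, and no singleton coalition can dominate $\mu$ there.

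For the second point, suppose $\mu'\succ_S\mu$ under $\widetilde{P}$, so $\mu'(S)=S$ and $\mu'(a)\,\widetilde{P}_a\,\mu(a)$ for all $a\in S$. I will show that $\mu'(a)\,P_a\,\mu(a)$ for each $a\in S$, splitting into cases.

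\begin{enumerate}
\item If $\mu'(a)=\emptyset$, then $\emptyset\,\widetilde{P}_a\,\mu(a)$. This means $\mu(a)$ is unacceptable under $\widetilde{P}$, which the first step rules out.
\item If $\mu'(a)\neq\emptyset$ and $\mu(a)=\emptyset$, then $\mu'(a)\,\widetilde{P}_a\,\emptyset$. Condition~1 of the subpreference definition gives $\mu'(a)\,P_a\,\emptyset$.
\item If both $\mu'(a)$ and $\mu(a)$ are nonempty, then $\mu'(a)\,\widetilde{P}_a\,\mu(a)\,\widetilde{P}_a\,\emptyset$, using acceptability of $\mu(a)$ from the first step. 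Condition~2 then yields $\mu'(a)\,P_a\,\mu(a)$.
\end{enumerate}

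The condition $\mu'(S)=S$ does not depend on preferences. Therefore $\mu'\succ_S\mu$ under $P$, contradicting $\mu\in C(P)$. It follows that $\mu\in C(\widetilde{P})$.

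The only delicate step is the first one. It uses the mutual-acceptability convention on subprofiles to turn ``the pair $(f,\mu(f))$ is not eliminated'' into ``$\mu(f)$ and $f$ remain acceptable to each other.'' Without that convention, one side could drop the other without the pair counting as eliminated, and the argument would break. Everything else is a direct unpacking of the two conditions defining a subpreference.
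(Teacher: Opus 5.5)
Your proof is correct and follows essentially the same route as the paper's. In both, you fix $\mu\in C(P)$ and note that its matched pairs are stable, so they are not eliminated and $\mu$ stays individually rational under $\widetilde{P}$; any dominance under $\widetilde{P}$ then transfers back to $P$ because subpreferences preserve order. Two differences are welcome refinements: you handle an arbitrary coalition $S$ (splitting cases on empty partners) where the paper checks only blocking pairs, and you make explicit the use of the mutual-acceptability convention, which the paper leaves implicit.
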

\begin{proof}
  The proof is relegated to Appendix \ref{apendice pruebas}.
\end{proof}

The second lemma shows that the core of the reduced profile is internally stable when dominance is evaluated under the original profile. Thus, it provides a natural candidate internally stable set for the original market.

\begin{lemma}\label{todo core cumple estabilidad interna}
    Let $\widetilde{P}$ be a preference profile such that $\widetilde{P} \sqsubseteq P$. If the set $\mathcal{E}^{P\to \widetilde{P}}$ contains no stable pair  of $P$, then $C(\widetilde{P}) \in V_{IS}(P)$.
\end{lemma}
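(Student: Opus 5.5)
The plan is to show that any dominance under $P$ between two matchings of $C(\widetilde{P})$ would already be a dominance under $\widetilde{P}$, which is impossible inside a core. I expect that the hypothesis on $\mathcal{E}^{P\to \widetilde{P}}$ will not actually be used. It matters for Lemma~\ref{core es subconjunto de core de subperfil} and for guaranteeing $C(P)\subseteq C(\widetilde{P})$, but the argument below only uses $\widetilde{P}\sqsubseteq P$ together with mutual acceptability.

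\emph{Step 1: $C(\widetilde{P})\subseteq\mathcal{IR}$ under $P$.} Let $\mu\in C(\widetilde{P})$. Then $\mu$ is individually rational under $\widetilde{P}$; otherwise the singleton coalition of a blocking agent, together with the matching that leaves that agent alone, dominates $\mu$ under $\widetilde{P}$. So every matched pair $(f,w)$ of $\mu$ satisfies $w\widetilde{P}_f\emptyset$ and $f\widetilde{P}_w\emptyset$. By condition~1 of the definition of a subpreference, $w P_f\emptyset$ and $f P_w\emptyset$. Hence $\mu$ is individually rational under $P$.

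\emph{Step 2: order consistency.} For any agent $a$ and any two agents $b,b'$ that are both acceptable to $a$ under $\widetilde{P}_a$, we have $b\,\widetilde{P}_a\, b'$ if and only if $b\, P_a\, b'$. The forward direction is condition~2 of the definition. The converse follows because $\widetilde{P}_a$ is a strict order: if $b\,\widetilde{P}_a\,b'$ failed, then $b'\,\widetilde{P}_a\,b$, which by condition~2 gives $b' P_a b$, contradicting $b P_a b'$. In addition, any $\widetilde{P}$-acceptable $b$ satisfies $b\,\widetilde{P}_a\,\emptyset$.

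\emph{Step 3: transfer the dominance.} Suppose, for contradiction, that $\mu,\mu'\in C(\widetilde{P})$ and $\mu\succ_S\mu'$ under $P$. Take $a\in S$.
\begin{itemize}
\item If $\mu(a)=\emptyset$, then $\emptyset P_a\mu'(a)$, which contradicts Step~1 for $\mu'$. So $\mu(a)\neq\emptyset$.
\item Since $\mu(S)=S$, the partner $\mu(a)$ lies in $S$, and by Step~1 applied to $\mu$ under $\widetilde{P}$ it is $\widetilde{P}$-acceptable to $a$.
\item The partner $\mu'(a)$ is either $\emptyset$ or $\widetilde{P}$-acceptable to $a$, by the same reasoning applied to $\mu'$.
\end{itemize}
In either case Step~2 converts $\mu(a)P_a\mu'(a)$ into $\mu(a)\widetilde{P}_a\mu'(a)$. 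Hence $\mu\succ_S\mu'$ under $\widetilde{P}$, contradicting $\mu'\in C(\widetilde{P})$. Therefore $C(\widetilde{P})$ is internally stable under $P$, so $C(\widetilde{P})\in V_{IS}(P)$.

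The only delicate point is Step~2. Its converse direction relies on strictness of $\widetilde{P}_a$, and the case $\mu'(a)=\emptyset$ is exactly where the individual rationality established in Step~1 is needed.
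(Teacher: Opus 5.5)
Your proof is correct and is essentially the paper's argument run directly rather than by contradiction. The paper takes the pair $(f,w)$ with $\mu(f)=w$ through which $\mu$ dominates $\mu'$ and argues that it would block $\mu'$ under $\widetilde P$ unless it had been eliminated, which is impossible since $\mu\in C(\widetilde P)$ matches it. Your version is more careful: it handles arbitrary coalitions and unmatched agents rather than assuming $S=\{f,w\}$, and it checks $C(\widetilde P)\subseteq\mathcal{IR}$ explicitly. You are also right that the hypothesis on $\mathcal{E}^{P\to\widetilde P}$ is never genuinely used; the paper's closing appeal to it misidentifies the source of the contradiction.
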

\begin{proof}
   The proof is relegated to Appendix \ref{apendice pruebas}.
\end{proof}

When $\widetilde P \sqsubseteq P$, Lemma~\ref{core es subconjunto de core de subperfil} shows that removing no stable pair of $P$ can only enlarge the core. The next lemma extends this observation to successive restrictions of the preference profile.
It shows that if, at each stage, the restriction removes no pair that is stable under the profile prevailing at that stage, then the corresponding cores form a nested sequence.

\begin{lemma}\label{core de subperfil subconjunto de core de otro subperfil}
    Let $P_1$ and $P_2$ be preference profiles such that  $P_2 \sqsubseteq P_1 \sqsubseteq P$. If the set $\mathcal{E}^{P\to P_1}$ contains no stable pair of $P$  and the set $\mathcal{E}^{P_1\to P_2}$ contains no stable pair of $P_1$ , then  $C(P) \subseteq C(P_1) \subseteq C(P_2)$.
\end{lemma}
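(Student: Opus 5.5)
The statement should follow from two applications of Lemma~\ref{core es subconjunto de core de subperfil}, one for each step of the nested chain. The inclusion $C(P)\subseteq C(P_1)$ needs no new argument. Since $P_1\sqsubseteq P$ and $\mathcal{E}^{P\to P_1}$ contains no stable pair of $P$, it is exactly the content of Lemma~\ref{core es subconjunto de core de subperfil}. For the second inclusion, I would apply the same lemma again, now with $P_1$ as the ambient profile and $P_2$ as the subprofile. By hypothesis $P_2\sqsubseteq P_1$, and $\mathcal{E}^{P_1\to P_2}$ contains no stable pair of $P_1$, so the lemma gives $C(P_1)\subseteq C(P_2)$. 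Chaining the two inclusions yields $C(P)\subseteq C(P_1)\subseteq C(P_2)$.

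The main thing to check is that Lemma~\ref{core es subconjunto de core de subperfil} is genuinely stated for an arbitrary base profile. In the paper it is written with $P$ as the original market, but $P$ was an arbitrary profile with strict preferences. So the lemma holds verbatim with $P$ replaced by any profile $Q$, provided $\widetilde P\sqsubseteq Q$ and no stable pair of $Q$ is eliminated. Two small points need care here.

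\begin{enumerate}
\item $P_1$ is itself a legitimate preference profile to which the lemma applies. It is strict and has mutual acceptability, since we restrict attention to such subprofiles.
\item The relation $\sqsubseteq$ is used relative to $P_1$, not to $P$. Conditions 1 and 2 of a subpreference are checked against $P_1$, which is precisely the hypothesis $P_2\sqsubseteq P_1$.
\end{enumerate}

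If one prefers not to rely on this re-indexing, the argument behind Lemma~\ref{core es subconjunto de core de subperfil} can be reproduced directly for $P_1\to P_2$. Take $\mu\in C(P_1)$. Every pair matched under $\mu$ is a stable pair of $P_1$, hence not in $\mathcal{E}^{P_1\to P_2}$, and so it remains mutually acceptable under $P_2$. This gives individual rationality of $\mu$ under $P_2$.

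Now suppose some $\mu'$ dominates $\mu$ via a coalition $S$ under $P_2$. Each pair of $\mu'$ inside $S$ is acceptable under $P_2$, hence under $P_1$. Each strict improvement $\mu'(a)\,P_{2,a}\,\mu(a)$ between acceptable partners transfers to $P_1$ because $P_2\sqsubseteq P_1$. An improvement over being unmatched also transfers, since $\mu(a)$ is then $\emptyset$. So $\mu'$ would dominate $\mu$ via $S$ under $P_1$, contradicting $\mu\in C(P_1)$. This direct check is the only place any care is needed; the rest is bookkeeping.
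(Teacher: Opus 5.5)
Your proof is correct and matches the paper's argument exactly. The paper also applies Lemma~\ref{core es subconjunto de core de subperfil} twice, the second time explicitly with $P_1$ as the original profile, and then chains the two inclusions. Your optional direct check of the step $P_1\to P_2$ is also sound: once $\mu$ is individually rational under $P_2$, every strict improvement under $P_2$ carries over to $P_1$ by the two subpreference conditions.
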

\begin{proof}
   The proof is relegated to Appendix \ref{apendice pruebas}.
\end{proof}

Taken together, the three lemmas describe how the core evolves under restrictions of the preference profile. 
When pairs are removed without eliminating any stable pair, the core is preserved and may only expand. 
Moreover, the core of each subprofile forms an internally stable set relative to the original profile, and successive restrictions generate a nested sequence of cores. 
These properties provide a consistent structural framework for analyzing subprofiles and will be instrumental in the construction that follows.

\subsection{Subprofiles induced by the Irving--Leather and EADA reductions}\label{subseccion subperfiles}

We first construct a subprofile induced by the classical Irving--Leather
reduction procedure \citep{irving1986complexity}, presented in
\cite{roth1992two}. A detailed description of the recursive reduction is
provided in Appendix~\ref{appendix:IL}. Here we recall only the aspects that
are relevant for our construction.

Starting from the original preference profile $P$, the procedure recursively
traverses the lattice of core matchings, beginning with the firm-optimal and
worker-optimal core matchings, $\mu_F$ and $\mu_W$. At each recursive step,
the preference profile is restricted to the lattice interval determined by
the pair of core matchings under consideration. Mutual acceptability is then
restored by eliminating every pair $(f,w)$ such that one agent still finds
the other acceptable whereas the reverse is no longer true.

The only information that we retain from this reduction procedure is the
collection of pairs eliminated during this mutual acceptability restoration
step. We denote this collection by $\mathcal E^{IL}$.

\begin{definition}\label{defincion subperfil con ciclos}
The subprofile $SP^{\mu_F,\mu_W}$ is obtained from $P$ by eliminating every pair in $\mathcal E^{IL}$.
\end{definition}

By construction,
\[
\mathcal E^{P\to SP^{\mu_F,\mu_W}}
=
\mathcal E^{IL}.
\]


Hence, the general notation introduced in the previous subsection coincides,
in this particular case, with the collection of pairs eliminated by the
Irving--Leather reduction procedure.
The following remark characterizes the pairs belonging to
$\mathcal E^{IL}$ (equivalently,
$\mathcal E^{P\to SP^{\mu_F,\mu_W}}$).

\begin{remark}\label{remark el conjunto de pares eliminados para muF y muW no contiene pares estables}
The set $\mathcal{E}^{IL}$ contains no stable pairs under $P$.
More precisely, for each pair $(f,w)\in
\mathcal{E}^{IL}$, there is a core matching
$\mu\in C(P)$ such that either
\[
\mu(f) P_f w P_f \mu_W(f)
\qquad \text{and} \qquad
\mu(w) P_w f,
\]
or
\[
\mu_W(w) P_w f P_w \mu(w)
\qquad \text{and} \qquad
\mu_W(f) P_f w.
\]
\end{remark}

The next step is to relate the core of the original profile with the core of the restricted subprofile.

Since $SP^{\mu_F,\mu_W}\sqsubseteq P$ and, by
Remark~\ref{remark el conjunto de pares eliminados para muF y muW no contiene pares estables}, the set
$\mathcal E^{IL}$
contains no stable pair of $P$,
Lemma~\ref{core es subconjunto de core de subperfil}
immediately yields
\[
C(P)\subseteq C(SP^{\mu_F,\mu_W}).
\]
Applying
Lemma~\ref{todo core cumple estabilidad interna}
under the same assumptions, we also obtain
\[
C(SP^{\mu_F,\mu_W})\in V_{IS}(P).
\]

Note that $C(P)$ and $C(SP^{\mu_F,\mu_W})$ need not coincide. 
When they differ, the reduced core contains matchings in 
$C(SP^{\mu_F,\mu_W}) \setminus C(P)$ that are not dominated by any core matching 
of the original profile.

\begin{examplecont}{ex:wako}
We now construct the subprofile $SP^{\mu_F,\mu_W}$ as in Definition \ref{defincion subperfil con ciclos}. By Remark~\ref{remark el conjunto de pares eliminados para muF y muW no contiene pares estables}, the only pair removed by the Irving--Leather reduction is $(f_2,w_1).$ 
 Hence, we get \(\mathcal{E}^{IL}\)$=\{(f_2,w_1)\}$.
The resulting subprofile is given by
\begin{center}
\begin{tabular}{l l @{\hspace{1cm}} l l}
$P^{SP^{\mu_F,\mu_W}}_{f_{1}}:$ & $w_{2}$, $w_{4}$, $w_{1}$, $w_3$ &
$P^{SP^{\mu_F,\mu_W}}_{w_{1}}:$ & $f_{1}$, $f_{3}$, $f_4$ \\
$P^{SP^{\mu_F,\mu_W}}_{f_{2}}:$ & $w_{3}$, $w_{2}$, $w_4$ &
$P^{SP^{\mu_F,\mu_W}}_{w_{2}}:$ & $f_{2}$, $f_{4}$, $f_{3}$, $f_1$ \\
$P^{SP^{\mu_F,\mu_W}}_{f_{3}}:$ & $w_{3}$, $w_{2}$, $w_{1}$, $w_4$ &
$P^{SP^{\mu_F,\mu_W}}_{w_{3}}:$ & $f_{4}$, $f_{2}$, $f_{3}$, $f_1$ \\ 
$P^{SP^{\mu_F,\mu_W}}_{f_{4}}:$ & $w_{4}$, $w_{2}$, $w_{3}$, $w_1$ &
$P^{SP^{\mu_F,\mu_W}}_{w_{4}}:$ & $f_{3}$, $f_{1}$, $f_{4}$, $f_2$ \\ 
\end{tabular}
\end{center}
The core of the reduced profile is
\[
C\big(SP^{\mu_F,\mu_W}\big)=\{\mu_F,\mu_1,\widetilde{\mu},\mu_W\},
\]
where
\[
\widetilde{\mu}=
\begin{pmatrix}
f_1 & f_2 & f_3 & f_4\\
w_4 & w_2 & w_1 & w_3
\end{pmatrix}.
\]
In particular, $\widetilde{\mu}\notin C(P)$ but $\widetilde{\mu}\in C\big(SP^{\mu_F,\mu_W}\big)$. Hence,
\[
C(P)\subsetneq C\big(SP^{\mu_F,\mu_W}\big).
\]

Moreover, since $\{f_2,w_1\}$ is the only blocking coalition of $\widetilde{\mu}$ and $(f_2,w_1)$ has been removed in $SP^{\mu_F,\mu_W}$, no matching in $C(P)$ can dominate $\widetilde{\mu}$. Therefore, $\widetilde{\mu}$ is undominated in the reduced profile and $C\big(SP^{\mu_F,\mu_W}\big)$ satisfies internal stability with respect to $P$.
\end{examplecont}

We now construct a second subprofile induced by the
Efficiency-Adjusted Deferred Acceptance (EADA) mechanism of
\cite{kesten2010school}. A detailed description of the mechanism is
provided in Appendix~\ref{apendice descripcion EADAM}. Here we recall
only the aspects that are relevant for our construction.

Starting from a deferred acceptance outcome, the EADA mechanism
iteratively removes interrupter pairs, thereby generating a sequence of
efficiency-improving matchings. Let $\mu_{E_F}$ and $\mu_{E_W}$ denote the outcomes of the EADA mechanism when firms propose and when workers propose, respectively.
We execute the mechanism under both proposing sides and collect every
interrupter pair eliminated during either execution. We denote the
resulting collection by $\mathcal E^{EADA}$.

\begin{definition}\label{definition EADA subprofile}
The subprofile $SP^{\mu_{E_F},\mu_{E_W}}$ is obtained from $P$ by
eliminating every pair in $\mathcal E^{EADA}$.
Equivalently,
\[
\mathcal E^{P\to SP^{\mu_{E_F},\mu_{E_W}}}
=
\mathcal E^{EADA}.
\]
\end{definition}

The following remark characterizes the pairs belonging to
$\mathcal E^{EADA}$ (equivalently,
$\mathcal E^{P\to SP^{\mu_{E_F},\mu_{E_W}}}$).

\begin{remark}\label{remark el conjunto de pares eliminados no contiene pares estables}
The set $\mathcal E^{EADA}$ contains no stable pairs under the original
profile $P$.
More precisely, every pair
$(f,w)\in\mathcal E^{EADA}$ satisfies
\[
w P_f \mu_F(f)
\qquad \text{and} \qquad
\mu_F(w) P_w f,
\]
or
\[
f P_w \mu_W(w)
\qquad \text{and} \qquad
\mu_W(f) P_f w,
\]
depending on which side of the market acts as the proposing side in the
EADA procedure.
\end{remark}

We illustrate the previous construction by returning to
Example~\ref{ex:wako}.
The example shows how the interrupter pairs generated by the two executions
of the EADA mechanism determine the subprofile
$SP^{\mu_{E_F},\mu_{E_W}}$ and enlarge the core.

\begin{examplecont}{ex:wako}
In this instance, $(f_3,w_2)$ is the only interrupter pair in $P$. Hence, by Remark \ref{remark el conjunto de pares eliminados no contiene pares estables}, we get
\[
\mathcal{E}^{EADA}=\{(f_3,w_2)\}.
\]
The resulting subprofile is given by
\begin{center}
\begin{tabular}{l l @{\hspace{1cm}} l l}
$P^{SP^{\mu_{E_F},\mu_{E_W}}}_{f_{1}}:$ & $w_{2}$, $w_{4}$, $w_{1}$, $w_3$ &
$P^{SP^{\mu_{E_F},\mu_{E_W}}}_{w_{1}}:$ & $f_{1}$, $f_{2}$, $f_{3}$, $f_{4}$ \\
$P^{SP^{\mu_{E_F},\mu_{E_W}}}_{f_{2}}:$ & $w_{3}$, $w_{1}$, $w_{2}$, $w_4$ &
$P^{SP^{\mu_{E_F},\mu_{E_W}}}_{w_{2}}:$ & $f_{2}$, $f_{4}$, $f_{1}$ \\
$P^{SP^{\mu_{E_F},\mu_{E_W}}}_{f_{3}}:$ & $w_{3}$, $w_{1}$, $w_{4}$ &
$P^{SP^{\mu_{E_F},\mu_{E_W}}}_{w_{3}}:$ & $f_{4}$, $f_{2}$, $f_{3}$, $f_{1}$ \\ 
$P^{SP^{\mu_{E_F},\mu_{E_W}}}_{f_{4}}:$ & $w_{4}$, $w_{2}$, $w_{3}$, $w_1$ &
$P^{SP^{\mu_{E_F},\mu_{E_W}}}_{w_{4}}:$ & $f_{3}$, $f_{1}$, $f_{4}$, $f_2$ \\ 
\end{tabular}
\end{center}
The core of the subprofile is
\[
C\big(SP^{\mu_{E_F},\mu_{E_W}}\big)=\{\mu_{E_F},\mu_F,\mu_1,\mu_W\},
\]
where
\[
\mu_{E_F}=
\begin{pmatrix}
f_1 & f_2 & f_3 & f_4\\
w_2 & w_3 & w_1 & w_4
\end{pmatrix}.
\]
(The step-by-step execution of the EADA mechanism, including the identification of the interrupter pair and the computation of $\mu_{E_F}$, is provided in Appendix \ref{apendice descripcion EADAM}.)

In particular, $\mu_{E_F}\notin C(P)$ but $\mu_{E_F}\in C\big(SP^{\mu_{E_F},\mu_{E_W}}\big)$. Hence,
\[
C(P)\subsetneq C\big(SP^{\mu_{E_F},\mu_{E_W}}\big).
\]
Moreover, since $\{f_3,w_2\}$ is the only blocking coalition of $\mu_{E_F}$ in $P$ and $(f_3,w_2)$ has been removed in $SP^{\mu_{E_F},\mu_{E_W}}$, no matching in $C(P)$ can dominate $\mu_{E_F}$. Therefore, $\mu_{E_F}$ is undominated in the reduced profile and $C\big(SP^{\mu_{E_F},\mu_{E_W}}\big)$ satisfies internal stability with respect to $P$.
\end{examplecont}

Besides identifying pairs to eliminate, the EADA mechanism generates two
sequences of matchings, depending on the proposing side.
The next proposition shows that each of these sequences is internally
stable with respect to the original profile.

\begin{proposition}\label{E_F y E_W cumple estabilidad interna}
Let $\mathcal{M}^{E_F}$ and $\mathcal{M}^{E_W}$ be the sets of matchings obtained when the EADA mechanism is executed with firms proposing and workers proposing, respectively. Then
\[
\mathcal{M}^{E_F},\, \mathcal{M}^{E_W} \in V_{IS}(P).
\]
\end{proposition}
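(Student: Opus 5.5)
We prove the claim for $\mathcal{M}^{E_F}$; the claim for $\mathcal{M}^{E_W}$ follows by exchanging the roles of firms and workers. We write the firm-proposing EADA sequence as $\mu^0=\mu_F,\mu^1,\ldots,\mu^T=\mu_{E_F}$. Here $\mu^{t}$ is the firm-proposing deferred acceptance outcome of the profile $P^t$, obtained from $P$ by eliminating the interrupter pairs identified in the first $t$ rounds. We use two standard properties of EADA, recalled in Appendix~\ref{apendice descripcion EADAM}:
\begin{itemize}
\item[(a)] \emph{Monotonicity:} $\mu^{t+1}\succeq_F\mu^t$, with weak improvement for every firm. In particular, $\mu^{s}(f)R_f\mu^{t}(f)$ for all $f$ whenever $s\geq t$.
\item[(b)] \emph{Core of the restricted profile:} each $\mu^t$ is in the core of $P^t$, and $P^T\sqsubseteq P^t$ for all $t$.
\end{itemize}

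The plan is to argue by contradiction. Suppose $\mu^s\succ_S\mu^t$ for some $s\neq t$ and some coalition $S$. Since both matchings are individually rational, $S$ must contain some pair $f,w$ with $\mu^s(f)=w$. This holds because $\mu^s(S)=S$, and an agent of $S$ left unmatched by $\mu^s$ would have to strictly prefer $\emptyset$ to her partner under $\mu^t$, which individual rationality rules out. So it suffices to rule out a pair $(f,w)$ with $\mu^s(f)=w$ and $wP_f\mu^t(f)$ and $fP_w\mu^t(w)$. We split into two cases.

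\emph{Case $s<t$.} By (a), $\mu^t(f)R_f\mu^s(f)=w$, which contradicts $wP_f\mu^t(f)$.

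\emph{Case $s>t$.} The pair $(f,w)$ is matched in $\mu^s$, so it is acceptable in $P^s\sqsubseteq P^t$. Moreover, removing interrupter pairs only deletes entries and never reorders a list, so the comparisons $wP_f\mu^t(f)$ and $fP_w\mu^t(w)$ are also valid under $P^t$. Then $(f,w)$ blocks $\mu^t$ in $P^t$, which contradicts (b).

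The main obstacle is the comparison step in the case $s>t$. The worry is that $\mu^t(f)$ or $\mu^t(w)$ might involve a pair that was eliminated between rounds $t$ and $s$, so that the preference comparisons would have to be carried out under $P$ rather than under $P^t$. This issue does not actually arise, because $\mu^t$ only uses pairs that are acceptable in $P^t$, and $P^t$ preserves the relative order of $P$. The argument therefore needs property (b), that $\mu^t$ is core in its own round's profile $P^t$, rather than any fact about the final profile.

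A cleaner alternative route is to show $\{\mu^0,\ldots,\mu^T\}\subseteq C(P^T)$ and apply Lemma~\ref{todo core cumple estabilidad interna}. This fails in general, because an earlier $\mu^t$ may use an interrupter pair that is deleted in $P^T$. For that reason I would use the direct round-by-round argument above.
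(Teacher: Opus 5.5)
Your proof is correct and follows essentially the same route as the paper. Monotonicity of the EADA sequence rules out an earlier outcome dominating a later one. The stability of each round's DA outcome in its own order-preserving round profile, which still contains the dominating pair, rules out the reverse.

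Your closing aside is mistaken, although the proof does not depend on it. An interrupter pair $(f,w)$ identified in the DA run under $P^k$ satisfies $wP_f\mu^k(f)R_f\mu^j(f)$ for every $j\le k$. So no matching in the sequence uses a pair deleted in $P^T$, and the containment $\{\mu^0,\ldots,\mu^T\}\subseteq C(P^T)$ does hold.
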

\begin{proof}
The proof is relegated to Appendix \ref{apendice pruebas}.
\end{proof}

Furthermore, these matchings remain stable in the market restricted by the elimination of all interrupter pairs. The next proposition shows that both sequences are contained in the core of the subprofile induced by the EADA mechanism.

\begin{proposition}\label{EADAM subconjunto de P'}
Let $\mathcal{M}^{E_F}$ and $\mathcal{M}^{E_W}$ denote the sets of matchings obtained by the EADA mechanism when firms and workers propose, respectively. 
Then,
\[
\mathcal{M}^{E_F} \subseteq C(SP^{\mu_{E_F},\mu_{E_W}})
\qquad\text{and}\qquad
\mathcal{M}^{E_W} \subseteq C(SP^{\mu_{E_F},\mu_{E_W}}).
\]
\end{proposition}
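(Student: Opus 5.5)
We argue for the firm-proposing case; the worker-proposing case follows by exchanging the roles of the two sides. Write $\mathcal{M}^{E_F}=\{\mu^0,\mu^1,\ldots,\mu^T\}$, where $\mu^0=\mu_F$ is the deferred acceptance outcome on $P$. For $t\ge 1$, $\mu^t$ is the deferred acceptance outcome on the profile $P^t$, obtained from $P$ by removing the interrupter pairs identified in rounds $1,\ldots,t$; so $\mu^T=\mu_{E_F}$. Each $\mu^t$ is then the firm-optimal core matching of $P^t$, and in particular $\mu^t\in C(P^t)$. We also have
\[
SP^{\mu_{E_F},\mu_{E_W}}\sqsubseteq P^t\qquad\text{and}\qquad \mathcal E^{P^t\to SP^{\mu_{E_F},\mu_{E_W}}}\subseteq \mathcal E^{EADA}.
\]
The plan is to show that passing from $P^t$ to the subprofile destroys no blocking opportunity against $\mu^t$ and creates none.

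\textbf{Step 1 (individual rationality).} We check that $\mu^t$ does not use any eliminated pair. Pairs in $\mathcal E^{EADA}$ are interrupter pairs: either they were removed before round $t$, or they are interrupters that the firm-proposing algorithm never finalizes. The pairs coming from the worker-proposing run need separate treatment. By Remark~\ref{remark el conjunto de pares eliminados no contiene pares estables}, such a pair $(f,w)$ satisfies $fP_w\mu_W(w)$ and $\mu_W(f)P_fw$. I would argue that every $\mu^t$ Pareto-improves $\mu_F$ for firms while workers stay weakly above their outcome under $\mu_F$ (the standard EADA property). From this, I expect a contradiction if $\mu^t(f)=w$, using $\mu_W(f)\,R_f\,$ (what $f$ obtains in any $\mu^t$)… Making this precise is one of the delicate points. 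The fallback is to invoke the appendix property that EADA outcomes never match an interrupter pair of either run.

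\textbf{Step 2 (no blocking pair).} Suppose $(f,w)$ is acceptable under $SP^{\mu_{E_F},\mu_{E_W}}$ and blocks $\mu^t$ there. Since subpreferences preserve the order among the remaining agents, $(f,w)$ is acceptable under $P^t$ as well and blocks $\mu^t$ under $P^t$. This contradicts $\mu^t\in C(P^t)$. In the one-to-one model, core stability is equivalent to individual rationality together with the absence of blocking pairs, so it suffices to handle coalitions of size at most two.

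\textbf{Main obstacle.} The hard part is Step 1 for pairs that the worker-proposing run eliminates but the firm-proposing run does not. The monotonicity comparisons with $\mu_F$ and $\mu_W$ given in Remark~\ref{remark el conjunto de pares eliminados no contiene pares estables} must be combined with the fact that every firm weakly prefers $\mu^t$ to $\mu_F$. If the direct comparison fails, my first fallback is Proposition~\ref{E_F y E_W cumple estabilidad interna}. It places $\mu^t$ and $\mu_W$ in a common internally stable set, and I would then apply the Decomposition Lemma to exclude $\mu^t(f)=w$.
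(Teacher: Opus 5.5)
Your Step 2 is the paper's entire argument. The paper only observes that each $\mu^t$ lies in the core of the round-$t$ profile and that deleting pairs cannot create a blocking coalition. You correctly see that this presupposes that $\mu^t$ uses no pair of $\mathcal E^{EADA}$. Otherwise a single agent blocks $\mu^t$ under $SP^{\mu_{E_F},\mu_{E_W}}$, and the order-preservation step also breaks. The paper leaves this point implicit. However, your Step 1 does not establish it, and that is the genuine gap.

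\textbf{Where Step 1 fails.}
\begin{enumerate}
\item The comparison you propose runs the wrong way. Firms weakly prefer $\mu^t$ to $\mu_F$, and hence to $\mu_W$, so $\mu_W(f)\,R_f\,\mu^t(f)$ is false in general. In any case, an upper bound of that form together with $\mu_W(f)\,P_f\,w$ could never exclude $\mu^t(f)=w$.
\item The ``standard property'' that workers stay weakly above $\mu_F$ is false. In Example~\ref{ex:wako}, $w_2$ goes from $f_4$ under $\mu_F$ to $f_1$, her last choice, under $\mu_{E_F}$.
\item Neither fallback is available. The EADA appendix records only monotonicity and Pareto efficiency for the proposing side; it says nothing about EADA outcomes avoiding the other run's interrupter pairs.
\item The internal-stability result for EADA sequences treats $\mathcal M^{E_F}$ and $\mathcal M^{E_W}$ separately. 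It never places $\mu^t$ and $\mu_W$ in a common internally stable set. Even if it did, the Decomposition Lemma applied to $f\in F(\mu_W,\mu^t)$ would only give $f\,P_w\,\mu_W(w)$. That is exactly what the remark on $\mathcal E^{EADA}$ already asserts, so no contradiction arises.
\item For firm-proposing pairs removed in a round $k>t+1$, saying they are ``interrupters the algorithm never finalizes'' is not an argument about $\mu^t$.
\end{enumerate}

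\textbf{The missing idea.} It is firm-side monotonicity, $\mu^t\succeq_F\mu_F$, which the paper records in the EADA appendix.
\begin{enumerate}
\item If $(f,w)$ comes from the worker-proposing run, the remark gives $\mu_W(f)\,P_f\,w$. Hence $\mu^t(f)\,R_f\,\mu_F(f)\,R_f\,\mu_W(f)\,P_f\,w$, so $\mu^t(f)\neq w$.
\item If $(f,w)$ is removed in round $k$ of the firm-proposing run and $t\ge k$, then $w$ is no longer on $f$'s list in $P^t$, so DA cannot match them.
\item If instead $t\le k-1$, then $w$ accepted and later rejected $f$ in the round-$(k-1)$ run. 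Therefore $w\,P_f\,\mu^{k-1}(f)\,R_f\,\mu^t(f)$, so again $\mu^t(f)\neq w$.
\end{enumerate}
The case of $\mathcal M^{E_W}$ is symmetric. It uses worker-side monotonicity together with $\mu_W(w)\,R_w\,\mu_F(w)\,P_w\,f$ for the firm-proposing pairs. With Step 1 closed this way, your Step 2 completes the proof.
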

\begin{proof}
The proof is relegated to Appendix \ref{apendice pruebas}.
\end{proof}

We now relate the core of the original profile with the core of the subprofile induced by the EADA mechanism. 
Since
$SP^{\mu_{E_F},\mu_{E_W}}\sqsubseteq P$
and
Remark~\ref{remark el conjunto de pares eliminados no contiene pares estables}
shows that
$\mathcal E^{EADA}$
contains no stable pair of $P$,
Lemma \ref{core es subconjunto de core de subperfil}
immediately yields

\[
C(P)\subseteq C(SP^{\mu_{E_F},\mu_{E_W}}).
\]
Applying
Lemma~\ref{todo core cumple estabilidad interna}
under the same assumptions, we also obtain
\[
C(SP^{\mu_{E_F},\mu_{E_W}})
\in
V_{IS}(P).
\]

Note that $C(P)$ and $C(SP^{\mu_{E_F},\mu_{E_W}})$ need not coincide. 
Whenever the inclusion is strict, the additional matchings in
$C(SP^{\mu_{E_F},\mu_{E_W}})$ become undominated after the interrupter
pairs are removed.
These matchings will play an important role in the construction of the
final reduced profile.

\subsubsection{Discussion: Compatibility between the two reductions}\label{subseccion de discusion de subperfiles}

The Irving--Leather and EADA reductions identify different classes of
pairs and therefore reveal different undominated matchings. We now discuss
how the two constructions interact. In particular, pairs eliminated by
either reduction cannot support dominance against the additional matchings
revealed by the other. This property will allow us to combine both
collections of eliminated pairs into a single reduced profile.

First, consider a pair
$(f,w)\in\mathcal{E}^{IL}$.
By Remark~\ref{remark el conjunto de pares eliminados para muF y muW no contiene pares estables},
every such pair satisfies either
\[
\mu_F(f) P_f w P_f \mu_W(f)
\qquad\text{and}\qquad
\mu_W(w) P_w f,
\]
or
\[
\mu_W(w) P_w f P_w \mu_F(w)
\qquad\text{and}\qquad
\mu_F(f) P_f w.
\]
Now let $\mu\in C(SP^{\mu_{E_F},\mu_{E_W}})$. By the bounds established
for the EADA-induced subprofile in
Appendix~\ref{apendice descripcion EADAM},
\[
\mu_{E_F}\succeq_F\mu\succeq_F\mu_F
\qquad\text{and}\qquad
\mu_{E_W}\succeq_W\mu\succeq_W\mu_W.
\]
In the first case, firm $f$ weakly prefers $\mu(f)$ to $\mu_F(f)$ and
$\mu_F(f)P_fw$, so $\mu(f)P_fw$. In the second case, worker $w$ weakly
prefers $\mu(w)$ to $\mu_W(w)$ and $\mu_W(w)P_wf$, so $\mu(w)P_wf$.
Hence, in either case, the coalition $\{f,w\}$ cannot support a dominance
relation against $\mu$. Thus, pairs eliminated by the Irving--Leather
reduction cannot support dominance against matchings in
$C(SP^{\mu_{E_F},\mu_{E_W}})$.

Conversely, consider a pair
$(f,w)\in\mathcal{E}^{EADA}$.
By Remark~\ref{remark el conjunto de pares eliminados no contiene pares estables},
either
\[
w P_f \mu_F(f)
\qquad\text{and}\qquad
\mu_F(w) P_w f,
\]
or
\[
f P_w \mu_W(w)
\qquad\text{and}\qquad
\mu_W(f) P_f w.
\]
For every $\mu\in C(SP^{\mu_F,\mu_W})$, the bounds determined by
$\mu_F$ and $\mu_W$ imply that, in the first case,
$\mu(w)P_wf$, whereas in the second case, $\mu(f)P_fw$.
Therefore, the coalition $\{f,w\}$ cannot support a dominance relation
against any matching in $C(SP^{\mu_F,\mu_W})$.

The two reductions are therefore compatible in the following sense:
the pairs eliminated by either procedure do not interfere with the
additional undominated matchings revealed by the other. Consequently,
the two collections of eliminated pairs can be combined into a single
reduced profile. The next subsection formalizes this construction.

\subsection{\texorpdfstring{The subprofile \(P^\star\)}{The subprofile P*}}\label{Definition of the profile P*}

The previous subsection introduced two complementary reductions of the
original preference profile. The first is based on the lattice structure
of the core, whereas the second arises from the efficiency-adjusted
improvements identified by the EADA mechanism. We now combine these two
sources of information into a single reduced profile, denoted by
$P^\star$.

As shown by the compatibility results established at the end of the
previous subsection, the pairs eliminated by one reduction do not affect
the additional undominated matchings identified by the other.
Consequently, both collections of eliminated pairs can be removed
simultaneously. The resulting subprofile will play a central role in the
remainder of the paper, as its core will be shown to coincide with the
unique vNM stable set.

\begin{definition}\label{definicion de subperfil P star}
The subprofile
$P^\star$
is obtained from
$P$
by eliminating every pair in
\[
\mathcal E^{IL}\cup\mathcal E^{EADA}.
\]
Equivalently,
\[
\mathcal E^{P\to P^\star}
=
\mathcal E^{IL}
\cup
\mathcal E^{EADA}.
\]
\end{definition}
Thus,
$P^\star$
is obtained by simultaneously applying the Irving--Leather and
EADA reductions.
This provides an explicit procedure to compute $P^\star$, obtained by aggregating the restrictions imposed by the two subprofiles.

We next relate the core of the combined subprofile $P^\star$ to the cores of the two subprofiles introduced earlier. 
Since
$P^\star$
contains every restriction imposed by the two previous subprofiles,
their cores are naturally preserved.
The next proposition formalizes this observation.
\begin{proposition}\label{proposition 1 de P*}
Given a profile $P$ and the associated subprofiles $SP^{\mu_{E_F},\mu_{E_W}}$, $SP^{\mu_F,\mu_W}$, and $P^\star$,  
we have that 
\[
C\!\left(SP^{\mu_{E_F},\mu_{E_W}}\right) \subseteq C(P^\star)
\quad \text{and} \quad 
C\!\left(SP^{\mu_F,\mu_W}\right) \subseteq C(P^\star).
\]
\end{proposition}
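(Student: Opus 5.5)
I will prove the first inclusion directly from the definition of the core, and the second inclusion is symmetric. Fix $\mu\in C(SP^{\mu_{E_F},\mu_{E_W}})$ and suppose, towards a contradiction, that $\mu\notin C(P^\star)$. Then there are a coalition $S$ and a matching $\mu'$ with $\mu'\succ_S\mu$ under $P^\star$. Since $P^\star\sqsubseteq SP^{\mu_{E_F},\mu_{E_W}}$, every pair matched by $\mu'$ inside $S$ is acceptable under $P^\star$, and every strict comparison made under $P^\star$ is preserved under the larger profile.

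I first check that $\mu$ is individually rational under $P^\star$. The only danger is that $\mu$ matches some pair in $\mathcal E^{IL}\setminus\mathcal E^{EADA}$, which is acceptable in $SP^{\mu_{E_F},\mu_{E_W}}$ but not in $P^\star$. I exclude this with the compatibility discussion of Subsection~\ref{subseccion de discusion de subperfiles}. There, for $(f,w)\in\mathcal E^{IL}$ and $\mu\in C(SP^{\mu_{E_F},\mu_{E_W}})$, the bounds $\mu_{E_F}\succeq_F\mu\succeq_F\mu_F$ and $\mu_{E_W}\succeq_W\mu\succeq_W\mu_W$ give either $\mu(f)P_fw$ or $\mu(w)P_wf$. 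In particular, $\mu(f)\neq w$, so $\mu$ matches no pair of $\mathcal E^{IL}$. Hence $\mu$ remains individually rational, and the blocking coalition $S$ must consist of pairs formed by $\mu'$.

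Next I decompose $S$ into the pairs $\{f,\mu'(f)\}$ together with singletons, using $\mu'(S)=S$. A singleton cannot block, by individual rationality. So $S$ contains some pair $(f,w)$ with $\mu'(f)=w$ and with both agents strictly better off under $P^\star$, hence under $P$. Such a pair is not eliminated in $P^\star$, so it survives in $SP^{\mu_{E_F},\mu_{E_W}}$ as well. Moreover, the preference comparisons of $f$ and $w$ between $w$ and $\mu(f)$, and between $f$ and $\mu(w)$, coincide in the two profiles, because subpreferences preserve order among the remaining acceptable agents. Thus the two-agent coalition $\{f,w\}$ blocks $\mu$ in $SP^{\mu_{E_F},\mu_{E_W}}$, contradicting $\mu\in C(SP^{\mu_{E_F},\mu_{E_W}})$.

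For the second inclusion I repeat the argument with the roles of the two reductions exchanged. I use the other half of the compatibility discussion: for $(f,w)\in\mathcal E^{EADA}$ and $\mu\in C(SP^{\mu_F,\mu_W})$, we have $\mu(w)P_wf$ or $\mu(f)P_fw$, so again $\mu$ matches no eliminated pair.

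The main obstacle is the individual-rationality step, that is, ensuring that matchings of one subcore never use a pair eliminated only by the other reduction. This relies on the bounds for $C(SP^{\mu_{E_F},\mu_{E_W}})$ from Appendix~\ref{apendice descripcion EADAM} and on the lattice bounds $\mu_F,\mu_W$ for $C(SP^{\mu_F,\mu_W})$. I will invoke these as established in the compatibility discussion.
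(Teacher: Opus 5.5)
Your architecture is the paper's: both inclusions are reduced to the compatibility discussion of Subsection~\ref{subseccion de discusion de subperfiles}. You are also more precise than the paper about what that discussion must deliver. Since $P^\star\sqsubseteq SP^{\mu_{E_F},\mu_{E_W}}$, passing to $P^\star$ cannot create a blocking pair, so the only issue is that $\mu$ must not match a newly eliminated pair. The paper instead says these pairs ``cannot support dominance against $\mu$'', which is automatic for pairs that are unacceptable under $P^\star$.

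The gap is in how you justify this individual-rationality step for the first inclusion. You use the bounds $\mu_{E_F}\succeq_F\mu\succeq_F\mu_F$ and $\mu_{E_W}\succeq_W\mu\succeq_W\mu_W$ for every $\mu\in C(SP^{\mu_{E_F},\mu_{E_W}})$, but these are false. By Lemma~\ref{core es subconjunto de core de subperfil} and Remark~\ref{remark el conjunto de pares eliminados no contiene pares estables}, $\mu_W$ belongs to this core, and $\mu_W\succeq_F\mu_F$ fails whenever $\mu_W\neq\mu_F$. In Example~\ref{ex:wako}, $\mu_F(f_1)=w_4\,P_{f_1}\,w_1=\mu_W(f_1)$. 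Appendix~\ref{apendice descripcion EADAM} proves such bounds only for the EADA sequences $\mathcal M^{E_F}$ and $\mathcal M^{E_W}$, each on its own side, not for the whole core of the subprofile. So this step, as you justify it, rests on a false statement; the paper's proof has the same defect.

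The conclusion you need is nevertheless true, and the paper's own tools give it. By Remark~\ref{remark el conjunto de pares eliminados para muF y muW no contiene pares estables}, every $(f,w)\in\mathcal E^{IL}$ admits $\widehat\mu\in C(P)$ with $\widehat\mu(f)P_fw$ and $\widehat\mu(w)P_wf$. This is the core matching of the first alternative, or $\mu_W$ in the second. Now $\widehat\mu\in C(P)\subseteq C(SP^{\mu_{E_F},\mu_{E_W}})\in V_{IS}(P)$. If $\mu(f)=w$, then $f\in F(\widehat\mu,\mu)$, and the Decomposition Lemma gives $w\in W(\mu,\widehat\mu)$, i.e.\ $fP_w\widehat\mu(w)$, a contradiction.

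For the second inclusion, you use the bounds $\mu_F(f)R_f\mu(f)R_f\mu_W(f)$ and $\mu_W(w)R_w\mu(w)R_w\mu_F(w)$ on $C(SP^{\mu_F,\mu_W})$. These are correct, but they need an argument that neither you nor the paper supplies. Let $\nu$ be the firm-optimal element of $C(SP^{\mu_F,\mu_W})$.
\begin{itemize}
\item No pair outside $\mathcal E^{IL}$ blocks $\nu$ under $P$, since such a pair would also block it under $SP^{\mu_F,\mu_W}$.
\item No $(f,w)\in\mathcal E^{IL}$ blocks $\nu$, since $\nu(f)R_f\widehat\mu(f)P_fw$, with $\widehat\mu$ as above.
\end{itemize}
Hence $\nu\in C(P)$, which forces $\nu=\mu_F$, and symmetrically on the worker side. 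With these two repairs your argument is complete.
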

\begin{proof}
The proof is relegated to Appendix \ref{apendice pruebas}.
\end{proof}

We now examine the internal stability properties of the core of the combined subprofile. 
Since $P^\star$ is obtained by restricting the original profile without removing any stable pair of $P$, the core of this profile inherits internal stability when evaluated in the original market. 
Formally, we have the following result.

\begin{proposition}\label{core P^* satisfies internal stability}
Given the profile $P$ and the subprofile $P^\star$,  \(C(P^\star) \in V_{IS}(P)\).
\end{proposition}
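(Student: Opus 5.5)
The plan is to derive the statement directly from Lemma~\ref{todo core cumple estabilidad interna}, applied with $\widetilde P=P^\star$. That lemma has two hypotheses: $P^\star\sqsubseteq P$, and $\mathcal E^{P\to P^\star}$ contains no stable pair of $P$. We verify each in turn.

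For the first hypothesis, recall Definition~\ref{definicion de subperfil P star}: $P^\star$ is obtained from $P$ only by deleting pairs in $\mathcal E^{IL}\cup\mathcal E^{EADA}$ from the lists of both agents involved. Each $P^\star_a$ is therefore $P_a$ with some entries struck out and the relative order of the remaining entries unchanged. Both subpreference conditions follow at once. Mutual acceptability is also preserved, because every deletion removes $w$ from $P_f$ and $f$ from $P_w$ simultaneously, and $P$ itself was mutually acceptable on acceptable pairs. Hence $P^\star\sqsubseteq P$, and by construction $\mathcal E^{P\to P^\star}=\mathcal E^{IL}\cup\mathcal E^{EADA}$.

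For the second hypothesis, Remark~\ref{remark el conjunto de pares eliminados para muF y muW no contiene pares estables} states that $\mathcal E^{IL}$ contains no stable pair of $P$, and Remark~\ref{remark el conjunto de pares eliminados no contiene pares estables} states the same for $\mathcal E^{EADA}$. A union of two sets that each avoid the stable pairs of $P$ also avoids them, so $\mathcal E^{P\to P^\star}$ contains no stable pair of $P$. Lemma~\ref{todo core cumple estabilidad interna} then yields $C(P^\star)\in V_{IS}(P)$, and as a by-product Lemma~\ref{core es subconjunto de core de subperfil} gives $C(P)\subseteq C(P^\star)$.

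The only step needing any care is making sure the eliminations really are done on both sides, so that $P^\star$ is a genuine mutually acceptable subprofile. If not, individual rationality in $P^\star$ might not coincide with acceptability, and Lemma~\ref{todo core cumple estabilidad interna} would not apply. This is settled by the definition of "eliminates a pair", which requires removal from both lists. No further argument about how the two reductions interact is needed, since the lemma depends only on which pairs are removed, not on why they were removed.
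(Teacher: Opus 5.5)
Your proposal is correct and is essentially the paper's own proof: it applies Lemma~\ref{todo core cumple estabilidad interna} with $\widetilde P=P^\star$, using $P^\star\sqsubseteq P$ and the two Remarks to show that $\mathcal E^{P\to P^\star}=\mathcal E^{IL}\cup\mathcal E^{EADA}$ contains no stable pair of $P$. Your closing worry about mutual acceptability is unnecessary. The conclusion uses only two facts: pairs matched by a matching in $C(P^\star)$ are $P^\star$-acceptable, and $P^\star$ preserves the relative order of $P$. Hence any $P$-domination of one matching in $C(P^\star)$ by another would already give a $P^\star$-blocking pair.
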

\begin{proof}
The proof is relegated to Appendix \ref{apendice pruebas}.
\end{proof}

An important structural property of the subprofile $P^\star$ is that, for every firm $f \in F$, all pairs lying between $\mu_{E_F}(f)$ and $\mu_{E_W}(f)$ in the preference list $P^\star_f$ correspond to stable pairs of $P^\star$. 
The same holds symmetrically for each worker. 
The following example illustrates this feature.

\begin{examplecont}{ex:wako}
The two reductions eliminate the pairs $(f_2,w_1)$ and $(f_3,w_2)$. Hence,
\[
\mathcal{E}^{P \to P^\star}=\{(f_2,w_1),(f_3,w_2)\}.
\]
The resulting subprofile is given by
\begin{center}
\begin{tabular}{l l @{\hspace{1cm}} l l}
$P^\star_{f_{1}}:$ & $w_{2}$, $w_{4}$, $w_{1}$, $w_3$ &
$P^\star_{w_{1}}:$ & $f_{1}$, $f_{3}$, $f_4$ \\
$P^\star_{f_{2}}:$ & $w_{3}$, $w_{2}$, $w_4$ &
$P^\star_{w_{2}}:$ & $f_{2}$, $f_{4}$, $f_{1}$ \\
$P^\star_{f_{3}}:$ & $w_{3}$, $w_{1}$, $w_{4}$ &
$P^\star_{w_{3}}:$ & $f_{4}$, $f_{2}$, $f_{3}$, $f_1$ \\ 
$P^\star_{f_{4}}:$ & $w_{4}$, $w_{2}$, $w_{3}$, $w_1$ &
$P^\star_{w_{4}}:$ & $f_{3}$, $f_{1}$, $f_{4}$, $f_{2}$ \\ 
\end{tabular}
\end{center}
The core of the reduced profile is
\[
C\big(P^\star\big)=\{\mu_{E_F},\mu_F,\mu_1,\mu_W,\widetilde{\mu}\}.
\]
In particular,
\[
C(P)\subsetneq C\big(P^\star\big).
\]
As illustrated in this example and established by Proposition \ref{core P^* satisfies internal stability}, $C(P^\star)$ satisfies internal stability with respect to $P$.
\end{examplecont}

The previous results show that the core of the combined subprofile
$P^\star$ enlarges the original core while remaining internally stable
with respect to the original profile. The previous example illustrates these properties by exhibiting additional undominated matchings that arise
after combining the Irving--Leather and EADA reductions.

The next two propositions extend
Propositions~\ref{proposition mu no estable fuera de VI_s}
and~\ref{Proposition Resultado 2}
from the core of the original profile to the enlarged core
$C(P^\star)$.

The key observation is that, by construction of the EADA mechanism,
$\mu_{E_F}$ and $\mu_{E_W}$ bound the set $C(P^\star)$ from the perspective
of the two sides of the market. In particular, $\mu_{E_F}$ is
firm-optimal and $\mu_{E_W}$ is worker-optimal within
$C(P^\star)$. Therefore, for every $\mu\in C(P^\star)$,
\[
\mu_{E_F}\succeq_F\mu\succeq_F\mu_{E_W}
\qquad\text{and}\qquad
\mu_{E_W}\succeq_W\mu\succeq_W\mu_{E_F}.
\]
The next proposition is the analogue of Proposition~\ref{proposition mu no estable fuera de VI_s}, with $\mu_W$ replaced by $\mu_{E_W}$.

\begin{proposition}\label{result 1 for P^*}
Let $\mu \in \mathcal{IR}$. 
Assume that $F(\mu_{E_W}, \mu) \neq \emptyset$, and that no matching $\mu_1 \in \mathcal{IR}$ satisfies $\mu_1 \succeq_W \mu_{E_W}$. 
Then there is no set $V_{IS} \in V_{IS}(P)$ such that both $\mu$ and $\mu_{E_W}$ belong to $V_{IS}$.
\end{proposition}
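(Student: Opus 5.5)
We argue by contradiction, reproducing the argument behind Proposition~\ref{proposition mu no estable fuera de VI_s} with $\mu_W$ replaced by $\mu_{E_W}$. The point to check is that that argument used only two facts: the Decomposition Lemma, and the hypothesis that no individually rational matching $\succeq_W$-improves on the reference matching. It did not use the core property of $\mu_W$.

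Suppose some $V_{IS}\in V_{IS}(P)$ contains both $\mu$ and $\mu_{E_W}$. Since $F(\mu_{E_W},\mu)\neq\emptyset$, Proposition~\ref{existencia de un IS-cycle} (applied with $\mu'=\mu_{E_W}$) gives an IS-cycle $O^{\mu_{E_W},\mu}$. By Theorem~\ref{Teorema descomposicion ciclos disjuntos} we may take a full collection of disjoint IS-cycles $O_1,\ldots,O_k$ covering $F(\mu_{E_W},\mu)$. By the corollary, these cycles also cover $W(\mu,\mu_{E_W})$, the set of workers strictly better off under $\mu$ than under $\mu_{E_W}$. By the Decomposition Lemma, $\mu$ and $\mu_{E_W}$ map $F(\mu,\mu_{E_W})$ onto $W(\mu_{E_W},\mu)$ and $F(\mu_{E_W},\mu)$ onto $W(\mu,\mu_{E_W})$. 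In particular, $W(\mu,\mu_{E_W})\neq\emptyset$.

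Next we build a matching $\mu_1$ that gives each worker her better partner between $\mu$ and $\mu_{E_W}$. Set $\mu_1(w)=\mu(w)$ for $w\in W(\mu,\mu_{E_W})$ and $\mu_1(w)=\mu_{E_W}(w)$ otherwise. The Decomposition Lemma makes this well defined: the firms matched under $\mu$ to workers in $W(\mu,\mu_{E_W})$ are exactly $F(\mu_{E_W},\mu)$, and those firms are not matched under $\mu_{E_W}$ to any worker outside $W(\mu,\mu_{E_W})$. Hence no firm receives two workers. Moreover, $\mu_1$ is individually rational, because every pair it uses is a pair of either $\mu$ or $\mu_{E_W}$, and both matchings are in $\mathcal{IR}$ ($\mu$ by hypothesis, $\mu_{E_W}$ by Proposition~\ref{core P^* satisfies internal stability} together with $V_{IS}\subseteq\mathcal{IR}$). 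By construction every worker weakly prefers $\mu_1$ to $\mu_{E_W}$, and the workers in $W(\mu,\mu_{E_W})\neq\emptyset$ strictly prefer it. Thus $\mu_1\succeq_W\mu_{E_W}$, contradicting the hypothesis.

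The main obstacle is the well-definedness of $\mu_1$. We must check that gluing $\mu$ on the cycles to $\mu_{E_W}$ off the cycles never assigns a firm twice, and that agents unmatched in one matching cause no conflict. If this fails in some case, the fallback is to build $\mu_1$ cycle by cycle: on each IS-cycle $(w_0,f_1,\ldots,w_{k-1},f_k)$, use the $\mu$-pairs $(f_i,w_{i-1})$, and keep $\mu_{E_W}$ everywhere else. The cycle structure guarantees closure, since every firm and every worker on the cycle is reassigned exactly once.
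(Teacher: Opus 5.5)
Your proof is correct and is essentially the paper's argument. The paper reruns the proof of Proposition~\ref{proposition mu no estable fuera de VI_s}: it glues $\mu$ on the union of the IS-cycles, which is exactly $F(\mu_{E_W},\mu)\cup W(\mu,\mu_{E_W})$, onto $\mu_{E_W}$ elsewhere, and obtains an individually rational matching that $\succeq_W$-improves on $\mu_{E_W}$; this is precisely your $\mu_1$. Your direct well-definedness check via the Decomposition Lemma's bijections makes the cycle machinery and the fallback unnecessary. Individual rationality of $\mu_{E_W}$ follows simply from $\mu_{E_W}\in V_{IS}\subseteq\mathcal{IR}$, or from its being a DA outcome on a subprofile of $P$, so the appeal to Proposition~\ref{core P^* satisfies internal stability} is superfluous.
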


\begin{proof}
The argument follows the same reasoning as in Proposition~\ref{proposition mu no estable fuera de VI_s}, applied to the bounds induced by $\mu_{E_W}$.
\end{proof}

By symmetry, if $W(\mu_{E_F},\mu)\neq\emptyset$ and no matching $\mu_2\in\mathcal{IR}$ satisfies $\mu_2\succeq_F\mu_{E_F}$, then there is no set $V_{IS}\in V_{IS}(P)$ such that both $\mu$ and $\mu_{E_F}$ belong to $V_{IS}$.

The next proposition is the analogue of
Proposition~\ref{Proposition Resultado 2},
where the core of the original profile is replaced by the enlarged core
$C(P^\star)$.
\begin{proposition}\label{result 2 for P^*}
Let $\mu \in \mathcal{IR}$ and let $(f,w)$ be such that $\mu(f)=w$. 
If $(f,w)$ is not a stable pair of $P^\star$ and 
\[
\mu_{E_F}(f) P_f \mu(f) P_f \mu_{E_W}(f),
\]
then there is no set $V_{IS} \in V_{IS}(P)$ such that 
\[
\{\mu\}\cup C(P^\star) \subseteq V_{IS}.
\]
\end{proposition}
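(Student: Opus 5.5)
We argue by contradiction, following the proof of Proposition~\ref{Proposition Resultado 2} with $C(P)$, $\mu_F,\mu_W$ and $P$-stable pairs replaced by $C(P^\star)$, $\mu_{E_F},\mu_{E_W}$ and $P^\star$-stable pairs. Suppose $V_{IS}\in V_{IS}(P)$ contains $\{\mu\}\cup C(P^\star)$. Since $\mu_{E_F},\mu_{E_W}\in C(P^\star)$ (Proposition~\ref{EADAM subconjunto de P'} with Proposition~\ref{proposition 1 de P*}), the pairs $(\mu,\mu_{E_F})$ and $(\mu,\mu_{E_W})$ both lie in $V_{IS}$. The Decomposition Lemma~\ref{lema descomposicion para estables} and Theorem~\ref{Teorema descomposicion ciclos disjuntos} then apply to each pair.

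\textbf{Setting up the cycle.} Because $\mu_{E_F}(f)P_f w$, we have $f\in F(\mu_{E_F},\mu)$. By Theorem~\ref{Teorema descomposicion ciclos disjuntos}, $f$ lies in an IS-cycle $O^{\mu_{E_F},\mu}$, and $w=\mu(f)$ is the worker preceding $f$ in it. By the Decomposition Lemma, $w\in W(\mu,\mu_{E_F})$, so $fP_w\mu_{E_F}(w)$. Symmetrically, $wP_f\mu_{E_W}(f)$ gives $f\in F(\mu,\mu_{E_W})$. The Decomposition Lemma applied to $(\mu_{E_W},\mu)$ then gives $w\in W(\mu_{E_W},\mu)$, so $\mu_{E_W}(w)P_wf$. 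Hence in $P^\star$ (the pair is acceptable since $\mu\in\mathcal{IR}$; I must check $(f,w)\notin\mathcal E^{P\to P^\star}$ or treat that case separately) the pair $(f,w)$ satisfies
\[
\mu_{E_F}(f)\,P_f\, w\,P_f\,\mu_{E_W}(f)\qquad\text{and}\qquad \mu_{E_W}(w)\,P_w\, f\,P_w\,\mu_{E_F}(w).
\]
So $w$ lies strictly between the $f$-extremes of $C(P^\star)$ and $f$ strictly between the $w$-extremes.

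\textbf{Invoking the structure of $P^\star$.} Next we use the structural property announced just before the example: every pair lying between $\mu_{E_F}(f)$ and $\mu_{E_W}(f)$ in $P^\star_f$, and symmetrically for workers, is a stable pair of $P^\star$. This is the $P^\star$-analogue of the Irving--Leather fact that, after the reduction in Remark~\ref{remark el conjunto de pares eliminados para muF y muW no contiene pares estables}, every pair surviving inside the lattice interval is stable. This would make $(f,w)$ a stable pair of $P^\star$, contradicting the hypothesis.

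If $(f,w)$ was instead eliminated, then $(f,w)\in\mathcal E^{IL}\cup\mathcal E^{EADA}$. We use the characterizations in Remarks~\ref{remark el conjunto de pares eliminados para muF y muW no contiene pares estables} and~\ref{remark el conjunto de pares eliminados no contiene pares estables}. An EADA pair with $wP_f\mu_F(f)$ and $\mu_F(w)P_wf$ conflicts with $fP_w\mu_{E_F}(w)$ together with $\mu_{E_F}\succeq_W$-bounds; the worker-proposing case is symmetric. An IL pair is handled using the core matching $\mu'$ provided by the remark. In each case, either a displayed inequality is contradicted, or a coalition $\{f,w\}$ with $\mu'\in C(P)\subseteq V_{IS}$ yields $\mu'\succ_{\{f,w\}}\mu$ or the reverse, violating internal stability.

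\textbf{Main obstacle.} The hard step is justifying the structural property that pairs strictly inside the $\mu_{E_F}$--$\mu_{E_W}$ interval of $P^\star$ are $P^\star$-stable; this is only asserted, not proved, before the statement. I would first try to derive it from the Irving--Leather reduction applied to $P^\star$ itself. Its extremal core matchings are $\mu_{E_F},\mu_{E_W}$, and the IL pairs of $P^\star$ should already lie in $\mathcal E^{IL}\cup\mathcal E^{EADA}$, because the EADA step only shifts the extremes. If that fails, I would fall back on a direct rotation argument: starting from $\mu_{E_F}$, rotate toward $\mu_{E_W}$ in $C(P^\star)$ and exhibit a core matching of $P^\star$ that matches $f$ to $w$.
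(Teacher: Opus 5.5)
Your derivation of $\mu_{E_W}(w)\,P_w\,f\,P_w\,\mu_{E_F}(w)$ from the Decomposition Lemma is correct. It plays the role of Cases~1--2 in the paper, whose proof simply transfers the proof of Proposition~\ref{Proposition Resultado 2} to $C(P^\star)$. The gap is the step that follows. The ``structural property'' you invoke is only asserted in the text, and the paper's proof never uses it. It is also false, even in the two-sided form you need. Take firms $b_1,b_2,f,g,h$ and workers $w,x,a_1,a_2,z$, with $P_{b_1}: w,x$; $P_{b_2}: x,w$; $P_f: a_1,w,a_2$; $P_g: a_2,a_1$; $P_h: w,a_1,z$; $P_w: b_2,h,f,b_1$; $P_x: b_1,b_2$; $P_{a_1}: g,h,f$; $P_{a_2}: f,g$; $P_z: h$. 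Both versions of DA return $\{b_1x,b_2w,fa_2,ga_1,hz\}$, so $C(P)$ is a singleton and $\mathcal E^{IL}=\emptyset$. Worker-proposing DA has no rejections, and firm-proposing EADA removes $(h,a_1)$ and then $(h,w)$. Hence $C(P^\star)=\{\mu_{E_F},M_1,\mu_{E_W}\}$, where $\mu_{E_F}=\{b_1w,b_2x,fa_1,ga_2,hz\}$, $M_1=\{b_2w,b_1x,fa_1,ga_2,hz\}$ and $\mu_{E_W}=\{b_2w,b_1x,fa_2,ga_1,hz\}$. The pair $(f,w)$ survives in $P^\star$ and satisfies $\mu_{E_F}(f)=a_1\,P_f\,w\,P_f\,a_2=\mu_{E_W}(f)$ and $\mu_{E_W}(w)=b_2\,P_w\,f\,P_w\,b_1=\mu_{E_F}(w)$. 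Yet no matching in $C(P^\star)$ contains it: $w$ moves up to $b_2$ at $M_1$ before $f$ leaves $a_1$, so $f$ later skips $w$. The proposition's hypotheses therefore hold exactly where your step claims they cannot. Nothing in the construction of $P^\star$ deletes such pairs, because the Irving--Leather eliminations are computed from the lattice of $C(P)$, not of $C(P^\star)$.

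The missing idea is one more use of internal stability. You do not need $(f,w)$ to be $P^\star$-stable. You need some $\widehat\mu\in C(P^\star)$ with $\widehat\mu(f)P_fw$ and $\widehat\mu(w)P_wf$: then $f\in F(\widehat\mu,\mu)$, and the Decomposition Lemma gives $w\in W(\mu,\widehat\mu)$, i.e.\ $fP_w\widehat\mu(w)$, a contradiction. This is Case~3 of the paper's argument. Walk down a maximal chain of $C(P^\star)$ from $\mu_{E_F}$ to $\mu_{E_W}$, and let $\widehat\mu$ be the last matching in which $f$ is still above $w$. The cycle elimination that moves $f$ below $w$ sends $f$ to the first worker after $\widehat\mu(f)$ on $P^\star_f$ who prefers $f$ to her current partner. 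Since $w$ is skipped, $\widehat\mu(w)P_wf$; in the example, $\widehat\mu=M_1$. Your fallback rotation argument is thus the right tool aimed at the wrong target: it cannot yield a core matching containing $(f,w)$, but it does yield $\widehat\mu$.

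Your treatment of eliminated pairs also needs repair. A relation $\mu'\succ_{\{f,w\}}\mu$ is impossible because $\mu'(f)\neq w$. For pairs in $\mathcal E^{IL}$, the contradiction comes instead from applying the Decomposition Lemma to $(\mu',\mu)$ or to $(\mu_W,\mu)$, according to the two alternatives listed for $\mathcal E^{IL}$; that part does work. For interrupter pairs there is no conflict at all. For a firm-proposing one, $\mu_{E_F}(f)P_fwP_f\mu_F(f)$ and $\mu_F(w)P_wfP_w\mu_{E_F}(w)$ are consistent with $\mu_{E_F}\succeq_F\mu_F$ and $\mu_F\succeq_W\mu_{E_F}$. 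The skip argument is also unavailable there, because $w$ is not on $P^\star_f$. That subcase needs a separate argument, which the paper's one-line transfer also leaves implicit.
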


\begin{proof}
The proof follows the same argument as Proposition~\ref{Proposition Resultado 2}, replacing $C(P)$ with $C(P^\star)$.
\end{proof}

\subsection{Characterization of the von Neumann--Morgenstern stable set}\label{seccion caracterizacion de VNM}

We now turn to the main result of the paper, which brings together all the
structural ingredients developed in the previous sections.

To develop our analysis, we first generalized the classical Decomposition Lemma and introduced IS-cycles to describe the structure of internally stable sets. We then constructed two complementary reductions of the preference profile: one based on the lattice structure of the core through the Irving--Leather reduction, and another based on the efficiency improvements identified by the EADA mechanism. Combining these two sources of information led to the reduced profile $P^\star$.

Moreover, the construction of $P^\star$ removes no stable pair of the
original profile. Consequently,
\[
C(P)\subseteq C(P^\star).
\]
Since $C(P)\neq\emptyset$, it follows that $C(P^\star)\neq\emptyset$.
The previous sections also established that $C(P^\star)$ contains all
matchings revealed by the two reduction procedures and remains internally
stable with respect to the original profile.

The theorem below shows that these structural results are sufficient to
completely characterize the vNM stable set.
More importantly, they transform the problem of computing a cooperative
solution concept defined through dominance relations into the classical
problem of computing the core of an explicitly constructed reduced
preference profile.

The only remaining step is to establish external stability. Once this is
proved, it follows that $C(P^\star)$ coincides with the unique
vNM stable set.

\begin{theorem}\label{main theorem}
Let $P$ be a one-to-one matching market, and let $P^\star$ be the reduced
preference profile constructed in
Definition~\ref{definicion de subperfil P star}. Then,
the unique vNM stable set of $P$ is given by the core
of $P^\star$. Equivalently,
\[
\mathrm{vNM}(P)=\{\,C(P^\star)\,\}.
\]
\end{theorem}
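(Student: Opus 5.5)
The proof has two parts. First we show that $C(P^\star)$ is a vNM stable set of $P$. Then we show that every vNM stable set equals $C(P^\star)$.

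\emph{Membership.} We need $C(P^\star)\subseteq\mathcal{IR}$ and internal stability.
\begin{itemize}
\item Individual rationality holds because $P^\star\sqsubseteq P$: a partner acceptable under $P^\star$ is acceptable under $P$.
\item Internal stability is Proposition~\ref{core P^* satisfies internal stability}.
\end{itemize}

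\emph{External stability.} Fix $\mu'\in\mathcal{M}\setminus C(P^\star)$. We must find $\mu\in C(P^\star)$ and a coalition $S$ with $\mu\succ_S\mu'$ under $P$.
\begin{itemize}
\item If $\mu'\notin\mathcal{IR}$, some agent $a$ strictly prefers being unmatched. Take any $\mu\in C(P^\star)$, which is nonempty since $C(P)\subseteq C(P^\star)$. Either $\mu(a)=\emptyset$, or $\mu(a)$ is acceptable to $a$; in both cases $\mu(a)P_a\mu'(a)$. So $\mu$ dominates $\mu'$ via $S=\{a\}$ if $\mu(a)=\emptyset$, and via $S=\{a,\mu(a)\}$ otherwise. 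In the latter case we also need $\mu(a)$ to prefer $a$ to $\mu'(\mu(a))$; if this fails, we switch to a different core element.
\item So assume $\mu'\in\mathcal{IR}$. Then $\mu'$ is blocked under $P^\star$ by some pair $(f,w)$ that is acceptable under $P^\star$. Such a pair is also blocking under $P$, since $P^\star$ preserves orders among the pairs that remain acceptable.
\item We then use the standard construction of a stable matching containing a blocking pair. Run deferred acceptance in $P^\star$ starting from the blocking pair, or use the lattice of $C(P^\star)$ bounded by $\mu_{E_F}$ and $\mu_{E_W}$. The goal is a $\mu\in C(P^\star)$ with $\mu(f)=w$, or at least $\mu(f)R_f w$ and $\mu(w)R_w f$ with one side strict.
\end{itemize}

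This last step is the main obstacle. A blocking pair of $\mu'$ need not be a stable pair of $P^\star$, so we cannot simply pick a core matching that contains it. The plan is to take a blocking pair that is extremal: a pair $(f,w)$ where $w$ is $f$'s most preferred worker among $f$'s blocking partners of $\mu'$ under $P^\star$. We then compare $w$ with the bounds $\mu_{E_F}(f)$ and $\mu_{E_W}(f)$.
\begin{itemize}
\item If $w$ lies between these bounds, the structural property of $P^\star$ (every acceptable partner between the EADA bounds is a stable pair of $P^\star$) gives a $\mu\in C(P^\star)$ matching $f$ to $w$. Then $\mu$ dominates $\mu'$ via $\{f,w\}$.
\item If $wP_f\mu_{E_F}(f)$, the bound from Remark~\ref{remark el conjunto de pares eliminados no contiene pares estables} forces $\mu_{E_F}(w)P_w f$. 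Since $w$ prefers $f$ to $\mu'(w)$, this gives $\mu_{E_F}(w)P_w\mu'(w)$. We then follow a chain $w\to\mu_{E_F}(w)\to\cdots$ of agents who prefer $\mu_{E_F}$ to $\mu'$. We expect this chain to close into an IS-cycle-like coalition $S$, by the argument of Theorem~\ref{Teorema descomposicion ciclos disjuntos}, and then $\mu_{E_F}\succ_S\mu'$.
\item The case below $\mu_{E_W}(f)$ is symmetric, using $\mu_{E_W}$.
\end{itemize}
The compatibility discussion of Subsection~\ref{subseccion de discusion de subperfiles} ensures that eliminated pairs never create an undominated $\mu'$.

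\emph{Uniqueness.} Let $V\in\mathrm{vNM}(P)$. By Remark~\ref{remark de core en von neuman}, $C(P)\subseteq V$.

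First we show $C(P^\star)\subseteq V$. Take $\nu\in C(P^\star)\setminus V$. By external stability of $V$, some $\mu\in V$ dominates $\nu$ under $P$ via some $S$. We get a contradiction as follows:
\begin{itemize}
\item Since $\nu\in C(P^\star)$, the coalition $S$ must use a pair eliminated from $P$, that is, a pair in $\mathcal E^{IL}\cup\mathcal E^{EADA}$.
\item By Remarks~\ref{remark el conjunto de pares eliminados para muF y muW no contiene pares estables} and~\ref{remark el conjunto de pares eliminados no contiene pares estables}, such a pair lies outside the core bounds. Proposition~\ref{proposition mu no estable fuera de VI_s}, Proposition~\ref{result 1 for P^*}, or Proposition~\ref{Proposition Resultado 2} then shows that $\mu$ cannot coexist in an internally stable set with $C(P)$, or with $\mu_W$, or with $\mu_F$. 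This contradicts $C(P)\subseteq V$.
\item This argument may need induction along the lattice. We first show that $\mu_{E_F},\mu_{E_W}\in V$, using Proposition~\ref{E_F y E_W cumple estabilidad interna} and external stability along the EADA sequences. We then extend to all of $C(P^\star)$.
\end{itemize}

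For the reverse inclusion, we use that $C(P^\star)$ is externally stable, as shown above. Any $\mu\in V\setminus C(P^\star)$ is dominated by some element of $C(P^\star)\subseteq V$, which contradicts internal stability of $V$. Hence $V=C(P^\star)$.
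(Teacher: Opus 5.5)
Both halves of your argument have genuine gaps.

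In external stability, the decisive case is a blocking pair $(f,w)$ with $w$ above $\mu_{E_F}(f)$ or below $\mu_{E_W}(f)$. There you rely on the expectation that the chain $w\to\mu_{E_F}(w)\to\cdots$ closes into a coalition $S$ with $\mu_{E_F}\succ_S\mu'$. Theorem~\ref{Teorema descomposicion ciclos disjuntos} cannot supply this. IS-cycles exist only for two matchings already known to lie in a common internally stable set. Moreover, they describe exactly the configuration in which neither matching dominates the other: firms on the cycle prefer $\mu'$, workers prefer $\mu_{E_F}$. Nor does the chain close on its own. Once $\mu_{E_F}(w)$ prefers $\mu'$, continuing requires stability of $\mu_{E_F}$ in $P^\star$, which says nothing when the $\mu'$-pair being followed has been eliminated.

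You also overlook matchings $\mu'\in\mathcal{IR}$ that match a pair of $\mathcal E^{IL}\cup\mathcal E^{EADA}$. Such a $\mu'$ lies outside $C(P^\star)$ merely because it is not individually rational under $P^\star$. In that case your claim that a $P^\star$-blocking pair also blocks under $P$ is false, since the agent matched through the eliminated pair may prefer that partner under $P$. So even your between-the-bounds step can yield a false dominance claim. The compatibility discussion concerns only the cores of the two subprofiles and does not cover this.

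The missing idea is the paper's contradiction scheme. Suppose $\mu'\in\mathcal{IR}\setminus C(P^\star)$ is undominated by $C(P^\star)$. Then $\{\mu'\}\cup C(P^\star)\in V_{IS}(P)$. The only nontrivial point, that $\mu'$ cannot dominate any $\nu\in C(P^\star)\setminus C(P)$, is exactly Lemma~\ref{Lema para unicidad}. Only then do the Decomposition Lemma and IS-cycles apply, and they are used the other way round. Proposition~\ref{result 1 for P^*} glues $\mu'$ onto $\mu_{E_F}$ or $\mu_{E_W}$ along IS-cycles to get a Pareto improvement for the proposing side, contradicting EADA efficiency. Proposition~\ref{result 2 for P^*} handles a $P^\star$-unstable matched pair strictly between the bounds.

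In uniqueness, your reverse inclusion is fine. Your $\mathcal E^{IL}$ case, via Proposition~\ref{Proposition Resultado 2} together with $C(P)\subseteq V$, is Case~1 of Lemma~\ref{Lema para unicidad}. The $\mathcal E^{EADA}$ case, however, does not go through:
\begin{itemize}
\item Proposition~\ref{proposition mu no estable fuera de VI_s} needs $\mu_F$ (resp.\ $\mu_W$) to be Pareto-undominated for firms (resp.\ workers) within $\mathcal{IR}$. This is false as soon as $\mu_{E_F}\neq\mu_F$ (resp.\ $\mu_{E_W}\neq\mu_W$).
\item Proposition~\ref{result 1 for P^*} only rules out coexistence with $\mu_{E_F}$ or $\mu_{E_W}$. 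That contradicts nothing unless these already lie in $V$, which is part of what you are proving.
\item Proposition~\ref{E_F y E_W cumple estabilidad interna} gives internal stability of the EADA sequences, not membership in $V$.
\end{itemize}
The paper needs no induction. If $\nu\in C(P^\star)\setminus V$ is dominated by $\hat\mu\in V$, then $\nu\notin C(P)$ and $\hat\mu\notin C(P^\star)$. So Lemma~\ref{Lema para unicidad} yields $\bar\mu\in C(P)\subseteq V$ dominating $\hat\mu$, contradicting internal stability of $V$. Everything therefore hinges on the $\mathcal E^{EADA}$ case of that lemma. Be aware that the reason the paper gives there, that $\mu_F$ dominates through $\{\mu_F(w),w\}$, is not implied by the two inequalities it displays. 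This is where a real argument is still needed.

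Finally, your non-IR bullet is not an argument, and its claim can fail. Take $P_{f_1}\colon w_2$, $P_{f_2}\colon w_1,w_2$, $P_{w_1}\colon f_1,f_2$, $P_{w_2}\colon f_2,f_1$. Then $C(P^\star)=C(P)=\{\mu\}$ with $\mu(f_1)=w_2$ and $\mu(f_2)=w_1$. This $\mu$ does not dominate the non-IR matching $\{(f_1,w_1),(f_2,w_2)\}$, because both workers prefer the latter. External stability must be read relative to $\mathcal{IR}$, as the paper's proof does.
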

\begin{proof}
By Proposition~\ref{core P^* satisfies internal stability}, we have that 
\(C(P^\star)\in V_{IS}(P)\). 
We first need to show that \(C(P^\star)\in V_{ES}(P)\). 
Suppose, by contradiction, that $C(P^\star)$ does not satisfy external stability. Then,
there exists a matching $\mu\in IR\setminus C(P^\star)$ such that no matching
$\widehat{\mu}\in C(P^\star)$ and no coalition $S$ satisfy
$\widehat{\mu}\succ_S\mu$. We claim that the enlarged set
$\{\mu\}\cup C(P^\star)$ satisfies internal stability.

By Proposition~\ref{core P^* satisfies internal stability}, no two matchings in $C(P^\star)$ dominate each other.
Moreover, by the choice of $\mu$, no matching in $C(P^\star)$ dominates $\mu$.
It only remains to prove that $\mu$ does not dominate any matching in
$C(P^\star)$.

Suppose, to the contrary, that there exist a matching
$\mu'\in C(P^\star)$ and a coalition $S$ such that
$\mu\succ_S\mu'$.
If $\mu'\in C(P)$, this contradicts the definition of the core.
Hence, $\mu'\in C(P^\star)\setminus C(P)$.
By Lemma \ref{Lema para unicidad}, there exist a matching
$\overline{\mu}\in C(P)$ and a coalition $S'$ such that
$\overline{\mu}\succ_{S'}\mu$.
Since $C(P)\subseteq C(P^\star)$, this contradicts the choice of $\mu$.
Therefore, $\{\mu\}\cup C(P^\star)\in V_{IS}(P)$.

Since \(\mu \notin C(P^\star)\), we must distinguish two cases. 
First, all pairs matched by \(\mu\) are stable with respect to \(P^\star\); in this case, the blocking coalition \(S\) consists of an unmatched pair under $\mu$. 
Second, \(\mu\) assigns at least one unstable pair with respect $P^\star$; in this case, the blocking coalition \(S\) consists of a pair that is unstable under 
\(P^\star\).

\begin{description}
    \item[\textbf{Case 1: all pairs matched by \(\boldsymbol{\mu}\)  are stable with respect to \(\boldsymbol{P^\star}\).} ]
Consider \(S=\{f,w\}\), where \((f,w)\) is an unmatched pair under $\mu$. 
If \((f,w)\) is a stable pair of \(P^\star\), it is straightforward that there is $\mu'\in C(P^\star)$ such that
\(\mu' \succ_{\{f,w\}} \mu\).
If instead \((f,w)\) is unstable with respect to \(P^\star\), and since $S$ is a blocking coalition, we have
\[
w \; P^\star_f \; \mu(f)
\qquad\text{and}\qquad
f \; P^\star_w \; \mu(w).
\]
By the hypothesis of this case,  for each pair 
\((f',w')\) satisfying \(\mu(f')=w'\), there is \(\mu' \in C(P^\star)\) such that \(\mu'(f') = w'\). 
Hence,
\[
w \; P^\star_f \; \mu(f) = \mu'(f)
\qquad\text{and}\qquad
f \; P^\star_w \; \mu(w) = \mu'(w),
\]
which implies that \(S=\{f,w\}\) is also a blocking coalition for \(\mu'\). 
This is a contradiction, since \(\mu'\in C(P^\star)\).

\item[\textbf{Case 2: $\boldsymbol{\mu$ assigns at least one unstable pair with respect to $P^\star}$.}] That is, there is a pair \((f,w)\) that is not stable under \(P^\star\) and for which \(\mu(f) = w\).
Since every matching in
$C(P^\star)$
lies between
$\mu_{E_F}$
and
$\mu_{E_W}$,
every unstable matched pair must fall into exactly one of the following
three situations.
  \begin{description}
      \item[\text{Subcase 1: \(\boldsymbol{w = \mu(f) \; P_f \; \mu_{E_F}(f)}\).}]
Thus, \(F(\mu,\mu_{E_F}) \neq \emptyset\) in \(P\). Since  $\mu, \mu_{E_F}\in\{\mu\}\cup C(P^\star)$,  by the Decomposition Lemma, we have 
\(W(\mu_{E_F},\mu) \neq \emptyset\) in \(P\). 
By the Pareto efficiency properties of the EADA mechanism established by \cite{kesten2010school}, and recalled in Appendix~\ref{apendice descripcion EADAM}, there is no matching \(\widetilde{\mu} \in \mathcal{IR}\) such that $\widetilde{\mu} \succeq_F \mu_{E_F}$. 
Hence, by Proposition~\ref{result 1 for P^*}, there is no \(V_{IS}\) containing both \(\mu\) and \(\mu_{E_F}\). 
 
\item[Subcase 2: \(\boldsymbol{\mu_{E_F}(f) \; P_f \; w = \mu(f) \; P_f \; \mu_{E_W}(f)}\).]
Since $(f,w)$ is not a stable pair of $P^*$, the hypotheses of Proposition~\ref{result 2 for P^*} are satisfied. Therefore, there is no set $V_{IS} \in V_{IS}(P)$ such that $\{\mu\} \cup C(P^*) \subseteq V_{IS}$.
Therefore, there is no \(V_{IS}\) such that 
\(\{\mu\}\cup C(P^\star) \subseteq V_{IS}\). 

\item[\text{Subcase 3: $\boldsymbol{\mu_{E_W}(f) \;  P_f \; \mu(f) = w}$.}] Thus, \(F(\mu_{E_W}, \mu) \neq \emptyset\) in \(P\). By the Pareto efficiency properties of the EADA mechanism established by \cite{kesten2010school}, and recalled in Appendix~\ref{apendice descripcion EADAM}, there is no matching \(\widetilde{\mu} \in \mathcal{IR}\) such that $\widetilde{\mu} \succeq_W \mu_{E_W}$.
Hence, by Proposition~\ref{result 1 for P^*}, there is no \(V_{IS}\) containing both \(\mu\) and \(\mu_{E_W}\). 
 \end{description}
In all three subcases considered, we obtain that there is no set 
\(V_{IS} \in V_{IS}(P)\) that contains a matching 
\(\mu \in \mathcal{IR}\setminus C(P^\star)\) together with the matchings in 
\(C(P^\star)\). 
Therefore, the contradiction arises from the fact that the enlarged set 
\(\{\mu\}\cup C(P^\star)\) satisfies internal stability.
\end{description}
Therefore, by analyzing these two cases, $C(P^\star)$ satisfies external stability with respect to $P$, therefore $C(P^\star) \in \text{vNM}(P)$.

We now prove that no other internally and externally stable set can exist.
To prove uniqueness, we proceed by contradiction. Assume that there is another stable set $V$ for the market $P$.
By Remark~\ref{remark de core en von neuman}, we have
$C(P)\subseteq V.$

To prove uniqueness, let $V\in \text{vNM}(P)$ and suppose that
$V\neq C(P^\star)$.
We proceed in two steps.
First, we show that no stable set can omit a matching belonging to
$C(P^\star)$.
Second, we prove that no stable set can contain a matching outside
$C(P^\star)$.

We first rule out the possibility that another stable set omits some
matching belonging to $C(P^\star)$.
Suppose, to the contrary, that
$V\subsetneq C(P^\star)$.
Then there exists a matching
$\mu'\in C(P^\star)\setminus V$.
Since $V$ satisfies external stability, there exist
$\mu\in V$ and a coalition $S$ such that
$\mu\succ_S\mu'$.
However, since
$V\subseteq C(P^\star)$,
both $\mu$ and $\mu'$ belong to
$C(P^\star)$.
This contradicts
Proposition~\ref{core P^* satisfies internal stability}, which states that
$C(P^\star)$ is internally stable with respect to $P$.
Hence,
$V$ cannot be a proper subset of
$C(P^\star)$.

We next rule out the possibility that a stable set contains a matching outside $C(P^\star)$. Since $V\neq C(P^\star)$ and we have already established that $V$ cannot be a proper subset of $C(P^\star)$, there must exist a matching
$\widetilde{\mu}\in V\setminus C(P^\star)$.

Since $C(P^\star)$ satisfies external stability, there is a matching
$\mu'\in C(P^\star)$ and a coalition $S$ such that $\mu'\succ_S\widetilde{\mu}.$ Because $\widetilde{\mu}\in V$ and $V$ satisfies internal stability, we
must have $\mu'\notin V.$

By the external stability of $V$, there is a matching
$\widehat{\mu}\in V$ and a coalition $T$ such that $\widehat{\mu}\succ_T\mu'.$
Since $\mu'\in C(P^\star)$ and $C(P^\star)$ satisfies internal stability,
it follows that
\[
\widehat{\mu}\notin C(P^\star).
\]

Moreover, $\mu'\notin C(P)$, since no matching in the core of $P$ can be
dominated under the original preference profile. Therefore,
\[
\mu'\in C(P^\star)\setminus C(P).
\]

Applying Lemma~\ref{Lema para unicidad} to
$\widehat{\mu}\succ_T\mu'$, there exist a matching
$\overline{\mu}\in C(P)$ and a coalition $T'$ such that $\overline{\mu}\succ_{T'}\widehat{\mu}.$
Since $C(P)\subseteq V$, we have $\overline{\mu},\widehat{\mu}\in V.$
Thus, two matchings belonging to $V$ satisfy
\[
\overline{\mu}\succ_{T'}\widehat{\mu},
\]
which contradicts the internal stability of $V$.

Therefore, no vNM stable set distinct from
$C(P^\star)$ exists. Hence, $C(P^\star)$ is the unique
vNM stable set of the market.
\end{proof}

We next discuss how the characterization provided by Theorem~\ref{main theorem} specializes under different configurations of the market. 
These cases clarify when the unique vNM stable set coincides with the core of the original profile or with the core of one of the intermediate subprofiles. 
We organize the discussion according to the different configurations that may arise.

\begin{enumerate}[(i)]
    \item \textbf{When the EADA mechanism coincides with the core bounds and no pairs are removed:}  
If $\mu_{E_F}=\mu_F$, $\mu_{E_W}=\mu_W$, and $SP^{\mu_F,\mu_W}=P$, then no pair is eliminated in the construction of either subprofile. 
Consequently, $P^\star=P$, so the core of the original profile already captures all dominance–relevant outcomes. 
In this case, Theorem~\ref{main theorem} implies that the unique vNM stable set is precisely $\{C(P)\}$.

\medskip

\item \textbf{When only the core-based restriction is active:}  
Suppose that $\mu_{E_F}=\mu_F$ and $\mu_{E_W}=\mu_W$, but $SP^{\mu_F,\mu_W}\neq P$. 
In this case, the EADA mechanism does not generate additional improvements, while the restriction induced by the extremal core matchings removes pairs that cannot arise along core transitions. 
Thus $P^\star$ coincides with $SP^{\mu_F,\mu_W}$, and Theorem~\ref{main theorem} implies that the unique stable set of the market is given by $\{C(SP^{\mu_F,\mu_W})\}$.

\medskip

\item \textbf{When the EADA restriction is the binding one.}  
Finally, suppose that $\mu_{E_F}\neq \mu_F$ or $\mu_{E_W}\neq \mu_W$, while $SP^{\mu_F,\mu_W}=P$. 
In this environment, the only restrictions arise from the efficiency–adjusted improvements identified by the EADA mechanism. 
Thus $P^\star$ coincides with $SP^{\mu_{E_F},\mu_{E_W}}$, and Theorem~\ref{main theorem} implies that the unique stable set is $\{C(SP^{\mu_{E_F},\mu_{E_W}})\}$.

\end{enumerate}
These three situations illustrate how the characterization provided by
Theorem~\ref{main theorem} continuously interpolates between the classical
core characterization and the enlarged cores generated by the two
reduction procedures.

\section{A short path to the core}\label{seccion path to core}

In decentralized matching markets, agents may initially be organized under a matching that does not belong to the core. In such situations, a natural question arises: can agents, by successively reorganizing themselves through coalitional deviations, eventually reach a core matching without the intervention of a central authority?

This adjustment process is commonly referred to in the literature as a \textbf{path to stability}. In our framework, it captures the idea that, starting from an arbitrary matching, agents may move toward a core matching through a sequence of coalitional dominations.

Formally, given two matchings $\mu$ and $\mu'$, a \textbf{path} from $\mu$ to $\mu'$ is a sequence of matchings $\mu_1,\ldots,\mu_k$ and coalitions $S_1,\ldots,S_{k-1}$ such that
\[
\mu'=\mu_k \succ_{S_{k-1}} \mu_{k-1} \succ_{S_{k-2}} \cdots \succ_{S_1} \mu_1=\mu.
\]
Thus, each matching along the path dominates its predecessor through a
coalitional deviation.

The characterization obtained in the previous section provides a natural
route for such an adjustment process. Since $C(P^\star)$ is the unique
vNM stable set, every matching outside $C(P^\star)$
is dominated by some matching in $C(P^\star)$. Moreover,
Lemma~\ref{Lema para unicidad} shows that whenever a matching in
$C(P^\star)\setminus C(P)$ is dominated from outside $C(P^\star)$, the
dominating matching is itself dominated by some matching in the original
core. These two observations generate a path from any initial matching to
$C(P)$.

The next result shows that this path is remarkably short: from any
matching outside $C(P)$, a core matching can be reached through at most
three coalitional dominance steps.

\begin{proposition}[A short path to the core]
For every profile $P$, any matching $\mu\notin C(P)$ admits a path to
some matching in $C(P)$ of length at most three.
\end{proposition}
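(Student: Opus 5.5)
The plan is to chain external stability of $C(P^\star)$ (Theorem~\ref{main theorem}) with Lemma~\ref{Lema para unicidad}, splitting according to whether $\mu$ lies in $C(P^\star)$. Since $C(P)\subseteq C(P^\star)$, a matching $\mu\notin C(P)$ satisfies either $\mu\in C(P^\star)\setminus C(P)$ or $\mu\notin C(P^\star)$. In each case I will exhibit an explicit dominance chain ending in $C(P)$.

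\emph{Case 1: $\mu\in C(P^\star)\setminus C(P)$.} Since $\mu\notin C(P)$, Definition~\ref{definicion core} gives a matching $\nu$ and a coalition $T$ with $\nu\succ_T\mu$. Lemma~\ref{Lema para unicidad} applies to a matching of $C(P^\star)\setminus C(P)$ dominated by some matching. It therefore yields $\overline{\mu}\in C(P)$ and a coalition $T'$ with $\overline{\mu}\succ_{T'}\nu$. The path $\overline{\mu}\succ_{T'}\nu\succ_T\mu$ has two steps.

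\emph{Case 2: $\mu\notin C(P^\star)$.} By Theorem~\ref{main theorem}, $C(P^\star)$ is externally stable, so there exist $\mu'\in C(P^\star)$ and a coalition $S$ with $\mu'\succ_S\mu$. If $\mu'\in C(P)$, we are done in one step. Otherwise $\mu'\in C(P^\star)\setminus C(P)$, and applying Case~1 to $\mu'$ gives $\nu$ and $\overline{\mu}\in C(P)$ with $\overline{\mu}\succ_{T'}\nu\succ_T\mu'\succ_S\mu$. This is a path of length three. In the convention of the path definition it reads $\mu_1=\mu$, $\mu_2=\mu'$, $\mu_3=\nu$, $\mu_4=\overline{\mu}$, so there are three dominance steps.

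The main obstacle is making sure external stability is available for every $\mu\notin C(P^\star)$. The definition of external stability quantifies over all of $\mathcal M\setminus V$, but the proof of Theorem~\ref{main theorem} argues only for $\mu\in\mathcal{IR}$. For $\mu\notin\mathcal{IR}$, I would first try to dominate $\mu$ directly by a matching in $C(P^\star)$. Let $a$ be an agent with $\emptyset P_a\mu(a)$. Any $\mu'\in C(P)$ gives $a$ a partner that $a$ weakly prefers to $\emptyset$. I would then try $S=\{a,\mu'(a)\}$ when $\mu'(a)$ also strictly prefers $a$ to its $\mu$-partner, and otherwise $S=\{a\}$ when $a$ is unmatched at $\mu'$. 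If neither works, the fallback is the statement's own restriction to individually rational initial matchings, as in the abstract; there the three-step bound follows exactly as above. A secondary point is to confirm that Lemma~\ref{Lema para unicidad} is stated for an arbitrary dominating matching $\nu$, not only for $\nu\in\mathcal{IR}$, since it is used that way in the uniqueness part of Theorem~\ref{main theorem}.
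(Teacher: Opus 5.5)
Your argument follows the paper's proof: the same split on whether $\mu\in C(P^\star)$, external stability of $C(P^\star)$ for the first step, and Lemma~\ref{Lema para unicidad} for the remaining ones. One justification is missing in your Case~1. The lemma requires the dominating matching to lie outside $C(P^\star)$, and you never check this for $\nu$. It follows from Proposition~\ref{core P^* satisfies internal stability}: since $\mu\in C(P^\star)$ and $C(P^\star)$ is internally stable with respect to $P$, $\nu\succ_T\mu$ forces $\nu\notin C(P^\star)$. This is exactly the step the paper makes. Your secondary worry is that the lemma's proof goes through Proposition~\ref{Proposition Resultado 2}, which assumes individual rationality. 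It can be dismissed in one line: replace $\nu$ by the matching that agrees with $\nu$ on $T$ and leaves all other agents unmatched. Since $\mu$ is individually rational and every member of $T$ strictly improves, this matching is individually rational and still dominates $\mu$ via $T$.

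Your concern about matchings outside $\mathcal{IR}$ is well founded, and the paper's proof has the same hole. Its Case~2 applies external stability of $C(P^\star)$ to every $\mu\notin C(P^\star)$, whereas the proof of Theorem~\ref{main theorem} only treats $\mu\in\mathcal{IR}$. Off $\mathcal{IR}$, external stability can genuinely fail, so your attempt to dominate such a $\mu$ directly by a core matching cannot work in general. Take $P_{f_1}: w_2,w_1$, $P_{f_2}: w_1,w_2$, $P_{w_1}: f_1$, $P_{w_2}: f_2,f_1$. Here $C(P)=C(P^\star)$ consists of the single matching $\mu'$ with $\mu'(f_1)=w_1$ and $\mu'(f_2)=w_2$. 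The matching $\mu$ with $\mu(f_1)=w_2$, $\mu(f_2)=w_1$ is not individually rational, yet $\mu'$ does not dominate it: every nonempty coalition closed under $\mu'$ contains a firm, and both firms get their top choice under $\mu$.

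The repair is simpler than what you try. If $\emptyset P_a\mu(a)$, the matching $\mu_\emptyset$ in which everyone is unmatched dominates $\mu$ via $\{a\}$. Either $\mu_\emptyset\in C(P)$, or any $\mu''\in C(P)$ matches some mutually acceptable pair $(f,w)$, and then $\mu''\succ_{\{f,w\}}\mu_\emptyset$. So matchings that are not individually rational reach the core in at most two steps. With this, the statement holds for all $\mu\notin C(P)$, without your fallback restriction to $\mathcal{IR}$.
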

\begin{proof}
Let $\mu\notin C(P)$. We show that there is always a path from $\mu$ to
a core matching. The argument proceeds by considering two cases.

\medskip

\begin{description}

\item[\textbf{Case 1:} $\mu\in C(P^\star)\setminus C(P)$.]

Set $\mu_1=\mu$. Since $\mu_1\notin C(P)$, there exist a matching
$\mu_2$ and a coalition $S_1$ such that
\begin{equation}\label{ecu 1 propo path}
    \mu_2\succ_{S_1}\mu_1
\end{equation}
under $P$.

Since $C(P^\star)$ is internally stable with respect to $P$ and
$\mu_1\in C(P^\star)$, it follows that
$\mu_2\notin C(P^\star)$.
Applying Lemma~\ref{Lema para unicidad}, there exist a matching
$\mu_3\in C(P)$ and a coalition $S_2$ such that
\begin{equation}\label{ecu 2 propo path}
    \mu_3\succ_{S_2}\mu_2.
\end{equation}
Combining \eqref{ecu 1 propo path} and \eqref{ecu 2 propo path}, we obtain
\[
\mu_3\succ_{S_2}\mu_2\succ_{S_1}\mu_1=\mu.
\]
Thus, a core matching is reached in two dominance steps.

\medskip

\item[\textbf{Case 2:} $\mu\notin C(P^\star)$.]

Set $\mu_1=\mu$. Since $C(P^\star)$ is a vNM
stable set, it satisfies external stability. Hence, there exist
$\mu_2\in C(P^\star)$ and a coalition $S_1$ such that
\begin{equation}\label{ecu 3 propo path}
    \mu_2\succ_{S_1}\mu_1.
\end{equation}

If $\mu_2\in C(P)$, the path has length one and we are done.
Otherwise,
\[
\mu_2\in C(P^\star)\setminus C(P),
\]
so Case~1 applies to $\mu_2$. Hence, there exist a matching
$\mu_3\notin C(P^\star)$, a matching $\mu_4\in C(P)$, and coalitions
$S_2$ and $S_3$ such that
\begin{equation}\label{ecu 4 propo path}
    \mu_4\succ_{S_3}\mu_3\succ_{S_2}\mu_2.
\end{equation}
Combining \eqref{ecu 3 propo path} and \eqref{ecu 4 propo path}, we obtain
\[
\mu_4\succ_{S_3}\mu_3\succ_{S_2}\mu_2\succ_{S_1}\mu_1=\mu.
\]
Thus, a core matching is reached in three dominance steps.

\end{description}

In both cases, starting from any matching outside $C(P)$, there is a path
of length at most three leading to a matching in $C(P)$. The three-step
bound reflects the two-layer structure uncovered by the characterization
theorem: external stability provides a first step into $C(P^\star)$ when
needed, while at most two further dominance steps lead from
$C(P^\star)\setminus C(P)$ to the original core.
\end{proof}




    

\section{Concluding Remarks}\label{conclusiones}

This paper provides a structural characterization of the von
Neumann--Morgenstern stable set in one-to-one matching markets with strict
preferences. Our main result shows that the vNM stable set is unique and
coincides with the core of an explicitly constructed reduced preference
profile $P^\star$. Thus, a cooperative solution concept defined through
dominance relations among sets of matchings can be represented, and
computed, through a classical core problem in a suitably reduced matching
market.

The construction of $P^\star$ brings together the different structural
ingredients developed throughout the paper. A first ingredient is the
generalization of the classical Decomposition Lemma from core matchings to
arbitrary internally stable sets. The resulting IS-cycles reveal that
differences between matchings belonging to the same internally stable set
are organized through disjoint cyclic structures. Besides its role in the
characterization of the vNM stable set, this decomposition provides a
structural description of internally stable sets that may be of independent
interest.

Building on this structure, we identify two complementary sources of
information about the dominance relations relevant for vNM stability. The
first is provided by the lattice structure of the core and the
Irving--Leather reduction, which identifies non-stable pairs that become
irrelevant when moving across core outcomes. The second comes from the EADA
mechanism and identifies interrupter pairs associated with
efficiency-improving reallocations. Neither reduction alone captures the
entire structure of the vNM stable set. Their combination, however, leads
to the reduced profile $P^\star$.

The main characterization shows that the core of this reduced profile is
both internally and externally stable with respect to the original market
and therefore coincides with the unique vNM stable set. In this sense, the
role of the reduction is not merely to simplify the original preference
profile. Rather, it isolates the pairs that are relevant for the
dominance structure underlying vNM stability and transforms the original
stable-set problem into a standard core problem. Consequently, the
characterization provides both a structural interpretation and a
constructive procedure for determining the unique vNM stable set using
standard tools from matching theory.

This approach also provides a different perspective on the uniqueness
result established by \cite{wako2010polynomial}. Rather than relying on an
iterative modification of preferences based on successive sets of
undominated matchings, our argument derives the reduced market from the
structural properties of internally stable sets and from the two
complementary reduction procedures. Thus, both the construction of
$P^\star$ and the proof of the characterization are obtained directly
from the dominance structure of the original matching market.

The characterization also has implications for the dynamics of stability.
The reduced core $C(P^\star)$ provides an intermediate layer between
arbitrary matchings and the original core $C(P)$. External stability first
allows a matching outside $C(P^\star)$ to move into the reduced core,
while the structural properties underlying the characterization imply that
a matching in $C(P^\star)\setminus C(P)$ can reach the original core in at
most two additional dominance steps. As a consequence, every matching
outside $C(P)$ admits a path to a core matching of length at most three.
Hence, the same structure that characterizes the vNM stable set also
provides a simple description of how decentralized coalitional deviations
can lead to core outcomes.

Several questions remain open for future research. One natural direction
is to further investigate the scope of the generalized Decomposition Lemma.
Since the result applies to arbitrary internally stable sets rather than
only to core matchings, the associated IS-cycle structure may be useful for
studying other solution concepts defined through dominance relations.

Another important direction is to investigate whether the reduction
approach developed here extends beyond one-to-one matching markets. In
particular, it would be interesting to determine whether analogous
lattice-based and efficiency-based reductions can be constructed in
many-to-one or many-to-many environments, and whether their combination
again yields a reduced market whose core represents the corresponding vNM
stable set. More broadly, these questions may help clarify the extent to
which dominance-based cooperative solution concepts can be understood
through the classical structural tools of matching theory.

\appendix
\section{Appendix: Proofs}\label{apendice pruebas}

\noindent\begin{proof}[Proof of Lemma \ref{lema descomposicion para estables}]Let $\mu, \mu' \in V_{IS}$. First, we prove that $\mu'$ maps from $F(\mu',\mu)$ into $W(\mu,\mu')$. To do so, we show that $\mu'(F(\mu', \mu)) \subseteq W(\mu, \mu')$. Let $w\in \mu'(F(\mu', \mu))$, so there is $f\in F(\mu',\mu)$ such that $w=\mu'(f)$ and $w=\mu'(f) P_f \mu(f)$.  If $w \notin W(\mu, \mu')$, i.e., $\mu'(w) P_w \mu(w)$, then together with $f \in F(\mu',\mu)$, it would imply $\mu' \succ_{\{ f,w \}} \mu$, contradicting internal stability. Hence, $w \in W(\mu, \mu')$. Thus, $\mu'(F(\mu', \mu)) \subseteq W(\mu, \mu')$. Moreover, since $\mu'$ is injective, we have
\begin{equation}\label{1}
    | F(\mu', \mu)|\leq |  W(\mu, \mu')|.
\end{equation}

Second, we prove that $\mu$ maps from $W(\mu,\mu')$ into $F(\mu',\mu)$. To do so, we show that $\mu(W(\mu,\mu')) \subseteq F(\mu',\mu)$. Let $f'\in \mu(W(\mu, \mu'))$, so there is $w'\in W(\mu,\mu')$ such that $f'=\mu(w')$ and $f'=\mu(w') P_ {w'} \mu'(w')$. If $f' \notin F(\mu', \mu)$, i.e., $\mu(f') P_{f'} \mu'(f')$, then together with $w' \in W(\mu,\mu')$, it would imply $\mu \succ_{\{ f',w' \}} \mu'$, contradicting internal stability. Hence, $f' \in F(\mu', \mu)$. Thus, $\mu(W(\mu, \mu')) \subseteq F(\mu', \mu)$. Moreover, since $\mu$ is injective, we have
\begin{equation}\label{2}
    | W(\mu, \mu')|\leq |  F(\mu', \mu)|.
\end{equation}
Then, by (\ref{1}) and (\ref{2}), we conclude that $| W(\mu, \mu')| = |  F(\mu', \mu)|$.

Finally, since $\mu$ and $\mu'$ are injective, and the domain and codomain are finite and of the same cardinality, we have that $\mu$ and $\mu'$ map $F(\mu',\mu)$ onto $W(\mu,\mu')$ and $F(\mu,\mu')$ onto $W(\mu',\mu)$. That is, $\mu$ and $\mu'$ are bijective with the respective domains and codomains.
\end{proof}

\noindent\begin{proof}[Proof of Proposition \ref{existencia de un IS-cycle}]
We now prove that whenever $F(\mu', \mu) \neq \emptyset$, there is a cycle. 
To show this, consider a directed graph defined as follows. 
For each $f \in F(\mu', \mu)$, there is an arc pointing to the worker $w \in W$ such that $\mu'(f) = w$. 
By the Decomposition Lemma, we have that $w \in W(\mu, \mu')$. 
Hence, there is $f' \in F(\mu', \mu)$ such that $f' = \mu(w)$. 
Consequently, the digraph contains an arc from $w$ to $f'$.

We claim that this digraph eventually closes, forming the desired cycle. 
Suppose, to the contrary, that this is not the case. 
Then there is some worker $w_{i-1}$ that is reached by an arc but from which no arc departs. 
This would imply that there is no $f_i \in F(\mu', \mu)$ such that $w_{i-1} = \mu(f_i)$, 
contradicting the Decomposition Lemma. 
Similarly, suppose that there is some $f_i \in F(\mu', \mu)$ from which no arc departs. 
This would mean that there is no $w_i \in W(\mu, \mu')$ such that $\mu'(f_i) = w_i$, 
again contradicting the Decomposition Lemma. 
Therefore, the digraph must eventually close, forming a IS-cycle  $O^{\mu',\mu}$.
\end{proof}

\noindent\begin{proof}[Proof of Theorem~\ref{Teorema descomposicion ciclos disjuntos}]
Let $V_{IS}\in V_{IS}(P)$ and let $\mu',\mu\in V_{IS}$.
By Proposition~\ref{existencia de un IS-cycle}, if
$F(\mu',\mu)\neq\emptyset$, there is an IS-cycle
$O_1^{\mu',\mu}$ such that
\[
O_1^{\mu',\mu}\cap F\subseteq F(\mu',\mu)
\qquad\text{and}\qquad
O_1^{\mu',\mu}\cap W\subseteq W(\mu,\mu').
\]

If $
O_1^{\mu',\mu}\cap F=F(\mu',\mu),
$ there is nothing to prove. Otherwise, choose
\[
f\in F(\mu',\mu)\setminus O_1^{\mu',\mu}.
\]
Following the construction in the proof of
Proposition~\ref{existencia de un IS-cycle}, set
$w=\mu'(f)$. By the Decomposition Lemma,
$w\in W(\mu,\mu')$.
Moreover, $w\notin O_1^{\mu',\mu}$. Indeed, if
$w\in O_1^{\mu',\mu}$, then there would exist a firm
$\widehat f\in O_1^{\mu',\mu}$ such that
$\mu'(\widehat f)=w=\mu'(f)$.
Since $\mu'$ is injective, this would imply
$\widehat f=f$, contradicting
$f\notin O_1^{\mu',\mu}$.

By the Decomposition Lemma, there is
$f'\in F(\mu',\mu)$ such that
\[
\mu(f')=w.
\]
The same argument, now using the injectivity of $\mu$, implies that
$f'\notin O_1^{\mu',\mu}$.
Repeating this construction produces an alternating sequence entirely
contained in
\[
\bigl(F(\mu',\mu)\setminus O_1^{\mu',\mu}\bigr)
\cup
\bigl(W(\mu,\mu')\setminus O_1^{\mu',\mu}\bigr).
\]
Since the market is finite, this sequence eventually closes and yields
another IS-cycle, denoted by $O_2^{\mu',\mu}$, which is disjoint from
$O_1^{\mu',\mu}$.

If
\[
\left(O_1^{\mu',\mu}\cup O_2^{\mu',\mu}\right)\cap F
=
F(\mu',\mu),
\]
we are done. Otherwise, we repeat the same argument starting from a firm
in
\[
F(\mu',\mu)\setminus
\left(O_1^{\mu',\mu}\cup O_2^{\mu',\mu}\right).
\]
At each step, we obtain an IS-cycle disjoint from all previously
constructed cycles.

Since $F(\mu',\mu)$ is finite and each new cycle contains at least one
previously uncovered firm, the procedure terminates after finitely many
steps. Hence, there exist pairwise disjoint IS-cycles
$O_1^{\mu',\mu},\ldots,O_k^{\mu',\mu}$ such that
\[
\left[\bigcup_{i=1}^{k}O_i^{\mu',\mu}\right]\cap F
=
F(\mu',\mu).
\]
Thus, every firm in $F(\mu',\mu)$ belongs to an IS-cycle between
$\mu'$ and $\mu$.
\end{proof}

\noindent\begin{proof}[Proof of Proposition~\ref{proposition mu no estable fuera de VI_s}]
Suppose, by contradiction, that there exists
$V_{IS}\in V_{IS}(P)$ such that
$\mu,\mu_W\in V_{IS}$.

Since $F(\mu_W,\mu)\neq\emptyset$, Theorem~
\ref{Teorema descomposicion ciclos disjuntos} yields a collection of
pairwise disjoint IS-cycles
$O_1^{\mu_W,\mu},\ldots,O_k^{\mu_W,\mu}$ such that
\[
\left[\bigcup_{i=1}^{k}O_i^{\mu_W,\mu}\right]\cap F
=
F(\mu_W,\mu).
\]
Let
\[
U=\bigcup_{i=1}^{k}O_i^{\mu_W,\mu}.
\]

Define a matching $\overline{\mu}$ by
\[
\overline{\mu}(a)=
\begin{cases}
\mu(a), & \text{if } a\in U,\\
\mu_W(a), & \text{if } a\notin U.
\end{cases}
\]

We first observe that $\overline{\mu}$ is well defined as a matching.
Indeed, by the definition of an IS-cycle, each cycle
$O_i^{\mu_W,\mu}$ is closed under both $\mu$ and $\mu_W$: the firms
belonging to the cycle are matched, under either matching, to workers
belonging to the same cycle, and conversely. Since the cycles are pairwise
disjoint, replacing $\mu_W$ by $\mu$ on their union preserves the matching
property.

Moreover, $\overline{\mu}\in\mathcal{IR}$. Every pair used by
$\overline{\mu}$ is a pair matched either under $\mu$ or under $\mu_W$,
and both matchings are individually rational.

We now compare $\overline{\mu}$ and $\mu_W$ from the workers' side.
For every
\[
w\in U\cap W,
\]
the definition of an IS-cycle implies that
$w\in W(\mu,\mu_W)$. Hence,
\[
\overline{\mu}(w)=\mu(w)P_w\mu_W(w).
\]
On the other hand, if $w\notin U$, then
\[
\overline{\mu}(w)=\mu_W(w).
\]
Therefore,
\[
\overline{\mu}\succeq_W\mu_W.
\]

Since $F(\mu_W,\mu)\neq\emptyset$, at least one IS-cycle is nonempty,
and hence at least one worker strictly prefers $\overline{\mu}$ to
$\mu_W$. Thus,
$\overline{\mu}\in\mathcal{IR}$ satisfies
$\overline{\mu}\succeq_W\mu_W$, contradicting the hypothesis that no
matching $\mu_1\in\mathcal{IR}$ satisfies
$\mu_1\succeq_W\mu_W$.

Therefore, no internally stable set can contain both $\mu$ and $\mu_W$.
\end{proof}

\noindent\begin{proof}[Proof of Proposition~\ref{Proposition Resultado 2}]
Let $\mu\in\mathcal{IR}$ and let $(f,w)$ satisfy $\mu(f)=w$.
Suppose, by contradiction, that there is
$V_{IS}\in V_{IS}(P)$ such that
\[
\{\mu\}\cup C(P)\subseteq V_{IS}.
\]
Since $(f,w)$ is not a stable pair of $P$, no matching in $C(P)$ matches
$f$ with $w$. By hypothesis,
\[
\mu_F(f)P_f\mu(f)P_f\mu_W(f).
\]

Because $\mu_W$ and $\mu_F$ are, respectively, the worker-optimal and
worker-pessimal core matchings, and $(f,w)$ is not a stable pair, exactly
one of the following three cases must hold.

\begin{description}

\item[\textbf{Case 1:}
$\boldsymbol{\mu(w)P_w\mu_W(w)}$.]

Since
\[
\mu(f)P_f\mu_W(f)
\]
and
\[
\mu(w)P_w\mu_W(w),
\]
we have
\[
\mu\succ_{\{f,w\}}\mu_W.
\]
Since both $\mu$ and $\mu_W$ belong to $V_{IS}$, this contradicts the
internal stability of $V_{IS}$.

\item[\textbf{Case 2:}
$\boldsymbol{\mu_F(w)P_w\mu(w)}$.]

By hypothesis,
\[
\mu_F(f)P_f\mu(f),
\]
so $f\in F(\mu_F,\mu)$. Since
$\mu_F,\mu\in V_{IS}$, the Decomposition Lemma implies
\[
\mu(f)=w\in W(\mu,\mu_F).
\]
Therefore,
\[
\mu(w)P_w\mu_F(w),
\]
contradicting the assumption of this case.

\item[\textbf{Case 3:}
$\boldsymbol{\mu_W(w)P_w\mu(w)P_w\mu_F(w)}$.]

We first establish the following claim.

\begin{description}
\item[\textbf{Claim.}]
There is a core matching $\widehat{\mu}\in C(P)$ such that
$(f,w)$ ceases to be mutually acceptable at the Irving--Leather reduction
associated with $\widehat{\mu}$ and $\mu_W$.
\end{description}

To see this, recall that $(f,w)$ is mutually acceptable under the original
profile $P$. Moreover,
\[
\mu_F(f)P_f wP_f\mu_W(f)
\]
and
\[
\mu_W(w)P_w fP_w\mu_F(w).
\]
Since $(f,w)$ is not a stable pair of $P$, it must be eliminated during the
Irving--Leather procedure that generates the lattice of core matchings.
Hence, there is a core matching
$\widehat{\mu}\in C(P)$ at whose corresponding reduction step the pair
$(f,w)$ ceases to be mutually acceptable. This proves the claim.

Since $(f,w)$ survives until the reduction associated with
$\widehat{\mu}$, there is a predecessor core matching
$\widetilde{\mu}\in C(P)$ and an Irving--Leather cycle $\sigma$ such that
\[
\widehat{\mu}=\widetilde{\mu}(\sigma),
\]
while $(f,w)$ remains mutually acceptable immediately before the
elimination step.

We claim that
\begin{equation}\label{ecu 1 prueba propo 4}
\widehat{\mu}(f)P_f\mu(f).
\end{equation}
Suppose otherwise that
\[
\mu(f)P_f\widehat{\mu}(f).
\]
Since $(f,w)$ is mutually acceptable before the cycle is eliminated and
$\widehat{\mu}=\widetilde{\mu}(\sigma)$, we would obtain
\[
\widetilde{\mu}(f)
P_f
\mu(f)
P_f
\widehat{\mu}(f),
\]
with $\mu(f)$ lying strictly between the two consecutive partners of $f$
determined by the cycle. This contradicts the construction of the
Irving--Leather reduction. Hence,
\eqref{ecu 1 prueba propo 4} holds.

Since $(f,w)$ is no longer mutually acceptable after the reduction step
associated with $\widehat{\mu}$, while
\eqref{ecu 1 prueba propo 4} holds, the deletion must occur on the worker
side. Therefore,
\[
\widehat{\mu}(w)P_w\mu(w).
\]

On the other hand, \eqref{ecu 1 prueba propo 4} implies
\[
f\in F(\widehat{\mu},\mu).
\]
Since $\widehat{\mu},\mu\in V_{IS}$, the Decomposition Lemma yields
\[
w=\mu(f)\in W(\mu,\widehat{\mu}),
\]
and hence
\[
\mu(w)P_w\widehat{\mu}(w).
\]
This contradicts the previous inequality.

\end{description}

All three cases lead to contradictions. Therefore, there is no
$V_{IS}\in V_{IS}(P)$ such that
\[
\{\mu\}\cup C(P)\subseteq V_{IS}.
\]
\end{proof}

\noindent\begin{proof}[Proof of Lemma~\ref{core es subconjunto de core de subperfil}]
Let $\widetilde{P}\sqsubseteq P$ and suppose that
$\mathcal{E}^{P\to\widetilde{P}}$ contains no stable pair of $P$.
Take any $\mu\in C(P)$.

First, every pair matched under $\mu$ is a stable pair of $P$ and therefore
is not eliminated in the construction of $\widetilde{P}$. Hence, $\mu$
remains individually rational under $\widetilde{P}$.

Suppose, by contradiction, that $\mu\notin C(\widetilde{P})$.
Then there exists a pair $(f,w)$ that blocks $\mu$ under
$\widetilde{P}$, so
\[
w\,\widetilde{P}_f\,\mu(f)
\qquad\text{and}\qquad
f\,\widetilde{P}_w\,\mu(w).
\]
Since $\widetilde{P}\sqsubseteq P$, the relative order of all remaining
acceptable partners is preserved. Therefore,
\[
wP_f\mu(f)
\qquad\text{and}\qquad
fP_w\mu(w),
\]
so $(f,w)$ also blocks $\mu$ under $P$. This contradicts
$\mu\in C(P)$.

Hence, $\mu\in C(\widetilde{P})$. Since $\mu\in C(P)$ was arbitrary,
\[
C(P)\subseteq C(\widetilde{P}).
\]
\end{proof}

\noindent\begin{proof}[Proof of Lemma \ref{todo core cumple estabilidad interna}]
    Let $\widetilde{P}$ be a preference profile such that $\widetilde{P} \sqsubseteq P$. Assume that the set $\mathcal{E}^{P\to \widetilde{P}}$ contains no stable pair of $P$, and that  $C(\widetilde{P}) \notin V_{IS}(P)$. This implies there are $\mu, \mu' \in C(\widetilde{P})$ such that $\mu \succ_S \mu'$ via $S$ for some coalition $S=\{f,w\}$ in $P$, and $\mu(f)=w$ also in $P$. Then, $\mu'\notin C(P).$ So, the fact that $\mu'\in C(\widetilde{P})\setminus C(P)$ implies that the pair $(f,w)$ is eliminated when constructing $\widetilde{P}$ from $P$. Hence, $\mu(f)\neq w$ in $\widetilde{P}$. This contradicts the assumption that that the set $\mathcal{E}^{P\to \widetilde{P}}$ contains no stable pair of $P$. Therefore, $C(\widetilde{P}) \in V_{IS}(P)$.
\end{proof}

\noindent\begin{proof}[Proof of Lemma~\ref{core de subperfil subconjunto de core de otro subperfil}]
Since $P_1\sqsubseteq P$ and
$\mathcal{E}^{P\to P_1}$ contains no stable pair of $P$,
Lemma~\ref{core es subconjunto de core de subperfil} implies that $C(P)\subseteq C(P_1).$

Similarly, since $P_2\sqsubseteq P_1$ and
$\mathcal{E}^{P_1\to P_2}$ contains no stable pair of $P_1$,
applying Lemma~\ref{core es subconjunto de core de subperfil} with
$P_1$ as the original profile yields $C(P_1)\subseteq C(P_2).$
Therefore,
\[
C(P)\subseteq C(P_1)\subseteq C(P_2).
\]
\end{proof}

\noindent\begin{proof}[Proof of Proposition  \ref{E_F y E_W cumple estabilidad interna}]
We begin by establishing internal stability for $\mathcal{M}^{E_F}$.

If the EADA mechanism stops at Round~0, then no interrupter pairs arise and the procedure coincides with the DA mechanism. 
Hence $\mu_{E_F}=\mu_F$, and the singleton $\{\mu_{E_F}\}$ trivially satisfies internal stability.

Consider now the case in which the EADA mechanism performs more than one round.  
Let $\mu_t$ and $\mu_r$ be the matchings obtained in Rounds $t$ and $r$, respectively, and assume without loss of generality that $r<t$.  
By the monotonicity properties of the EADA mechanism established by \cite{kesten2010school} (see Appendix~\ref{apendice descripcion EADAM}), we have 
\[
\mu_t \succeq_F \mu_r.
\]

Thus, no firm strictly prefers its assignment under $\mu_r$ to its
assignment under $\mu_t$. Consequently, $\mu_r$ cannot dominate
$\mu_t$ via any coalition. Thus, no firm strictly prefers its assignment under $\mu_r$ to its
assignment under $\mu_t$. Consequently, $\mu_r$ cannot dominate
$\mu_t$ via any coalition. That is, no coalition $S$ can satisfy $\mu_r \succ_S \mu_t$.

To complete the argument, suppose that there is a coalition \(S\) such that 
\[
\mu_t \succ_{S} \mu_r.
\]

Since we are in a one-to-one matching market, there is a pair $(f,w)$
matched under $\mu_t$, with $f,w\in S$, such that
\[
w=\mu_t(f), \qquad
wP_f\mu_r(f),
\qquad\text{and}\qquad
fP_w\mu_r(w).
\]

Because $(f,w)$ is matched under $\mu_t$, it cannot have been eliminated
in Round $r$ or in any subsequent round prior to $t$. Hence, $(f,w)$ is
mutually acceptable in the preference profile prevailing at Round $r$.
Since the relative order of the remaining acceptable partners is
preserved throughout the EADA reductions, the inequalities above imply
that $(f,w)$ blocks $\mu_r$ in the Round-$r$ profile.

However, \(\mu_r\) is the outcome of the DA mechanism applied to the profile induced in Round \(r\) of the EADA procedure. Therefore, \(\mu_r\) is a core matching for that profile and, in particular, cannot be blocked by any coalition. This yields a contradiction.

Consequently, no matching obtained in a later round can dominate a matching obtained in an earlier round.

Since no two matchings in $\mathcal{M}^{E_F}$ dominate each other, it follows that $\mathcal{M}^{E_F} \in V_{IS}(P)$. 
The same reasoning applies when workers propose, so $\mathcal{M}^{E_W} \in V_{IS}(P)$.
\end{proof}

\noindent\begin{proof}[Proof of Proposition \ref{EADAM subconjunto de P'}]
Let $\mu \in \mathcal{M}^{E_F}$. Then $\mu$ is the firm-optimal core
matching obtained by applying the DA mechanism in some round $t$ of the
EADA procedure. In particular, $\mu$ belongs to the core of the preference
profile prevailing in Round~$t$, and therefore no pair available at that
stage can form a coalition that dominates $\mu$.

In all subsequent rounds, the EADA mechanism only eliminates interrupter
pairs. Since eliminating pairs cannot create a new coalition capable of
dominating $\mu$, it follows that $\mu$ remains undominated throughout
the subsequent rounds.
Hence, $\mu \in C(SP^{\mu_{E_F},\mu_{E_W}})$, showing that
\[
\mathcal{M}^{E_F} \subseteq C(SP^{\mu_{E_F},\mu_{E_W}}).
\]
By an analogous argument, we also have 
\[
\mathcal{M}^{E_W} \subseteq C(SP^{\mu_{E_F},\mu_{E_W}}).
\]
\end{proof}

\noindent\begin{proof}[Proof of Proposition~\ref{proposition 1 de P*}]
By definition of $P^\star$,
\[
P^\star\sqsubseteq SP^{\mu_F,\mu_W}
\qquad\text{and}\qquad
P^\star\sqsubseteq SP^{\mu_{E_F},\mu_{E_W}}.
\]

Consider first any
$\mu\in C(SP^{\mu_F,\mu_W})$.
The only additional pairs eliminated when passing from
$SP^{\mu_F,\mu_W}$ to $P^\star$ are the EADA interrupter pairs.
By the discussion presented in Subsection \ref{subseccion de discusion de subperfiles}, these pairs cannot
support dominance against $\mu$.
Hence, $\mu$ remains undominated under $P^\star$, and therefore
\[
C(SP^{\mu_F,\mu_W})\subseteq C(P^\star).
\]

Similarly, let
$\mu\in C(SP^{\mu_{E_F},\mu_{E_W}})$.
The only additional pairs eliminated when passing from
$SP^{\mu_{E_F},\mu_{E_W}}$ to $P^\star$ are the pairs identified by
the Irving--Leather reduction.
By the discussion presented in Subsection \ref{subseccion de discusion de subperfiles}, these pairs cannot
support dominance against $\mu$.
Thus,
\[
C(SP^{\mu_{E_F},\mu_{E_W}})
\subseteq C(P^\star).
\]

Therefore,
\[
C(SP^{\mu_{E_F},\mu_{E_W}})
\subseteq C(P^\star)
\qquad\text{and}\qquad
C(SP^{\mu_F,\mu_W})
\subseteq C(P^\star).
\]
\end{proof}

\noindent\begin{proof}[Proof of Proposition \ref{core P^* satisfies internal stability}]
By the construction of $P^\star$, we have $P^\star \sqsubseteq P$ and 
\[
\mathcal{E}^{\,P \to P^\star}
    = 
\mathcal{E}^{\,P \to SP^{\mu_F,\mu_W}}
    \,\cup\,
\mathcal{E}^{\,P \to SP^{\mu_{E_F},\mu_{E_W}}}.
\]
Moreover, by Remarks~\ref{remark el conjunto de pares eliminados para muF y muW no contiene pares estables} 
and~\ref{remark el conjunto de pares eliminados no contiene pares estables},  
both 
\(\mathcal{E}^{\,P \to SP^{\mu_F,\mu_W}}\)  
and 
\(\mathcal{E}^{\,P \to SP^{\mu_{E_F},\mu_{E_W}}}\)  
contain no stable pairs of $P$.  
Therefore, the set 
\(\mathcal{E}^{\,P \to P^\star}\)  
also contains no stable pairs of $P$.  
Hence, by Lemma~\ref{todo core cumple estabilidad interna}, we conclude that  
\[
C(P^\star) \in V_{IS}(P).
\]
\end{proof}

The following technical lemma plays a key role in establishing both the
characterization of the vNM stable set and the
short-path-to-the-core result. Intuitively, it shows that whenever a
matching in $C(P^\star)\setminus C(P)$ is dominated by a matching outside
$C(P^\star)$, the latter is itself dominated by some matching in the
original core.

\begin{lemma}\label{Lema para unicidad}

Let \(\mu' \in C(P^\star) \setminus C(P)\).
Suppose that some matching \(\widetilde{\mu} \notin C(P^\star)\) dominates \(\mu'\) via a coalition \(S\).
Then there is a matching \(\mu \in C(P)\) and a coalition \(S'\) such that
\(\mu \succ_{S'} \widetilde{\mu}\).
\end{lemma}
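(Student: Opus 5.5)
The plan is to argue by contradiction, reducing everything to a single pair of agents and then using the explicit description of the eliminated pairs. Suppose that no matching in $C(P)$ dominates $\widetilde{\mu}$. Since no matching dominates a core matching, $\widetilde{\mu}$ does not dominate any element of $C(P)$ either. Hence $C(P)\cup\{\widetilde{\mu}\}\in V_{IS}(P)$, and we may apply the Decomposition Lemma (Lemma~\ref{lema descomposicion para estables}) and Theorem~\ref{Teorema descomposicion ciclos disjuntos} to $\widetilde{\mu}$ paired with any core matching, in particular with $\mu_F$, $\mu_W$, or an intermediate $\widehat{\mu}$ from the lattice.

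\emph{Step 1: localize the dominance.} Since $\mu'\in C(P^\star)$ and only mutually acceptable pairs are removed, $\mu'\in\mathcal{IR}$ under $P$. So $S$ cannot be a singleton, and $S$ must contain a pair $(f,w)$ with $\widetilde{\mu}(f)=w$, $wP_f\mu'(f)$ and $fP_w\mu'(w)$. Because $P^\star\sqsubseteq P$ preserves relative orders and $\mu'$ is not blocked under $P^\star$, the pair $(f,w)$ cannot be acceptable under $P^\star$. Therefore
\[
(f,w)\in\mathcal{E}^{IL}\cup\mathcal{E}^{EADA}.
\]

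\emph{Step 2: the Irving--Leather case.} Take $(f,w)\in\mathcal{E}^{IL}$ and use Remark~\ref{remark el conjunto de pares eliminados para muF y muW no contiene pares estables} to obtain a core matching $\widehat{\mu}$ with, say, $\widehat{\mu}(f)P_f wP_f\mu_W(f)$ and $\widehat{\mu}(w)P_wf$. Then $f\in F(\widehat{\mu},\widetilde{\mu})$, so the Decomposition Lemma applied to $\{\widehat{\mu},\widetilde{\mu}\}$ forces $w=\widetilde{\mu}(f)\in W(\widetilde{\mu},\widehat{\mu})$, that is, $fP_w\widehat{\mu}(w)$. This contradicts $\widehat{\mu}(w)P_wf$. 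The symmetric alternative is handled the same way with the roles of the two sides exchanged and $\mu_W$ as the reference matching. This is exactly the mechanism in Case~3 of the proof of Proposition~\ref{Proposition Resultado 2}.

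\emph{Step 3: the EADA case, which I expect to be the main obstacle.} Here Remark~\ref{remark el conjunto de pares eliminados no contiene pares estables} gives, for instance, $wP_f\mu_F(f)$ and $\mu_F(w)P_wf$. The Decomposition Lemma between $\mu_F$ and $\widetilde{\mu}$ yields no immediate contradiction, since $f$ gains and $w$ loses as the lemma requires. Instead I would take the IS-cycle $O^{\widetilde{\mu},\mu_F}$ through $f$ supplied by Proposition~\ref{existencia de un IS-cycle}. Swapping $\mu_F$ to $\widetilde{\mu}$ on that cycle, as in the proof of Proposition~\ref{proposition mu no estable fuera de VI_s}, produces an individually rational matching that Pareto-improves $\mu_F$ for firms and uses the interrupter pair $(f,w)$. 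I would then invoke the defining property of interrupters in Kesten's EADA (Appendix~\ref{apendice descripcion EADAM}): any firm-side improvement over $\mu_F$ that matches an interrupter $f$ with the worker $w$ it interrupts forces some agent in the cycle to be worse off than under some stable matching. That agent, together with its $\mu_F$-partner, should yield a blocking pair of $\widetilde{\mu}$ that is matched in $\mu_F$, and hence a core dominance $\mu_F\succ_{S'}\widetilde{\mu}$. Making this interrupter argument precise, possibly also using that $\mu'$ lies between $\mu_{E_F}$ and $\mu_{E_W}$ to control where $w$ sits relative to $\mu'(f)$, is the delicate point. The worker-proposing case is symmetric, with $\mu_W$ as the reference matching.
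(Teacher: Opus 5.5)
Your Steps 1 and 2 are sound. Step 2 is more direct than the paper's Case 1, which goes through Proposition~\ref{Proposition Resultado 2}. The remark on $\mathcal E^{IL}$ already gives you a single core matching at which $f$ and $w$ each prefer their partner to the other, and the Decomposition Lemma excludes that at once.

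The genuine gap is Step 3, which is not yet an argument. After swapping along the IS-cycle through $f$, you have an individually rational $\bar\mu$ with $\bar\mu(g)\,R_g\,\mu_F(g)$ for every firm $g$ and $\bar\mu(f)=w$. What you need is precisely the following statement $(\ast)$: \emph{no such $\bar\mu$ exists when $(f,w)$ is removed by the firm-proposing EADA run.} Given $(\ast)$, you are finished. The appeal to an agent ``worse off than under some stable matching'' adds nothing: every worker on the cycle is worse off than under $\mu_F$ by construction, and that alone never yields a $\mu_F$-pair blocking $\widetilde\mu$.

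However, $(\ast)$ is not among the EADA facts recorded in Appendix~\ref{apendice descripcion EADAM}. Pareto efficiency of $\mu_{E_F}$ is silent about $\bar\mu$, which need not be comparable with $\mu_{E_F}$. Moreover, $(\ast)$ is false for interrupter pairs in general. Take $P_f: w,x$; $P_g: x,w$; $P_h: x,z$; $P_{f_0}: w,y$; $P_w: g,f,f_0$; $P_x: f,h,g$; $P_y: f_0$; $P_z: h$. Then $(f,w)$ is an interrupter pair of firm-proposing DA. Yet exchanging the $\mu_F$-partners of $f$ and $g$ is a firm Pareto improvement that is not dominated by $\mu_F$, the only core matching.

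What rescues $(\ast)$ is that EADA removes only the interrupters rejected at the \emph{last} such step. For round-one removals, the Gale--Sotomayor blocking-lemma argument proves it:
\begin{enumerate}
\item With $F'=\{g:\bar\mu(g)P_g\mu_F(g)\}$, one gets $\bar\mu(F')=\mu_F(F')$.
\item At the last DA step at which a firm of $F'$ proposes, the receiving worker displaces a firm outside $F'$. That firm is an interrupter for her, because she has already rejected her $\bar\mu$-partner.
\item This rejection occurs strictly after $f$ was rejected by $w$, so $(f,w)$ could not have been removed in round one.
\end{enumerate}
For pairs removed in later rounds, the relevant DA run is under $P^{k-1}$ and the benchmark is the previous round's outcome rather than $\mu_F$. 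So a further step is needed there.

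The paper does not argue by contradiction in this case. It asserts directly that $\mu_F\succ_{\{\mu_F(w),w\}}\widetilde\mu$. You should not borrow that shortcut. It requires $w\,P_{\mu_F(w)}\,\widetilde\mu(\mu_F(w))$, which does not follow from $\mu_F(w)\,P_w\,f$, and it fails in Example~\ref{ex:wako}. Take the matching with $\widetilde\mu(f_1)=w_1$, $\widetilde\mu(f_2)=w_3$, $\widetilde\mu(f_3)=w_2$, $\widetilde\mu(f_4)=w_4$. It uses the eliminated pair $(f_3,w_2)$, so it lies outside $C(P^\star)$. It dominates $\mu_{E_F}\in C(P^\star)\setminus C(P)$ via $\{f_3,w_2\}$. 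But $f_4=\mu_F(w_2)$ already holds its top choice $w_4$, so $(f_4,w_2)$ does not block.

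In that example, $\mu_F$ dominates $\widetilde\mu$ via $\{f_1,w_4\}$ instead. You find this pair by following the alternating chain $f_3,w_2,f_4,w_4,f_1$ until a firm prefers its $\mu_F$-partner, which is your cycle argument run constructively. So your route is the right one, but it stands or falls with $(\ast)$.
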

\begin{proof}
Let $\mu'\in C(P^\star)\setminus C(P)$, and suppose that
$\widetilde{\mu}\notin C(P^\star)$ dominates $\mu'$ via a coalition
$S=\{f,w\}$:
\[
\widetilde{\mu}\succ_S\mu'.
\]
Since $\mu'\in C(P^\star)$, the pair $(f,w)$ cannot be mutually
acceptable under $P^\star$; otherwise, it would form a blocking coalition for $\mu'$ under
$P^\star$. Hence,
\[
(f,w)\in\mathcal E^{P\to P^\star}.
\]
By Definition~\ref{definicion de subperfil P star},
Remark~\ref{remark el conjunto de pares eliminados para muF y muW no contiene pares estables},
and Remark~\ref{remark el conjunto de pares eliminados no contiene pares estables},
the pair \((f,w)\) must be either an unstable pair between the two optimal matchings or an interrupter pair eliminated during the EADA procedure. By the construction of $P^\star$,
\[
\mathcal E^{P\to P^\star}
=
\mathcal E^{IL}
\cup
\mathcal E^{EADA},
\]
so we distinguish two cases.

\begin{description}
\item[ \textbf{Case 1: \((f,w)\in \mathcal{E}^{\,P\to SP^{\mu_F,\mu_W}}\).}]
Since $\widetilde{\mu}\succ_{\{f,w\}}\mu'$, we have
$\widetilde{\mu}(f)=w$. Moreover, by the characterization of the pairs
eliminated by the Irving--Leather reduction,
\[
\mu_F(f)P_fwP_f\mu_W(f).
\]
Hence,
\[
\mu_F(f)P_f\widetilde{\mu}(f)P_f\mu_W(f).
\]
Therefore, Proposition~\ref{Proposition Resultado 2} implies that
\[
\{\widetilde{\mu}\}\cup C(P)\notin V_{IS}(P).
\]
Since $C(P)\in V_{IS}(P)$, the failure of internal stability must involve
$\widetilde{\mu}$. Moreover, $\widetilde{\mu}$ cannot dominate a matching
in $C(P)$, since core matchings are undominated under $P$. Hence, there
is $\mu\in C(P)$ and a coalition $S'$ such that
\[
\mu\succ_{S'}\widetilde{\mu}.
\]

\item[\textbf{Case 2: \((f,w)\in \mathcal{E}^{\,P\to SP^{\mu_{E_F},\mu_{E_W}}}\).}]
Without loss of generality, suppose that \((f,w)\) is an interrupter pair
arising in the firm-proposing EADA mechanism.\footnote{The worker-proposing case is analogous.}
Then
\begin{equation}
\widetilde{\mu}(f)=w \; P_f \; \mu_F(f),
\label{eq:cond1}
\end{equation}
and
\begin{equation}
\mu_F(w)\; P_w\; \widetilde{\mu}(w)=f.
\label{eq:cond2}
\end{equation}

In this case, we claim that the dominating core matching is $\mu_F$ and the coalition $S'=\{\mu_F(w),\,w\}$, that is:
\[
\mu_F \succ_{\{\mu_F(w),\,w\}} \widetilde{\mu}.
\]

To prove this, consider the set \(F(\widetilde{\mu},\mu_F)\), and define
\[
\widetilde{\mu}\bigl(F(\widetilde{\mu},\mu_F)\bigr)
=
\{\,w\in W :
\exists x\in F(\widetilde{\mu},\mu_F)
\text{ such that }
\widetilde{\mu}(x)=w
\},
\]
and
\[
\mu_F\bigl(F(\widetilde{\mu},\mu_F)\bigr)
=
\{\,w\in W :
\exists x\in F(\widetilde{\mu},\mu_F)
\text{ such that }
\mu_F(x)=w
\}.
\]

From \eqref{eq:cond1}, we have
\(F(\widetilde{\mu},\mu_F)\neq\varnothing\).
Moreover, \eqref{eq:cond2} implies that
\[
w\in \widetilde{\mu}\bigl(F(\widetilde{\mu},\mu_F)\bigr)
\qquad\text{and}\qquad
w\notin \mu_F\bigl(F(\widetilde{\mu},\mu_F)\bigr).
\]

Let \(f'=\mu_F(w)\).
Since \(w\notin \mu_F(F(\widetilde{\mu},\mu_F))\), there is no
\(x\in F(\widetilde{\mu},\mu_F)\) such that \(\mu_F(x)=w\).
Therefore,
\(f'\notin F(\widetilde{\mu},\mu_F)\).
Because \(f'\) is matched to different workers under
\(\widetilde{\mu}\) and \(\mu_F\), it follows that
\(f'\in F(\mu_F,\widetilde{\mu})\), and hence
\[
\mu_F(f')\; P_{f'}\; \widetilde{\mu}(f').
\]

Combining this with \eqref{eq:cond2}, we obtain
\[
\mu_F \succ_{\{\mu_F(w),\,w\}} \widetilde{\mu},
\]
which proves the claim.
\end{description}
Since in both cases we have established that there is some \(\mu \in C(P)\) and some coalition $S'$ such that \(\mu \succ_{S'} \widetilde{\mu}\), the Lemma follows.
\end{proof}

\section{Appendix: The Irving--Leather reduction procedure and the set 
\texorpdfstring{$\mathcal E^{IL}$}{E-IL}}
\label{appendix:IL}

This appendix recalls the recursive reduction procedure introduced by
\cite{irving1986complexity} and presented in
\cite{roth1992two}. Our construction does not require the full reduction
procedure itself, but only the collection of pairs removed during its
execution. For completeness, we briefly describe the reduction and formally
define the corresponding set of eliminated pairs, denoted by
$\mathcal E^{IL}$.

The procedure starts from the original preference profile $P$ together with
the firm-optimal and worker-optimal core matchings,
$\mu_F$ and $\mu_W$.
More generally, suppose that two core matchings
$\mu^+$ and $\mu^-$ satisfy
$\mu^+\succeq_F\mu^-$.
The reduction associated with the interval
$[\mu^-,\mu^+]$ of the core lattice consists of the following three steps.

\begin{enumerate}

\item[\textbf{Step 1.}] (\textbf{Upper reduction})

For every firm $f$, remove from its preference list every worker strictly
preferred to $\mu^+(f)$.
Symmetrically, for every worker $w$, remove every firm that is strictly less
preferred than $\mu^+(w)$.

Consequently, no remaining pair can appear above $\mu^+$ in the lattice.

\item[\textbf{Step 2.}] (\textbf{Lower reduction})

For every firm $f$, remove every worker strictly less preferred than
$\mu^-(f)$.
Symmetrically, for every worker $w$, remove every firm strictly preferred to
$\mu^-(w)$.

After the first two steps, every acceptable pair belongs to the lattice
interval determined by $\mu^+$ and $\mu^-$.

\item[\textbf{Step 3.}] (\textbf{Mutual acceptability restoration})

The previous reductions may destroy mutual acceptability.
Whenever one agent still finds the other acceptable but the reverse is no
longer true, the pair is removed from the remaining preference list.
This elimination process is repeated until mutual acceptability is restored.

\end{enumerate}

The resulting reduced preference profile admits a collection of preference
cycles in the sense of \cite{irving1986complexity}.
Eliminating one such cycle produces a new core matching
$\mu^\sigma$ satisfying
\[
\mu^+\succ_F\mu^\sigma\succeq_F\mu^-.
\]

The same reduction procedure is then applied recursively to each interval
$[\mu^-,\mu^\sigma]$.
Repeating this construction recursively traverses the lattice of core
matchings until every stable matching has been generated.

Throughout the recursive execution of the procedure, Step~3 removes a
collection of pairs that fail to remain mutually acceptable after the
lattice-based restrictions.
These are precisely the pairs that we retain in our analysis.

\begin{definition}
The set
\[
\mathcal E^{IL}
\]
consists of every pair removed during Step~3 of the recursive
Irving--Leather reduction procedure.
Equivalently,
\[
(f,w)\in\mathcal E^{IL}
\]
if, at some recursive call, the lattice reductions performed in
Steps~1--2 make $(f,w)$ cease to be mutually acceptable, and the pair is
therefore eliminated during the subsequent mutual acceptability restoration.
\end{definition}

The first reduced preference profile considered in Section~\ref{subseccion subperfiles}
is obtained by removing exactly the pairs in
$\mathcal E^{IL}$.

\section{Appendix: Adapting the EADA mechanism to the one-to-one matching model}
\label{apendice descripcion EADAM}

The Efficiency-Adjusted Deferred Acceptance mechanism (EADA), introduced
by \cite{kesten2010school} in the school choice setting, is based on the
Deferred Acceptance mechanism (DA) of \cite{gale1962college}. The
procedure improves upon the DA outcome by successively identifying and
removing certain pairs, called interrupter pairs, and rerunning DA on the
resulting reduced preference profiles.\footnote{See
\cite{kesten2010school} for the original formulation of the mechanism in
the school choice setting.}

In this appendix, we adapt the EADA mechanism to the standard one-to-one
matching model. The adaptation is used only to identify the sequence of
matchings and interrupter pairs that enter the construction of the
EADA-induced subprofile introduced in
Subsection~\ref{subseccion subperfiles}.

The basic idea is the following. Starting from a DA outcome, the mechanism
identifies interrupter pairs generated during the execution of DA. The
agents on the proposing side consent to the removal of the corresponding
partners from their preference lists. DA is then applied again to the
resulting reduced profile, and the procedure is repeated until no further
interrupter pairs arise.

Unlike the original formulation in \cite{kesten2010school}, in which
students are the proposing side, in our one-to-one setting the procedure
is applied in both directions: once with firms proposing and once with
workers proposing. We denote the resulting matchings by $\mu_{E_F}$ and
$\mu_{E_W}$, respectively. Throughout the paper, we assume full consent,
so every interrupter pair identified by the procedure is removed.

We now introduce the formal notion of interrupter pairs underlying the mechanism.

\begin{definition}\label{par interruptor}
A firm $f$ is an \textbf{interrupter for worker} \(\boldsymbol{w}\) if:
\begin{enumerate}
    \item there is a Step $t$ and at least one other firm $f'$ such that $w$ accepts $f$ and rejects $f'$, and
    \item there is a later Step $t' > t$ and another firm $\widetilde f$ such that $w$ accepts $\widetilde f$ and rejects $f$.
\end{enumerate}
\end{definition}

When workers are the proposing side, the notion is defined symmetrically:
a worker may be an interrupter for a firm, and the corresponding pair is
again called an interrupter pair.

For a given execution of DA, let $\mathcal{IP}^t$ denote the set of
interrupter pairs generated at Step $t$.

Once the proposing side is fixed, the EADA mechanism proceeds in rounds.
At each round, DA is applied to the current preference profile. If no
interrupter pair arises during that execution, the procedure terminates.
Otherwise, we consider the last step at which interrupter pairs arise,
remove all such pairs from the preference lists of the proposing agents,
and run DA again on the resulting reduced profile. Under full consent,
every interrupter pair identified at that step is removed.

The formal description of the mechanism, when firms are the proposing side, is presented in Table~\ref{EADA-mechanism}.

\begin{table}[ht]
    \centering
    \begin{tabular}{@{}l p{13cm}@{}}\hline
        \textbf{Input:}  & A one-to-one matching market $P$. \\
        \textbf{Output:} & A matching $\mu_{E_F}$. \\[4pt]\hline\\[-6pt]
        
        \textbf{Round \boldmath$0$} 
        & \texttt{RUN} the DA mechanism to compute $\mu_F$. \texttt{DEFINE} $P^0 = P$. \\[8pt]
        
        \textbf{Round \boldmath$k \geq 1$}

        & \texttt{IF} $\mathcal{IP}^t = \emptyset$ for each step $t$ in the DA mechanism executed in round $k-1$ under $P^{k-1}$: \texttt{THEN, STOP}. \\[4pt]
        
        & \texttt{ELSE}:\\
         & \quad \texttt{FIND} the step $t$ in the DA mechanism executed in round $k-1$ under $P^{k-1}$ such that 
        $\mathcal{IP}^t \neq \emptyset$ and $\mathcal{IP}^{t'} = \emptyset$ for each $t' > t$. \\[4pt]
        & \quad  \texttt{CONSTRUCT} profile $P^k$ as follows: \\[2pt]
        & \quad \quad \texttt{IF} $(f,w)\in \mathcal{IP}^t$: $P_f^k \sqsubseteq P_f^{k-1}$ such that $(f,w)\in \mathcal{E}^{P_f^{k-1} \to P_f^{k}}$. \\
        & \quad \quad \quad   and $P_w^k= P_w^{k-1}.$\\
        &\quad \quad \texttt{ELSE}:\\
        &\quad \quad \quad  $P_f^k= P_f^{k-1}$ and $P_w^k= P_w^{k-1}.$; \\

        & \texttt{THEN, RUN} the DA mechanism for the profile $P^k$. \\[2pt]\hline
    \end{tabular}
    \caption{The modified $EADA$ mechanism}
    \label{EADA-mechanism}
\end{table}

The worker-proposing EADA mechanism is defined symmetrically. We denote
its final outcome by $\mu_{E_W}$.

Since at least one pair is eliminated whenever the procedure proceeds to
a new round and the market is finite, the mechanism terminates after
finitely many rounds. Moreover, if $P^k$ denotes the profile generated
in Round $k$, then
\[
P^k\sqsubseteq P^{k-1}\sqsubseteq\cdots\sqsubseteq P^1
\sqsubseteq P^0=P.
\]
For each execution of the mechanism, we record all interrupter pairs
eliminated throughout its rounds. As in
Subsection~\ref{subseccion subperfiles}, $\mathcal E^{EADA}$ denotes the
union of the pairs eliminated in the two executions of the mechanism,
one with firms proposing and the other with workers proposing.

\begin{example}
\textit{We execute the modified EADA mechanism (Algorithm \ref{EADA-mechanism}) for the market described in Example \ref{ex:wako}, with firms proposing.}

\begin{itemize}

\item \textit{\textbf{Round 0}:} \texttt{RUN} the DA mechanism under $P^0=P$ to compute $\mu_F$.

\begin{table}[ht]
    \centering
    \begin{tabular}{|c|c c c c|c| }
        \hline
        Step & $w_1$ & $w_2$ & $w_3$ & $w_4$ & Rejected \\
        \hline
        1 &  & $f_1$ & $f_2$,\cancel{$f_3$}& $f_4$ & $f_3$ \\
        2 &  & \cancel{$f_1$}, $f_3$ & $f_2$ & $f_4$ & $f_1$ \\
        3 &  & $f_3$ & $f_2$ & \cancel{$f_4$}, $f_1$ & $f_4$ \\
        4 &  & \cancel{$f_3$}, $f_4$ & $f_2$ & $f_1$ & $f_3$ \\
        5 & $f_3$ & $f_4$ & $f_2$ & $f_1$ & - \\
        \hline
    \end{tabular}
\end{table}

\item \textit{\textbf{Round 1}:} \texttt{FIND} the last step $t$ in the DA execution under $P^0$ such that $\mathcal{IP}^t \neq \emptyset$.  
In this case, $t=4$ and $\mathcal{IP}^4=\{(f_3,w_2)\}$.

\texttt{CONSTRUCT} the profile $P^1$ by removing $w_2$ from the preference list of $f_3$, while keeping all other preferences unchanged.

\begin{center}
\begin{tabular}{l l @{\hspace{1cm}} l l}
$P^1_{f_{1}}:$ & $w_{2}$, $w_{4}$, $w_{1}$, $w_3$ &
$P^1_{w_{1}}:$ & $f_{1}$, $f_{2}$, $f_{3}$, $f_{4}$ \\
$P^1_{f_{2}}:$ & $w_{3}$, $w_{1}$, $w_{2}$, $w_4$ &
$P^1_{w_{2}}:$ & $f_{2}$, $f_{4}$, $f_{3}$, $f_{1}$ \\
$P^1_{f_{3}}:$ & $w_{3}$, $w_{1}$, $w_{4}$ &
$P^1_{w_{3}}:$ & $f_{4}$, $f_{2}$, $f_{3}$, $f_{1}$ \\ 
$P^1_{f_{4}}:$ & $w_{4}$, $w_{2}$, $w_{3}$, $w_1$ &
$P^1_{w_{4}}:$ & $f_{3}$, $f_{1}$, $f_{4}$, $f_2$ \\ 
\end{tabular}
\end{center}

\texttt{THEN, RUN} the DA mechanism under $P^1$.

\begin{table}[ht]
    \centering
    \begin{tabular}{|c|c c c c|c| }
        \hline
        Step & $w_1$ & $w_2$ & $w_3$ & $w_4$ & Rejected \\
        \hline
        1 &  & $f_1$ & $f_2$,\cancel{$f_3$}& $f_4$ & $f_3$ \\
        2 & $f_3$ & $f_1$ & $f_2$ & $f_4$ & - \\
        \hline
    \end{tabular}
\end{table}

\item \textit{\textbf{Round 2}:} \texttt{FIND} the last step $t$ in the DA execution under $P^1$ such that $\mathcal{IP}^t \neq \emptyset$.  
Since no such step exists, we have $\mathcal{IP}^t=\emptyset$.

\texttt{THEN, STOP}. The algorithm outputs the matching
\[
\mu_{E_F} =
\begin{pmatrix}
f_1 & f_2 & f_3 & f_4 \\
w_2 & w_3 & w_1 & w_4  
\end{pmatrix}.
\]

\end{itemize}

\textit{When workers propose, the mechanism coincides with DA, and therefore $\mu_{E_W}=\mu_W$.
Consequently, across the two executions of the EADA mechanism, the only
interrupter pair eliminated in this example is $(f_3,w_2)$. Thus,
\[
\mathcal E^{EADA}=\{(f_3,w_2)\}.
\]}
\end{example}

We conclude by recording the properties of the EADA mechanism that will
be used in the paper. These properties follow from the corresponding
results in \cite{kesten2010school}, applied to the proposing side of the
one-to-one market.

First, the sequence of matchings generated by the mechanism is monotone
from the perspective of the proposing side. Suppose that firms propose
and let
\[
\mu_0=\mu_F,\mu_1,\ldots,\mu_K=\mu_{E_F}
\]
be the sequence of DA outcomes generated throughout the EADA procedure.
Then,
\[
\mu_{E_F}
=
\mu_K
\succeq_F
\mu_{K-1}
\succeq_F
\cdots
\succeq_F
\mu_1
\succeq_F
\mu_0
=
\mu_F.
\]
Thus, no firm becomes worse off as the procedure moves from one round to
the next.

Symmetrically, when workers propose, if
\[
\mu_0=\mu_W,\mu_1,\ldots,\mu_L=\mu_{E_W}
\]
denotes the corresponding sequence of DA outcomes, then
\[
\mu_{E_W}
=
\mu_L
\succeq_W
\mu_{L-1}
\succeq_W
\cdots
\succeq_W
\mu_1
\succeq_W
\mu_0
=
\mu_W.
\]

In particular, every matching generated by the firm-proposing EADA
procedure lies between $\mu_F$ and $\mu_{E_F}$ according to the firms'
preferences, whereas every matching generated by the worker-proposing
procedure lies between $\mu_W$ and $\mu_{E_W}$ according to the workers'
preferences. Hence, for every
$\mu\in\mathcal M^{E_F}$,
\[
\mu_{E_F}\succeq_F\mu\succeq_F\mu_F,
\]
and, for every
$\mu\in\mathcal M^{E_W}$,
\[
\mu_{E_W}\succeq_W\mu\succeq_W\mu_W.
\]

Second, under full consent, the final outcome of each EADA procedure is
Pareto efficient for the proposing side. Thus, there is no matching
$\mu'$ such that
\[
\mu'\succeq_F\mu_{E_F}
\]
with at least one firm strictly preferring its assignment under $\mu'$ to
its assignment under $\mu_{E_F}$. Symmetrically, there is no matching
$\mu'$ such that
\[
\mu'\succeq_W\mu_{E_W}
\]
with at least one worker strictly preferring its assignment under $\mu'$
to its assignment under $\mu_{E_W}$.

\end{document}

%% file: preamble.tex
\usepackage[utf8]{inputenc}
\usepackage[T1]{fontenc}
\usepackage{mathpazo}

\usepackage{amsmath,amssymb,amsfonts,amsthm}
\usepackage{mathrsfs}
\usepackage{nccmath}
\usepackage{cancel}

\usepackage{booktabs}
\usepackage{enumerate}
\usepackage{geometry}
\usepackage{graphicx}
\usepackage{soul}
\usepackage{emptypage}
\usepackage{natbib}
\usepackage{fancyhdr}

\usepackage{pgf,tikz}
\usetikzlibrary{intersections}
\usetikzlibrary{decorations.pathreplacing,angles,quotes}
\usetikzlibrary{arrows.meta}
\usetikzlibrary{arrows,decorations.markings}

\tikzset{
  myptr/.style={
    decoration={
      markings,
      mark=at position 1 with {\arrow[scale=2,>=stealth]{>}}
    },
    postaction={decorate}
  }
}

\usepackage{hyperref}
\hypersetup{
    colorlinks=true,
    citecolor=blue,
    filecolor=black,
    linkcolor=blue,
    urlcolor=blue
}

\newcommand*\samethanks[1][\value{footnote}]{\footnotemark[#1]}

\DeclareFontFamily{T1}{calligra}{}
\DeclareFontShape{T1}{calligra}{m}{n}{<-> s*[1.44] callig15}{}
\DeclareMathAlphabet{\mathcalligra}{T1}{calligra}{m}{n}

\newtheorem{theorem}{Theorem}

\newtheorem{corollary}{Corollary}

\newtheorem{definition}{Definition}

\newtheorem{example}{Example}

\newenvironment{examplecont}[1]
{%
  \addtocounter{example}{-1}%
  \begin{example}%
}
{%
  \end{example}%
}

\newtheorem{lemma}{Lemma}

\newtheorem{proposition}{Proposition}
\newtheorem{remark}{Remark}

\renewenvironment{proof}[1][Proof]{\noindent \emph{#1.} }{\hfill$\square$\vspace{5pt}}
